\documentclass[%
 reprint,
superscriptaddress,
 amsmath,amssymb,
aps,
pra,
showpacs
]{revtex4-2}

\usepackage{bbm}
\usepackage{amsthm,amsmath,amssymb}
\usepackage{graphicx,comment}
\usepackage{braket}
\usepackage{hyperref}
\hypersetup{hidelinks}
\allowdisplaybreaks[2]

\theoremstyle{plain}
\newtheorem{theorem}{Theorem}[section]
\newtheorem{proposition}[theorem]{Proposition}
\newtheorem{lemma}[theorem]{Lemma}
\newtheorem{corollary}[theorem]{Corollary}
\theoremstyle{definition}
\newtheorem{definition}{Definition}[section]

\DeclareMathOperator{\conv}{conv}
\DeclareMathOperator{\rank}{rank}

\newcommand{\cP}{\mathcal{P}}

\newcommand{\C}{\mathbb{C}}
\newcommand{\R}{\mathbb{R}}

\newcommand{\Cc}{\mathcal{C}}
\newcommand{\Gg}{\mathcal{G}}
\newcommand{\Ss}{\mathcal{S}}
\newcommand{\Pn}{\widetilde{\mathcal{P}}_n}
\newcommand{\Pnb}{\overline{\mathcal{P}}_n}
\newcommand{\Phin}{\Phi_n}
\newcommand{\Gb}{\overline{\mathcal{G}}}
\newcommand{\Sb}{\overline{\mathcal{S}}}
\newcommand{\id}{\mathbbm{1}}

\newcommand{\Tr}{\operatorname{Tr}}

\renewcommand{\braket}[2]{\langle #1 \mid #2 \rangle}

\begin{document}

\title{Geometry of Knill--Laflamme Coefficients for Pauli Error Detection}

\author{Baisong Sun}
\affiliation{Department of Physics, The University of Texas at Dallas, Richardson, Texas 75080, USA}
\author{Ningping Cao}
\affiliation{Digital Technologies, National Research Council Canada, Canada}
\affiliation{Institute for Quantum Computing, University of Waterloo, ON N2L 3G1, Canada}
\author{Yiu Tung Poon}
\affiliation{Department of Mathematics, Iowa State University, Ames, Iowa 50011, USA}
\author{Bei Zeng}
\affiliation{Department of Physics, The University of Texas at Dallas, Richardson, Texas 75080, USA}

\begin{abstract}
Knill--Laflamme coefficients characterize exact quantum error detection through scalar compressions of error operators to the code space. For prescribed Pauli observables and a fixed code dimension, attainable coefficient vectors form a joint higher-rank numerical range, whose Euclidean norm image is the signature spectrum. We study how operator representation, code dimension, and detection constraints govern this geometry. Our main theorem identifies representation multiplicity as sufficient to lift active expectation data to exact scalar compressions: a lift exists whenever each occupied block has multiplicity at least the code dimension times the rank of the corresponding active state. When multiplicity suffices to realize all active states, the coefficient range is the full active expectation body and is therefore convex and connected, with a closed interval as its signature spectrum. This framework unifies commuting Pauli families, where sufficient common-eigenspace degeneracy yields polytopes, and subsystem stabilizer codes, where protected logical subsystems supply multiplicity spaces for noncommuting gauge observables. Explicit examples show how increasing the code dimension can make a range shrink, collapse, or become empty, while increasing multiplicity can turn a Bloch sphere into a filled ball. In the subsystem setting, parent-Hamiltonian constructions realize exact detecting codes as degenerate ground spaces, selecting distinguished vectors and continuous paths within these ranges. A final three-qubit example shows additional noncommuting detection constraints reduce a tetrahedral range to its barycenter and four vertices, producing a disconnected signature spectrum. The results provide a structural framework for Knill--Laflamme coefficient geometry and motivate sharper criteria for connected coefficient ranges and interval signature spectra under structured error models.
\end{abstract}

\maketitle


\section{Introduction}
\label{sec:intro}
\begin{figure*}[t!]
  \centering
  \includegraphics[width=\textwidth]{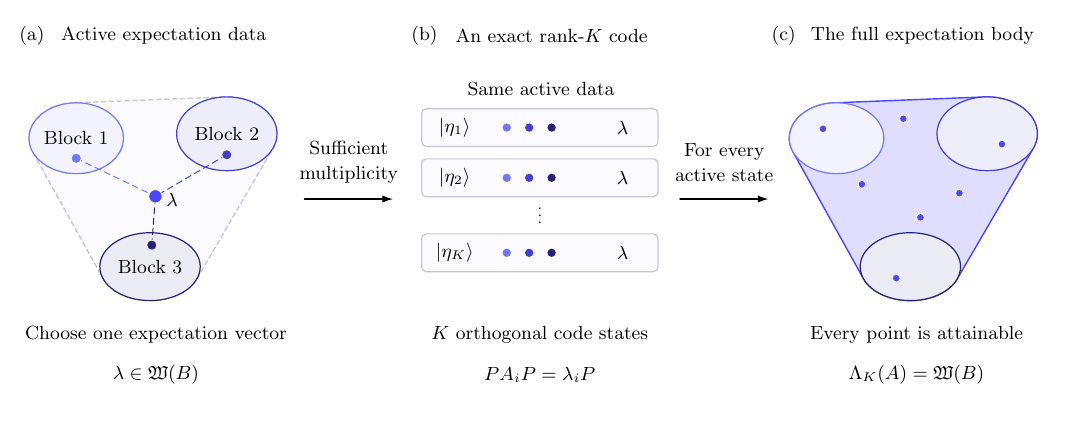}
  \caption{Representation multiplicity as a mechanism for exact scalar
  compression. (a) Expectations from the active blocks combine to give a
  selected vector $\lambda$ in the active state-space expectation body
  $\mathfrak W(B)$. (b) Under the state-dependent sufficient condition of
  Theorem~\ref{thm:block-multiplicity}, multiplicity permits a realization
  by $K$ orthogonal code states with the same active data; every observable
  compresses to a scalar on their span. (c) Under the theorem's uniform sufficient
  condition, this construction realizes every point of $\mathfrak W(B)$.
  The compression range is then convex and connected, and its signature
  spectrum is a closed interval. The shapes are schematic; the colored
  marks in (b) represent the same active block data in each code state.}
  \label{fig:intro-block-multiplicity}
\end{figure*}

Quantum error correction protects information against physical noise by encoding logical states into a larger Hilbert space.  In the exact setting, a code must perfectly detect a prescribed error family.  For Pauli models, this may be the full set of errors up to a chosen weight or a restricted family adapted to biased or structured noise~\cite{IoffeMezard2007}.  Fixing the detectable family and the code dimension therefore defines a geometric feasibility problem for rank-$K$ code subspaces.

Important solutions arise from stabilizer codes~\cite{LaflammeMiquelPazZurek1996,Gottesman1997,CalderbankRainsShorSloane1997}, nonadditive codes~\cite{RainsHardinShorSloane1997}, and codeword-stabilized codes~\cite{CrossSmithSmolinZeng2009,ChuangCrossSmithSmolinZeng2009,BeigiChenGrasslJiWangZeng2013}; permutation-invariant codes~\cite{Ouyang2014,OuyangFitzsimons2016}, topological codes~\cite{DennisKitaevLandahlPreskill2002}, and subsystem codes~\cite{KribsLaflammePoulin2005,Poulin2005} further broaden this landscape.  These constructions answer important existence questions and provide distinguished families, but they do not by themselves describe the geometry of all coefficient data attainable in a fixed detection problem.  The feasible projectors lie in a nonconvex manifold, and their compression profiles may form polyhedral, curved, lower-dimensional, finite, or disconnected sets.  Our aim is to identify structural mechanisms that organize this geometry.

The Knill--Laflamme conditions turn the problem into one of simultaneous matrix compression.  For a prescribed family \(\mathcal E\) of Hermitian Pauli operators, exact detection by a rank-\(K\) code projector \(P\) requires
\begin{equation}
P E P=\alpha_E P,
\qquad E\in\mathcal E.
\end{equation}
Thus no operator in the detectable family carries logical-state-dependent information inside the code.  For a Pauli noise model, the general correction condition is recovered by imposing the same detection equation on Hermitian representatives of the products \(E_a^\dagger E_b\)~\cite{KnillLaflamme1997}.  Exact Pauli detection therefore asks for a common rank-\(K\) subspace on which an entire Hermitian Pauli family compresses to scalars.  This formulation applies to ordinary subspace codes, including those obtained from subsystem codes by fixing the gauge subsystem to a pure state while retaining the full logical subsystem.

For an ordered Hermitian tuple \(A=(A_1,\ldots,A_m)\), the set of all simultaneously achievable compression coefficients is the \emph{joint higher-rank numerical range}
\begin{equation}
\Lambda_K(A)
=
\left\{
\lambda\in\R^m:
\begin{array}{l}
\exists\, P\in\cP_K \text{ such that}\\
P A_j P=\lambda_j\,P,\quad j=1,\dots,m
\end{array}
\right\}.
\label{eq:jnr-def}
\end{equation}
Its nonemptiness decides whether a \(K\)-dimensional code detecting the prescribed operators exists.  Its geometry describes which Knill--Laflamme coefficient profiles can occur, whether the attainable profiles form a connected continuum, and how their shape changes with the error family, compression rank, and available degeneracy.  This profile geometry must be distinguished from the topology of the realizing projector set: connectedness of \(\Lambda_K(A)\) does not by itself imply that all realizing codes lie in one connected family.  Joint higher-rank numerical ranges nevertheless provide a natural geometric language for studying the landscape of exact code data rather than only existence.

This connection has motivated a substantial operator-theoretic literature.  Beginning with the compression formulation of quantum error correction, this line of work developed constructions of correctable codes and established convexity and spectral descriptions for higher-rank numerical ranges, together with extensions to joint and generalized ranges and to perturbations of quantum channels~\cite{ChoiKribsZyczkowski2006,ChoiKribsZyczkowskiCompression2006,Woerdeman2008,LiSze2008,GauLiPoonSze2011,LiPoon2011,LiPoonSze2008}.  For Pauli observables, related work has studied joint higher-rank ranges and, at rank one, expectation and variance bodies governed by graph parameters~\cite{AfshinMehrjoofardZare2018,XuSchwonnekWinter2024,XuWangYeKossmannSchwonnekWinter2025}.  The single-operator theory enjoys strong convexity and spectral structure, but for tuples with more than two coordinates the geometry can be substantially more complicated.  Even for small Pauli tuples, joint rank-$K$ ranges can fail to be convex and can be disconnected.

The full compression vector is not the only useful global summary of this landscape.  Alongside the vector range, we consider its radial image
\begin{equation}
\Sigma_K(A):=\{\,\|\lambda\|_2:\lambda\in\Lambda_K(A)\,\},
\end{equation}
which we call the \emph{signature spectrum}.  For an ordered Pauli tuple \(E=(E_1,\ldots,E_m)\) and a detecting projector \(P\), the corresponding scalar is
\begin{equation}
\lambda^{*}(P)
=
\left(
\sum_{\alpha=1}^{m}
\langle E_\alpha\rangle_{\rho_P}^{\,2}
\right)^{1/2},
\label{eq:lambda-star}
\end{equation}
where $\rho_P :=P/K$ and $\langle E_\alpha\rangle_{\rho_P} := \Tr(E_\alpha \rho_P)$. 
Because a Hermitian Pauli operator satisfies \(E_\alpha^2=\id\), the same squared expectation values determine its variance on the maximally mixed code state.  The connectedness of \(\Lambda_K(A)\) and that of \(\Sigma_K(A)\) are distinct questions: a disconnected vector range may still have a connected norm image if its components occur at the same or overlapping radii.

Recently, the signature vector and its norm were introduced as invariants of Knill--Laflamme coefficient data in Ref.~\cite{DuZhangPoonZeng2026}.  Continuous families of \(((6,2,3))\) and \(((7,2,3))\) codes were shown to realize closed intervals of signature values.  The same work also gave a two-qubit Pauli tuple whose joint rank-two numerical range is disconnected while its signature spectrum is the singleton \(\{1\}\), demonstrating that vector connectedness and radial connectedness are distinct geometric properties.  The radial geometry of Pauli-detecting problems was studied further in Ref.~\cite{GuptaSunHeZeng2026}, where interval signature spectra were observed in the unrestricted Pauli-detection problems studied there.  Since the signature spectrum is the Euclidean norm image of the full compression range, this interval behavior is naturally part of the broader geometry of Pauli compression vectors.

This motivates the central questions of the present work: \emph{which representation-theoretic and rank-theoretic structures govern the geometry of Pauli compression vectors?  When can active expectation data be lifted to exact rank-$K$ scalar compressions, under what conditions is the resulting range convex or connected, and what geometric obstructions arise from insufficient degeneracy or additional detection constraints?}

Our main structural result gives a sufficient condition for lifting active expectation data to exact scalar compressions. In a suitable basis, the observables share a block structure in which the active operators, collectively denoted by $B$, are tensored with identities on multiplicity spaces. We write $\mathfrak W(B)$ for the convex hull of the expectation vectors obtained from all states on the individual active factors. Figure~\ref{fig:intro-block-multiplicity}(a) shows how expectations from the active blocks combine to determine a target vector in $\mathfrak W(B)$. When the multiplicity spaces are sufficiently large, the same active data can be realized by $K$ orthogonal code states, on whose span every prescribed observable compresses to a scalar, as shown in Fig.~\ref{fig:intro-block-multiplicity}(b). If sufficient multiplicity is available for every active state, the construction realizes the entire expectation body $\mathfrak W(B)$, as shown in Fig.~\ref{fig:intro-block-multiplicity}(c). The compression range is then convex and connected, and its signature spectrum is a closed interval.

For commuting Pauli families, this principle yields polytopes determined by the joint eigenvalues, with explicit projectors realizing every point whenever the common eigenspaces are sufficiently degenerate. Beyond this regime, increasing the compression rank can change the geometry sharply. For the three-qubit tuple $(Z_1,Z_2,Z_3)$, the range passes from a cube to an octahedron, then to the origin, and finally to the empty set. Thus the available degeneracy determines when the full polytope can be attained, while larger code dimensions can impose further geometric restrictions.

For noncommuting gauge observables, protected logical subsystems supply natural multiplicity spaces. Our subsystem examples illustrate how additional multiplicity enlarges the attainable range: a Bloch sphere becomes a filled Bloch ball as mixtures of pure-state expectation data become realizable by exact codes. Parent Hamiltonians give this structure a complementary interpretation through ground spaces whose compression vectors select distinguished points or trace continuous paths within these ranges.

Finally, additional detection requirements can obstruct these realization mechanisms. We construct a three-qubit example in which noncommuting Pauli constraints reduce a tetrahedral compression range to its barycenter and four vertices. The resulting signature spectrum is $\{0,\sqrt{3}\}$, showing that the interval behavior observed in earlier work~\cite{GuptaSunHeZeng2026} is not universal for prescribed finite Pauli families. Whether it persists for more structured error families, such as those defined by a weight cutoff, remains open.

The remainder of the paper follows this structural organization.  Section~\ref{sec:prelim} fixes the compression framework and the notation for Pauli detection, signature spectra, and subsystem codes.  Section~\ref{sec:multiplicity} establishes the block-multiplicity theorem and its active/spectator consequences.  Section~\ref{sec:commuting} develops the commuting specialization, deriving common-sector polytopes, explicit realizing projectors, and rank-sensitive examples.  Section~\ref{sec:subsystem} identifies protected logical subsystems as multiplicity spaces for noncommuting gauge algebras, and Sec.~\ref{sec:hamiltonians} gives this branch a complementary parent-Hamiltonian interpretation.  Section~\ref{sec:collapse} presents a three-qubit example whose signature spectrum is disconnected, and discusses what it implies for the interval observation.  Section~\ref{sec:discussion} synthesizes the hierarchy of geometric mechanisms and discusses the scope of the remaining open questions, while the appendices collect supporting computations and proofs.


\section{Preliminaries}
\label{sec:prelim}

This section fixes the notation used throughout the paper.  We first specify the ambient $n$-qubit operator setting, Hermitian-tuple conventions, and code-state notation.  We then recall ordinary and background-constrained joint higher-rank numerical ranges, together with the normalized-trace and semialgebraic facts used later.  Finally, we introduce Pauli notation, relate scalar compression to the Knill--Laflamme detection condition, and fix our conventions for subsystem stabilizer codes.

\subsection{Operator and projector notation}
\label{subsec:operator-projector-notation}

We work on the $n$-qubit Hilbert space
\begin{equation}
\mathcal H_n \cong (\C^2)^{\otimes n},\qquad \dim\mathcal H_n=2^n,
\end{equation}
and write $\mathcal B(\mathcal H_n)$ for the algebra of linear operators on $\mathcal H_n$.  We write $\id_N$ for the identity on an $N$-dimensional Hilbert space and write simply $\id$ when the dimension is clear.
All numerical ranges below are formed from Hermitian operator tuples $A=(A_1,\ldots,A_m)$; reordering the tuple merely permutes the coordinates of the corresponding subset of $\R^m$.

A quantum code is a $K$-dimensional subspace $\Cc\subset\mathcal H_n$, and $P$ denotes the orthogonal projector onto $\Cc$.  For any finite-dimensional Hilbert space $\mathcal H$ and $1\le K\le\dim\mathcal H$, define
\begin{equation}
\cP_K(\mathcal H):=\{P\in\mathcal B(\mathcal H):\ P=P^\dagger=P^2,\ \rank P=K\}.
\end{equation}
Thus $P\in\cP_K(\mathcal H_n)$ and $\Cc=\operatorname{Ran}(P)$.  When the ambient Hilbert space is clear from context, we abbreviate $\cP_K(\mathcal H)$ to $\cP_K$; in particular, in the $n$-qubit setting $\cP_K$ means $\cP_K(\mathcal H_n)$.  The maximally mixed state supported on the code space is
\begin{equation}
\rho_P:=\frac{P}{K}
\end{equation}
and serves as a basis-independent representative of the code.  For any state $\rho$ and operator $A$ we use the expectation notation
\begin{equation}
\langle A\rangle_\rho:=\Tr(\rho A).
\end{equation}
For a unit vector $\ket{\psi}$, we likewise write
\begin{equation}
\langle A\rangle_\psi:=\langle\psi|A|\psi\rangle.
\end{equation}
If $\Pi$ is another orthogonal projector, the notation $P\le \Pi$ means $\operatorname{Ran}(P)\subseteq \operatorname{Ran}(\Pi)$, equivalently $P=\Pi P=P\Pi$.

\subsection{Joint higher-rank numerical ranges and scalar compressions}
\label{subsec:jhrnr}

\begin{definition}[Numerical range]
For a Hermitian operator $A\in\mathcal B(\mathcal H_n)$, its numerical range is
\begin{equation}
W(A)
:=
\left\{
\langle A\rangle_\psi:
\begin{array}{l}
\psi\in\mathcal H_n,\\
\|\psi\|=1
\end{array}
\right\}
\subseteq\R.
\end{equation}
Thus $W(A)$ is the set of expectation values attainable on pure states.
\end{definition}

\begin{definition}[Joint numerical range]
For an $m$-tuple of Hermitian operators $A=(A_1,\dots,A_m)$ on $\mathcal H_n$, the joint numerical range is
\begin{equation}
\Lambda_1(A)
:=
\left\{
\bigl(\langle A_1\rangle_\psi,\dots,\langle A_m\rangle_\psi\bigr):
\begin{array}{l}
\psi\in\mathcal H_n,\\
\|\psi\|=1
\end{array}
\right\}
\subseteq\R^m.
\end{equation}
For $m=1$, this agrees with the numerical range $W(A_1)$.
\end{definition}

\begin{definition}[Joint higher-rank numerical range]
Let $A=(A_1,\dots,A_m)$ be an $m$-tuple of Hermitian operators on $\mathcal H_n$.  For $1\le K\le 2^n$, the joint rank-$K$ numerical range of $A$ is
\begin{equation}
\Lambda_K(A)
:=
\left\{
\lambda\in\R^m:
\begin{array}{l}
\exists\, P\in\cP_K \text{ such that}\\
P A_j P=\lambda_j\,P,\quad j=1,\dots,m
\end{array}
\right\}.
\label{eq:LambdaK-prelim}
\end{equation}
\end{definition}

The condition $P A_j P=\lambda_j P$ says that the compression of $A_j$ to the subspace $\operatorname{Ran}(P)$ is the scalar operator $\lambda_j\id_{\operatorname{Ran}(P)}$.  Thus a point of $\Lambda_K(A)$ records a $K$-dimensional subspace on which all operators in the tuple have simultaneous scalar compressions.  Taking normalized traces gives
\begin{equation}
\lambda_i = \Tr(\rho_P A_i),
\label{eq:lambda-expectation}
\end{equation}
so every scalar-compression vector is an expectation vector of the maximally mixed code state.  When $K=1$, writing $P=\ket{\psi}\!\bra{\psi}$ identifies $\Lambda_K(A)$ with the joint numerical range $\Lambda_1(A)$.  This compression viewpoint underlies the convexity and spectral theory of single-operator higher-rank numerical ranges, as well as structural studies of their joint counterparts~\cite{ChoiKribsZyczkowski2006,LiPoonSze2008,GauLiPoonSze2011,Woerdeman2008,AfshinMehrjoofardZare2018,LiSze2008}.

The unconstrained construction is unitarily invariant.  For any unitary $V$ on $\mathcal H_n$,
\begin{equation}
\Lambda_K(A_1,\ldots,A_m)
=
\Lambda_K(VA_1V^\dagger,\ldots,VA_mV^\dagger),
\end{equation}
with compressing projectors related by $P\mapsto VPV^\dagger$.  This elementary covariance is used repeatedly when passing to Clifford normal forms.

\subsection{Background-constrained compression ranges}
\label{subsec:constrained-ranges}

The ordinary joint higher-rank numerical range $\Lambda_K(A)$ in Eq.~\eqref{eq:LambdaK-prelim} uses the same tuple both to impose the scalar-compression equations and to record their scalar coordinates.  In an error-detection problem it is often useful to separate the tuple whose coordinates are monitored from a larger family that the code is required to detect.  The scalar-signature formulation treats these two roles together~\cite{GuptaSunHeZeng2026}; separating them explicitly will also let us describe coordinate projections of a full joint range without discarding the background constraints, namely, the scalar-compression conditions imposed by that family.

For a finite Hermitian family $\mathcal F$ on $\mathcal H_n$, define its rank-$K$ detecting-projector set by
\begin{equation}
\operatorname{Det}_K(\mathcal F)
:=
\left\{
\begin{array}{l}
P\in\cP_K(\mathcal H_n):\\
PFP=K^{-1}\Tr(PF)P\quad\forall F\in\mathcal F
\end{array}
\right\}.
\label{eq:DetK-background}
\end{equation}
Let $A=(A_1,\ldots,A_m)$ be an ordered monitored tuple, not necessarily contained in $\mathcal F$.

\begin{definition}[Background-constrained joint range]
\label{def:background-range}
The rank-$K$ joint range of $A$ under the background detection constraints $\mathcal F$ is
\begin{equation}
\begin{aligned}
&\Lambda_K(A\mid\mathcal F)\\
&\quad:=
\left\{
\lambda\in\R^m:
\begin{array}{l}
\exists\,P\in\operatorname{Det}_K(\mathcal F),\\
PA_iP=\lambda_iP,\ i=1,\ldots,m
\end{array}
\right\}.
\end{aligned}
\label{eq:background-range}
\end{equation}
Its \emph{signature spectrum}, namely its Euclidean norm image, is
\begin{equation}
\Sigma_K(A\mid\mathcal F)
:=
\{\,\|\lambda\|_2:\lambda\in\Lambda_K(A\mid\mathcal F)\,\}.
\label{eq:background-spectrum}
\end{equation}
\end{definition}

For an unconstrained ordered tuple $A$, we use the analogous abbreviation
\begin{equation}
\Sigma_K(A)
:=
\{\,\|\lambda\|_2:\lambda\in\Lambda_K(A)\,\}.
\label{eq:ordinary-spectrum}
\end{equation}

If $\mathcal F=\{A_1,\ldots,A_m\}$, then
\begin{equation}
\Lambda_K(A\mid\mathcal F)=\Lambda_K(A).
\end{equation}
More generally, if an ordered full tuple $A^{\mathrm{full}}$ lists every member of $\mathcal F$ and contains $A$ as a subtuple, then
\begin{equation}
\Lambda_K(A\mid\mathcal F)
=
\pi_A\bigl(\Lambda_K(A^{\mathrm{full}})\bigr),
\label{eq:background-projection}
\end{equation}
where $\pi_A$ is the coordinate projection retaining the monitored entries.  Thus the constrained range is the visible coordinate projection of the full range after all detection equations have been imposed.  The vector range is the finer object used to analyze the signature spectrum in Eq.~\eqref{eq:background-spectrum}; the two must be distinguished because they can have different connectedness behavior.

\begin{proposition}[Compact semialgebraic structure]
\label{prop:semialgebraic-components}
For every finite Hermitian family $\mathcal F$ and monitored Hermitian tuple $A$, the set of rank-$K$ projectors that detect $\mathcal F$ and compress $A$ scalarly is compact and semialgebraic.  Consequently, $\Lambda_K(A\mid\mathcal F)$ is a compact semialgebraic subset of $\R^m$ and has finitely many connected components.  If the feasible projector set has connected components $\mathcal D_1,\ldots,\mathcal D_N$, then for each $a$ there are $0\le r_a^-\le r_a^+$ such that
\begin{equation}
\left\{
\left\|
\left(\frac1K\Tr(PA_i)\right)_{i=1}^m
\right\|_2:
P\in\mathcal D_a
\right\}
=
[r_a^-,r_a^+].
\label{eq:component-radial-interval}
\end{equation}
In particular,
\begin{equation}
\Sigma_K(A\mid\mathcal F)
=
\bigcup_{a=1}^N[r_a^-,r_a^+].
\label{eq:finite-union-radial-intervals}
\end{equation}
\end{proposition}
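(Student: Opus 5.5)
The plan is to work with the feasible projector set directly as a subset of the real vector space of Hermitian operators on $\mathcal H_n$, which we identify with $\R^{4^n}$ through real and imaginary parts of the matrix entries. Define
\begin{equation}
\mathcal D:=\bigl\{P:\ P=P^\dagger,\ P^2=P,\ \Tr P=K,\ K\,PFP=\Tr(PF)\,P\ \forall F\in\mathcal F,\ K\,PA_iP=\Tr(PA_i)\,P\ \forall i\bigr\}.
\end{equation}
For an orthogonal projector, $\rank P=\Tr P$, so the rank condition is the polynomial equation $\Tr P=K$. The scalar-compression conditions $PA_iP=\lambda_iP$ force $\lambda_i=K^{-1}\Tr(PA_i)$, as in Eq.~\eqref{eq:lambda-expectation}, so they too can be written without auxiliary variables. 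Each defining condition is a finite system of real polynomial equations in the entries of $P$. Hence $\mathcal D$ is real algebraic, and in particular semialgebraic.

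For compactness, $\mathcal D$ is closed because it is the zero set of polynomials. It is also bounded, since every orthogonal projector has operator norm at most $1$ and so all its entries are bounded. The coefficient map $\Phi(P)=\bigl(K^{-1}\Tr(PA_i)\bigr)_{i=1}^m$ is linear, and by construction $\Lambda_K(A\mid\mathcal F)=\Phi(\mathcal D)$. This image is compact because $\Phi$ is continuous. It is semialgebraic by the Tarski--Seidenberg theorem, since a linear image is a projection of the graph of a polynomial map. The standard finiteness theorem for semialgebraic sets then gives finitely many connected components for $\Lambda_K(A\mid\mathcal F)$ and, importantly for the radial statement, also for $\mathcal D$ itself.

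For the radial intervals, we let $\mathcal D_1,\dots,\mathcal D_N$ be the connected components of $\mathcal D$. These are finitely many, and each is closed in a compact set, hence compact. The map $P\mapsto\|\Phi(P)\|_2$ is continuous and nonnegative, so it sends each compact connected $\mathcal D_a$ to a compact connected subset of $[0,\infty)$, that is, to a closed interval $[r_a^-,r_a^+]$. If $\mathcal D$ is empty there are no components, and both sides of Eq.~\eqref{eq:finite-union-radial-intervals} are empty. Taking the union over $a$ and using $\Sigma_K(A\mid\mathcal F)=\|\cdot\|_2\circ\Phi(\mathcal D)$ yields Eq.~\eqref{eq:finite-union-radial-intervals}.

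The only step needing care is making sure finiteness of components is applied to $\mathcal D$ rather than only to the image. Connected components of a semialgebraic set are themselves semialgebraic and finite in number, which is why we formulate $\mathcal D$ with polynomial constraints in $P$ alone, with no existential quantifier over $\lambda$. Everything else is routine continuity and compactness.
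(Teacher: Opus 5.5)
Your proposal is correct and follows essentially the same route as the paper's proof. Both write the feasible projector set as the zero set of polynomial equations, eliminating the scalars through $K^{-1}\Tr(PA_i)$. Both then get compactness and finitely many components from semialgebraic structure, obtain the image via Tarski--Seidenberg, and use continuity of the norm on each compact connected component to produce the intervals. Your explicit use of $\Tr P=K$ for the rank constraint and your treatment of the empty case are harmless refinements of the same argument.
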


\begin{proof}
The rank-$K$ projector manifold is compact and real algebraic.  For each $F\in\mathcal F$, scalar compression can be written without auxiliary scalar variables as
\begin{equation}
PFP-\frac1K\Tr(PF)P=0,
\label{eq:polynomial-compression}
\end{equation}
and similarly for the monitored operators.  These are polynomial equations in the real and imaginary parts of the entries of $P$.  The feasible set is therefore closed and semialgebraic, hence compact and with finitely many connected components.  The coordinate map
\begin{equation}
P\longmapsto
\left(\frac1K\Tr(PA_i)\right)_{i=1}^m
\end{equation}
is polynomial, so its image is semialgebraic by the Tarski--Seidenberg theorem and compact by continuity.  On each connected component $\mathcal D_a$, composition with the Euclidean norm is continuous; its image is therefore a compact connected subset of $\R$, hence the interval in Eq.~\eqref{eq:component-radial-interval}.  Taking the union over these finitely many connected components yields Eq.~\eqref{eq:finite-union-radial-intervals}.
\end{proof}

Thus disconnectedness of the vector range does not by itself imply a disconnected signature spectrum; a counterexample to interval behavior requires a gap between the component intervals in Eq.~\eqref{eq:finite-union-radial-intervals}.

\subsection{Pauli notation and code detection}
\label{subsec:pauli-code-notation}

We first relate the scalar-compression language above to the usual Knill--Laflamme condition.  For a finite error model \(\mathcal N=\{N_\alpha\}\subset\mathcal B(\mathcal H_n)\), a rank-\(K\) projector \(P\) corrects \(\mathcal N\) precisely when there is a Hermitian matrix \(c=(c_{\alpha\beta})\) such that
\begin{equation}
P N_\alpha^\dagger N_\beta P=c_{\alpha\beta}P,
\qquad \forall\,\alpha,\beta .
\label{eq:KL-general-prelim}
\end{equation}
This is the Knill--Laflamme condition~\cite{KnillLaflamme1997}.
To express this condition in numerical-range language, choose a Hermitian operator tuple \(A=(A_1,\ldots,A_m)\) spanning the real linear space generated by the operators
\begin{equation}
\frac{N_\alpha^\dagger N_\beta+N_\beta^\dagger N_\alpha}{2},
\qquad
\frac{N_\alpha^\dagger N_\beta-N_\beta^\dagger N_\alpha}{2i}.
\end{equation}
Then Eq.~\eqref{eq:KL-general-prelim} holds if and only if
\begin{equation}
P A_j P=\lambda_jP,\qquad j=1,\ldots,m,
\label{eq:KL-Hermitian-prelim}
\end{equation}
for some real scalars \(\lambda_1,\ldots,\lambda_m\), equivalently if and only if \((\lambda_1,\ldots,\lambda_m)\in\Lambda_K(A)\).  Thus the nonemptiness of an appropriate joint higher-rank numerical range is the Hermitian-operator form of the Knill--Laflamme correction condition~\cite{ChoiKribsZyczkowskiCompression2006,LiPoon2011}.

We now specialize this bridge to Pauli error models.  To keep operator phases separate from binary Pauli labels, let
\begin{equation}
\Pn
:=
\bigl\{i^a P_1\otimes\cdots\otimes P_n:
a\in\mathbb Z_4,\ P_j\in\{I,X,Y,Z\}\bigr\}
\end{equation}
be the full \(n\)-qubit Pauli group, including scalar phases, and let
\begin{equation}
\Phin:=\{i^a\id_{2^n}:a\in\mathbb Z_4\},
\qquad
\Pnb:=\Pn/\Phin\cong\mathbb F_2^{2n}
\label{eq:pauli-phase-quotient}
\end{equation}
be its phase-free label space.  We write \(\pi:\Pn\to\Pnb\) for the quotient map.  Independence and binary ranks always refer to labels in \(\Pnb\), and commutation is recorded by its standard symplectic form \(\omega\): Pauli operators \(P,Q\) commute exactly when \(\omega(\pi(P),\pi(Q))=0\)~\cite{CalderbankRainsShorSloane1997,Gottesman1997}.  Write
\begin{equation}
\Pn^{\mathrm H}:=\{P\in\Pn:P=P^\dagger\}
\end{equation}
for the Hermitian Pauli operators.  A \emph{Pauli observable} means an element of \(\Pn^{\mathrm H}\), equivalently a tensor product of \(I,X,Y,Z\) up to an overall sign. When the register size is fixed, we use indexed embedding notation in structural statements and Pauli-word notation in explicit operator lists.  When the entries of a Hermitian tuple are specific Pauli observables, we often write \(E=(E_1,\ldots,E_m)\) to distinguish this Pauli tuple from a general Hermitian tuple \(A\).

For a Hermitian Pauli error family \(\mathcal E\subset\Pn^{\mathrm H}\), the detection condition takes the simpler scalar-compression form
\begin{equation}
P E P=\alpha_E P,\qquad \forall\,E\in\mathcal E .
\label{eq:pauli-detection-prelim}
\end{equation}
Whenever this holds, \(P\) is an \(\mathcal E\)-detecting projector and \(\operatorname{Ran}(P)\) is an \(\mathcal E\)-detecting code.  If an ordered Pauli tuple \(E=(E_1,\ldots,E_m)\) is monitored, the feasible scalar tuples are precisely the points of \(\Lambda_K(E)\).  In the standard distance-based setting, one takes
\begin{equation}
\mathcal E_d:=\{E\in\Pn^{\mathrm H}:\ \operatorname{wt}(E)<d\},
\end{equation}
where the Pauli weight $\operatorname{wt}(E)$ is the number of tensor factors of $E$ that are different from the single-qubit identity $I$. An \(((n,K,d))\) quantum code on \(n\) physical qubits is a \(K\)-dimensional subspace \(\Cc\subset\mathcal H_n\) whose projector \(P\) detects \(\mathcal E_d\). More generally, the detectable Pauli family need not arise from a single uniform weight cutoff; it may instead be prescribed directly. Biased or asymmetric noise provides one important example~\cite{IoffeMezard2007,GuptaSunHeZeng2026}.

\subsection{Subsystem stabilizer codes}
\label{subsec:subsystem-notation}

A Pauli subsystem stabilizer code introduces an additional structure by splitting the code subspace into a protected logical factor and an unprotected gauge factor~\cite{KribsLaflammePoulin2005,Poulin2005,Bacon2006}.  We write its parameters as \([[n,k;t,s]]\): \(n\) physical qubits, \(k\) logical qubits, \(t\) gauge qubits, and \(s\) independent phase-free stabilizer labels. Thus
\begin{equation}
s=n-k-t,
\qquad
\dim\Cc=2^{k+t}.
\end{equation}
Such a code is specified by an operator gauge group \(\Gg\le\Pn\) containing the scalar phase subgroup \(\Phin\).  Its phase-free gauge-label space and symplectic radical are
\begin{equation}
\Gb:=\pi(\Gg)\subseteq\Pnb,
\qquad
\Sb:=\Gb\cap\Gb^{\perp_\omega}.
\label{eq:gauge-label-radical}
\end{equation}
Choose a basis of \(\Sb\), choose mutually commuting Hermitian lifts \(S_1,\ldots,S_s\in\Gg\), and fix their signs.  The resulting phase-fixed stabilizer subgroup is
\begin{equation}
\Ss:=\langle S_1,\ldots,S_s\rangle,
\qquad -\id\notin\Ss .
\label{eq:phase-fixed-stabilizer}
\end{equation}
The inclusion \(\Phin\subseteq\Gg\) ensures that the operator gauge group contains \(-\id\), independently of the number \(t\) of gauge qubits. When at least one gauge qubit is present, \(-\id\) also arises as the group commutator of an anticommuting gauge pair. By contrast, \(-\id\) is excluded from the phase-fixed stabilizer subgroup \(\Ss\), whose elements impose the code-space eigenvalue constraints. Thus the scalar phases remain in the operator center of the gauge group but not in the stabilizer constraints; more precisely,
\begin{equation}
Z(\Gg)=\Phin\!\cdot\!\Ss,
\qquad
\pi\bigl(Z(\Gg)\bigr)=\Sb.
\label{eq:gauge-center-phases}
\end{equation}
The code subspace is the common \(+1\) eigenspace of the chosen phase-fixed stabilizers,
\begin{equation}
\Cc
=
\{\,\ket{\psi}\in\mathcal H_n:\ S\ket{\psi}=\ket{\psi}\ \forall S\in\Ss\,\}.
\end{equation}
Here \(s=\dim_{\mathbb F_2}\Sb\). The subsystem structure gives a noncanonical factorization
\begin{equation}
\Cc\cong\mathcal H_G\otimes\mathcal H_L,\qquad
\dim\mathcal H_L=2^k,\qquad \dim\mathcal H_G=2^t,
\end{equation}
where $\mathcal H_L$ is the protected logical factor and $\mathcal H_G$ is the gauge factor.

After choosing independent labels and Hermitian lifts, the gauge group has phase-free dimension \(s+2t\) and can be written in the standard symplectic paired form
\begin{equation}
\label{eq:gauge-generators-abstract}
\Gg
=
\big\langle
i\id,\ S_1,\dots,S_s,\ X_{g_1},Z_{g_1},\ \dots,\ X_{g_t},Z_{g_t}
\big\rangle,
\end{equation}
where the stabilizer generators $S_i$ commute with all generators, each pair $X_{g_j},Z_{g_j}$ anticommutes, and all other listed pairs commute.  Thus
\begin{equation}
\Ss=\langle S_1,\dots,S_s\rangle,\qquad
\dim_{\mathbb F_2}\Gb=s+2t.
\end{equation}
The canonical-form lemma in Sec.~\ref{subsec:canonical} makes this decomposition concrete: after a Clifford change of basis, every gauge operator has the form \(O_g\otimes\id_{2^k}\), with the logical factor supplying a \(2^k\)-fold multiplicity. This explains the natural compression rank \(2^k\) in the subsystem setting. More generally, the trace identity in Eq.~\eqref{eq:lambda-expectation} is necessary for scalar compression, but it does not by itself imply the operator identities \(PA_iP=\lambda_iP\). The next section shows how block multiplicity provides a sufficient mechanism for lifting state-space expectation vectors to exact rank-\(K\) scalar compressions.


\section{Main structural theorem: representation multiplicity}
\label{sec:multiplicity}

The principal positive realization results in this paper are governed by one representation-theoretic mechanism.  A rank-$K$ scalar compression requires $K$ orthonormal vectors on which every operator in the tuple has identical diagonal matrix elements and vanishing off-diagonal matrix elements.  When the tuple acts through active matrix blocks tensored with multiplicity spaces, the active factors determine the expectation vector, while the multiplicity factors provide orthogonal labels for these $K$ copies.  The compression problem therefore becomes a rank-budget problem on the multiplicity spaces.

The main theorem first places every scalar-compression vector inside the state-space expectation body of the active block algebra.  It then gives a sufficient and explicit multiplicity condition for lifting an active block state to an exact rank-$K$ scalar compression.  Two immediate specializations prepare the rest of the paper: one-dimensional active blocks give the convex hull of common spectral data, while a single active block gives the active/spectator sandwich and the two thresholds for pure-state realization and full convexification.

\subsection{Active blocks and multiplicity spaces}
\label{subsec:block-multiplicity}

Let
\begin{equation}
\mathcal H
=
\bigoplus_{\alpha=1}^{\nu}
\left(\mathcal H_\alpha\otimes\C^{m_\alpha}\right),
\label{eq:block-multiplicity-H}
\end{equation}
and suppose that the Hermitian tuple $A=(A_1,\ldots,A_m)$ has the block form
\begin{equation}
A_i
=
\bigoplus_{\alpha=1}^{\nu}
\left(B_{i,\alpha}\otimes\id_{m_\alpha}\right),
\qquad i=1,\ldots,m.
\label{eq:block-multiplicity-A}
\end{equation}
Write
\begin{equation}
B:=\bigl(B_{i,\alpha}\bigr)_{\substack{1\le i\le m\\1\le\alpha\le \nu}}
\end{equation}
for this block family.  Let $\mathcal D(\mathcal K)$ denote the density matrices on a finite-dimensional Hilbert space $\mathcal K$.  To parameterize the weights assigned to the active blocks, for every nonnegative integer $q$ let
\begin{equation}
\Delta_q
:=
\left\{
p\in\R^{q+1}:
\begin{array}{l}
p_j\ge 0,\quad j=1,\ldots,q+1,\\
\displaystyle\sum_{j=1}^{q+1}p_j=1
\end{array}
\right\}
\label{eq:probability-simplex}
\end{equation}
denote the standard $q$-dimensional probability simplex.  We then define the state-space expectation body
\begin{equation}
\mathfrak W(B)
:=
\left\{
\left(
\sum_{\alpha=1}^{\nu}
p_\alpha\Tr(\rho_\alpha B_{i,\alpha})
\right)_{i=1}^{m}:
\begin{array}{l}
p\in\Delta_{\nu-1},\\
\rho_\alpha\in\mathcal D(\mathcal H_\alpha)
\end{array}
\right\}.
\label{eq:block-state-space-body}
\end{equation}
This is the joint expectation body of the finite-dimensional algebra $\bigoplus_\alpha\mathcal B(\mathcal H_\alpha)$.  The next theorem is the main structural result of the paper.

\begin{theorem}[Block multiplicity and exact scalar compression]
\label{thm:block-multiplicity}
For the tuple in Eq.~\eqref{eq:block-multiplicity-A}, one always has
\begin{equation}
\Lambda_K(A)\subseteq\mathfrak W(B).
\label{eq:block-multiplicity-upper}
\end{equation}
Conversely, fix $p\in\Delta_{\nu-1}$ and density matrices $\rho_\alpha$ on $\mathcal H_\alpha$.  If
\begin{equation}
m_\alpha\ge K\,\rank(\rho_\alpha)
\label{eq:block-rank-condition}
\end{equation}
for every $\alpha$ with $p_\alpha>0$, then the corresponding point of $\mathfrak W(B)$ belongs to $\Lambda_K(A)$. In particular, if
\begin{equation}
m_\alpha\ge K\dim\mathcal H_\alpha
\label{eq:block-full-multiplicity-condition}
\end{equation}
for every $\alpha$, then
\begin{equation}
\Lambda_K(A)=\mathfrak W(B).
\label{eq:block-multiplicity-equality}
\end{equation}
The common set is compact, convex, and connected.
\end{theorem}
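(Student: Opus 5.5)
The plan has three parts: the upper inclusion from the normalized trace, the converse lift by an explicit construction of $K$ orthonormal vectors that uses the multiplicity spaces as labels, and then equality and the topological conclusions.

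\textbf{Upper inclusion.} Let $\lambda\in\Lambda_K(A)$ with projector $P$. By Eq.~\eqref{eq:lambda-expectation}, $\lambda_i=\Tr(\rho_PA_i)$. Let $Q_\alpha$ be the orthogonal projector onto the summand $\mathcal H_\alpha\otimes\C^{m_\alpha}$, and set $p_\alpha:=\Tr(Q_\alpha\rho_PQ_\alpha)$. Because $A_i$ is block diagonal, $\Tr(\rho_PA_i)=\sum_\alpha\Tr(Q_\alpha\rho_PQ_\alpha A_i)$. Whenever $p_\alpha>0$, take $\rho_\alpha$ to be the partial trace over $\C^{m_\alpha}$ of $Q_\alpha\rho_PQ_\alpha/p_\alpha$; when $p_\alpha=0$, take $\rho_\alpha$ arbitrary. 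Since $A_i$ restricts to $B_{i,\alpha}\otimes\id_{m_\alpha}$ on that summand, this gives $\lambda_i=\sum_\alpha p_\alpha\Tr(\rho_\alpha B_{i,\alpha})$, so $\lambda\in\mathfrak W(B)$.

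\textbf{Converse lift.} Fix $p$ and $\rho_\alpha$ satisfying Eq.~\eqref{eq:block-rank-condition}, and write $r_\alpha=\rank\rho_\alpha$. For each $\alpha$ with $p_\alpha>0$, purify $\rho_\alpha$: write $\rho_\alpha=\sum_{j=1}^{r_\alpha}\mu_{\alpha j}\ket{e_{\alpha j}}\!\bra{e_{\alpha j}}$ and choose $Kr_\alpha\le m_\alpha$ orthonormal vectors $\ket{f_{\alpha,k,j}}\in\C^{m_\alpha}$, indexed by $k=1,\ldots,K$ and $j=1,\ldots,r_\alpha$. Define
\begin{equation}
\ket{\psi_k}:=\bigoplus_{\alpha:\,p_\alpha>0}\sqrt{p_\alpha}\sum_{j=1}^{r_\alpha}\sqrt{\mu_{\alpha j}}\,\ket{e_{\alpha j}}\otimes\ket{f_{\alpha,k,j}},\qquad k=1,\ldots,K.
\end{equation}
The key computation is $\langle\psi_k|A_i|\psi_l\rangle$. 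Since $A_i$ acts as the identity on the multiplicity factor, this matrix element equals
\begin{equation}
\sum_\alpha p_\alpha\sum_{j,j'}\sqrt{\mu_{\alpha j}\mu_{\alpha j'}}\,\langle e_{\alpha j}|B_{i,\alpha}|e_{\alpha j'}\rangle\,\langle f_{\alpha,k,j}|f_{\alpha,l,j'}\rangle .
\end{equation}
Orthonormality of the $f$'s forces $k=l$ and $j=j'$, which leaves $\delta_{kl}\sum_\alpha p_\alpha\Tr(\rho_\alpha B_{i,\alpha})$. Taking $A_i=\id$ gives $\braket{\psi_k}{\psi_l}=\delta_{kl}$. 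Hence the projector onto the span of the $\ket{\psi_k}$ has rank $K$ and compresses every $A_i$ to the prescribed scalar.

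\textbf{Equality and topology.} When $m_\alpha\ge K\dim\mathcal H_\alpha$, every choice of $\rho_\alpha$ meets the rank budget, so the lift realizes all of $\mathfrak W(B)$ and $\Lambda_K(A)=\mathfrak W(B)$. For the final sentence, observe that $\mathfrak W(B)$ is the image of the compact convex set of block-diagonal states $\bigoplus_\alpha p_\alpha\rho_\alpha$ under a linear map. It is therefore compact and convex, hence connected. I read the last sentence of the theorem as referring to this common set in the equality case; note also that $\mathfrak W(B)$ is compact and convex unconditionally.

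\textbf{Main obstacle.} The only delicate step is the choice of multiplicity labels. Distinct code vectors must use mutually orthogonal multiplicity vectors across all purification indices, so that the cross terms $j\ne j'$ and $k\ne l$ vanish simultaneously. This is exactly where the budget $Kr_\alpha$ in Eq.~\eqref{eq:block-rank-condition} is used, and it is why the condition is stated only for blocks with $p_\alpha>0$.
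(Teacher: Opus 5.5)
Your proposal is correct and follows the paper's proof essentially step for step: the upper inclusion via normalized partial traces of the diagonal blocks of $P/K$, the converse via the same vectors $\bigoplus_{\alpha:p_\alpha>0}\sqrt{p_\alpha}\sum_j\sqrt{\mu_{\alpha j}}\ket{e_{\alpha j}}\otimes\ket{f_{\alpha,k,j}}$ built from $Kr_\alpha$ orthonormal multiplicity labels, and convexity from the affine image of the block-diagonal state space. Your explicit choice of an arbitrary $\rho_\alpha$ when $p_\alpha=0$ is a small detail the paper leaves implicit.
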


\begin{proof}
Suppose $P$ is a rank-$K$ scalar compression with $PA_iP=\lambda_iP$, and set $\rho=P/K$.  Let $\Pi_\alpha$ be the projector onto the $\alpha$th block in Eq.~\eqref{eq:block-multiplicity-H}.  Since each $A_i$ is block diagonal with respect to this decomposition, its expectation depends only on the diagonal blocks $\Pi_\alpha\rho\Pi_\alpha$ of $\rho$.  Define the positive subnormalized active operator
\begin{equation}
\tau_\alpha
:=
\Tr_{\C^{m_\alpha}}
\bigl(\Pi_\alpha\rho\Pi_\alpha\bigr),
\qquad
p_\alpha:=\Tr\tau_\alpha.
\end{equation}
Then $p\in\Delta_{\nu-1}$, and for $p_\alpha>0$ write $\tau_\alpha=p_\alpha\rho_\alpha$.  The defining property of the partial trace gives
\begin{equation}
\lambda_i
=\Tr(\rho A_i)
=\sum_{\alpha=1}^{\nu}
\Tr(\tau_\alpha B_{i,\alpha})
=\sum_{\alpha=1}^{\nu}
p_\alpha\Tr(\rho_\alpha B_{i,\alpha}),
\end{equation}
which proves Eq.~\eqref{eq:block-multiplicity-upper}.

For the converse, write
\begin{equation}
\rho_\alpha
=
\sum_{a=1}^{r_\alpha}
q_{\alpha a}\ket{\phi_{\alpha a}}\!\bra{\phi_{\alpha a}},
\qquad
r_\alpha=\rank(\rho_\alpha).
\label{eq:block-density-spectral}
\end{equation}
The condition in Eq.~\eqref{eq:block-rank-condition} permits, for every used block, a family of orthonormal spectator vectors
\begin{equation}
\left\{
\ket{e_{\alpha a j}}:
1\le a\le r_\alpha,\ 1\le j\le K
\right\}
\subset\C^{m_\alpha}.
\end{equation}
For $j=1,\ldots,K$, define
\begin{equation}
\ket{\eta_j}
:=
\bigoplus_{\alpha:p_\alpha>0}
\sqrt{p_\alpha}
\sum_{a=1}^{r_\alpha}
\sqrt{q_{\alpha a}}
\ket{\phi_{\alpha a}}\otimes\ket{e_{\alpha a j}}.
\label{eq:block-eta-construction}
\end{equation}
The normalization of the weights and orthogonality of the spectator vectors give
\begin{equation}
\langle\eta_j\mid\eta_\ell\rangle
=
\delta_{j\ell}
\sum_{\alpha:p_\alpha>0}p_\alpha
\sum_{a=1}^{r_\alpha}q_{\alpha a}
=\delta_{j\ell},
\label{eq:block-eta-orthonormal}
\end{equation}
so $\eta_1,\ldots,\eta_K$ are orthonormal.  The same spectator orthogonality also gives
\begin{equation}
\bra{\eta_j}A_i\ket{\eta_\ell}
=
\delta_{j\ell}
\sum_{\alpha=1}^{\nu}
p_\alpha\Tr(\rho_\alpha B_{i,\alpha}).
\label{eq:block-eta-compression}
\end{equation}
Set $P_\eta:=\sum_{j=1}^K\ket{\eta_j}\!\bra{\eta_j}$ and $\lambda_i:=\sum_\alpha p_\alpha\Tr(\rho_\alpha B_{i,\alpha})$.  Equation~\eqref{eq:block-eta-compression} then implies $P_\eta A_iP_\eta=\lambda_iP_\eta$ for every $i$, proving the converse assertion.  Under the condition in Eq.~\eqref{eq:block-full-multiplicity-condition}, every density matrix on $\mathcal H_\alpha$ satisfies Eq.~\eqref{eq:block-rank-condition}.  This proves Eq.~\eqref{eq:block-multiplicity-equality}.  Finally, $\mathfrak W(B)$ is the affine image of the state space of a finite direct-sum matrix algebra, so it is compact and convex, hence connected.
\end{proof}

\subsection{Commutative specialization: common spectral decompositions}

The first specialization has one-dimensional active blocks.  In that case, the active states carry no internal matrix data and are specified only by their probability weights over the blocks.

\begin{corollary}[Common spectral decompositions]
\label{cor:common-spectral-polytope}
Let $A=(A_1,\ldots,A_m)$ be a pairwise commuting Hermitian tuple, with joint spectral decomposition
\begin{equation}
A_i=\sum_{\alpha\in\Omega}a_i(\alpha)\Pi_\alpha,
\qquad a_i(\alpha)\in\R,
\end{equation}
where the $\Pi_\alpha$ are pairwise orthogonal joint spectral projectors and sum to $\id$.  If $\rank(\Pi_\alpha)\ge K$ for every $\alpha$, then
\begin{equation}
\Lambda_K(A)
=
\conv\{(a_1(\alpha),\ldots,a_m(\alpha)):\alpha\in\Omega\}.
\label{eq:common-spectral-polytope}
\end{equation}
\end{corollary}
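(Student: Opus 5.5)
The plan is to reduce Corollary~\ref{cor:common-spectral-polytope} to Theorem~\ref{thm:block-multiplicity}, taking every active block to be one-dimensional. For each $\alpha\in\Omega$ set $m_\alpha:=\rank(\Pi_\alpha)$ and fix an orthonormal basis of $\operatorname{Ran}(\Pi_\alpha)$. This identifies $\operatorname{Ran}(\Pi_\alpha)$ with $\C\otimes\C^{m_\alpha}$, so $\mathcal H_\alpha=\C$. Since the $\Pi_\alpha$ are pairwise orthogonal and sum to $\id$, we obtain
\[
\mathcal H=\bigoplus_{\alpha\in\Omega}\bigl(\C\otimes\C^{m_\alpha}\bigr).
\]
In this decomposition each operator takes the form
\[
A_i=\bigoplus_{\alpha}\bigl(a_i(\alpha)\otimes\id_{m_\alpha}\bigr),
\]
which is exactly the block form of Eq.~\eqref{eq:block-multiplicity-A} with $B_{i,\alpha}=a_i(\alpha)$, a $1\times 1$ matrix. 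The unitary used for this identification does not affect $\Lambda_K$, by the unitary covariance noted in Sec.~\ref{subsec:jhrnr}.

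Next I would identify $\mathfrak W(B)$. On $\mathcal H_\alpha=\C$ the only density matrix is $\rho_\alpha=1$, and $\Tr(\rho_\alpha B_{i,\alpha})=a_i(\alpha)$. Equation~\eqref{eq:block-state-space-body} therefore reduces to the set of vectors
\[
\Bigl(\sum_\alpha p_\alpha a_i(\alpha)\Bigr)_{i=1}^m,\qquad p\in\Delta_{|\Omega|-1}.
\]
This is precisely $\conv\{(a_1(\alpha),\ldots,a_m(\alpha)):\alpha\in\Omega\}$.

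Finally, the full-multiplicity condition of Eq.~\eqref{eq:block-full-multiplicity-condition} reads $m_\alpha\ge K\dim\mathcal H_\alpha=K$. This is exactly the hypothesis $\rank(\Pi_\alpha)\ge K$. Equation~\eqref{eq:block-multiplicity-equality} then gives Eq.~\eqref{eq:common-spectral-polytope} directly.

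The only point needing care is bookkeeping. The index set $\Omega$ must list only the distinct joint eigenvalue vectors with nonzero projectors, so that every block is nonempty and the hypothesis applies to each block. Repeated vectors $a(\alpha)$ would also be harmless, since they do not change the convex hull. No further obstacle is expected.
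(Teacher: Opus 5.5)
Your proposal is correct and follows the paper's proof. The paper also applies Theorem~\ref{thm:block-multiplicity} with $\mathcal H_\alpha=\C$ and $m_\alpha=\rank(\Pi_\alpha)$, identifies $\mathfrak W(B)$ with the convex hull of the joint eigenvalue vectors, and uses $m_\alpha\ge K\dim\mathcal H_\alpha=K$ to obtain equality; your bookkeeping caveat is automatic, since $\rank(\Pi_\alpha)\ge K\ge 1$ already makes every block nonzero.
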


\begin{proof}
Apply Theorem~\ref{thm:block-multiplicity} with $\mathcal H_\alpha=\C$ and multiplicity $m_\alpha=\rank(\Pi_\alpha)$.  The active state-space expectation body is precisely the displayed convex hull, and $m_\alpha\ge K\dim\mathcal H_\alpha=K$ gives equality.
\end{proof}

This corollary is the operator-theoretic source of the commuting Pauli polytopes in Sec.~\ref{sec:commuting}.  There the common eigenspaces have uniform multiplicity and a Clifford normal form makes both the sector values and the realizing projectors explicit.

\subsection{Single-block specialization: active/spectator reduction}
\label{subsec:reduction}

The second specialization has a single active block.  It separates what follows from normalized traces from what can be constructed using the spectator factor.

\begin{corollary}[Active/spectator reduction]
\label{thm:active-spectator}
Let $\mathcal H=\mathcal H_{\mathrm{act}}\otimes \mathcal H_{\mathrm{spect}}$ with $\dim\mathcal H_{\mathrm{spect}}=M$, and let
\begin{equation}
A_i = B_i\otimes \id_{M}\qquad (i=1,\ldots,m)
\label{eq:act-spect-form}
\end{equation}
for Hermitian $B_i$ on $\mathcal H_{\mathrm{act}}$. Fix $M\ge K$.  Every pure-state expectation tuple of the active operators is realized by a rank-$K$ scalar compression of the lifted tuple.  Indeed, if $\ket{\psi}\in\mathcal H_{\mathrm{act}}$ is a unit vector and $R$ is a rank-$K$ projector on $\mathcal H_{\mathrm{spect}}$, then
\begin{equation}
P_{\psi,R}:=\ket{\psi}\!\bra{\psi}\otimes R
\end{equation}
satisfies
\begin{equation}
P_{\psi,R}A_iP_{\psi,R}
=
\langle\psi|B_i|\psi\rangle\,P_{\psi,R},
\qquad i=1,\ldots,m.
\label{eq:active-product-compression}
\end{equation}
Conversely, every rank-$K$ scalar compression of the lifted tuple is determined, at the level of scalar values, by a density matrix on the active factor.  More precisely, if $P\in\cP_K(\mathcal H)$ and
\begin{equation}
PA_iP=\lambda_iP,\qquad i=1,\ldots,m,
\end{equation}
then, with
\begin{equation}
\rho_P:=\frac{P}{K},\qquad \sigma_P:=\Tr_{\mathrm{spect}}(\rho_P),
\end{equation}
one has
\begin{equation}
\lambda_i=\Tr(\sigma_PB_i),\qquad i=1,\ldots,m.
\label{eq:act-spect-reduced-state}
\end{equation}
Consequently,
\begin{equation}
\begin{aligned}
\Lambda_1(B_1,\ldots,B_m)
&\subseteq
\Lambda_K(A_1,\ldots,A_m)\\
&\subseteq
\conv\Lambda_1(B_1,\ldots,B_m).
\end{aligned}
\label{eq:active-spectator-sandwich}
\end{equation}
\end{corollary}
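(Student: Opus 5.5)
The plan is to treat Corollary~\ref{thm:active-spectator} as the single-block case $\nu=1$ of Theorem~\ref{thm:block-multiplicity}. We take $\mathcal H_1=\mathcal H_{\mathrm{act}}$ and $m_1=M$, and we add one direct computation for the explicit product projector. The argument has three steps.

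First, the product projector. For a unit vector $\ket{\psi}$ and a rank-$K$ projector $R$ on $\mathcal H_{\mathrm{spect}}$, which exists because $M\ge K$, the operator $P_{\psi,R}=\ket{\psi}\!\bra{\psi}\otimes R$ is an orthogonal projector of rank $1\cdot K=K$. We then compute
\begin{equation}
P_{\psi,R}(B_i\otimes\id_M)P_{\psi,R}
=\ket{\psi}\!\bra{\psi}B_i\ket{\psi}\!\bra{\psi}\otimes R^3
=\langle\psi|B_i|\psi\rangle\,P_{\psi,R},
\end{equation}
which is Eq.~\eqref{eq:active-product-compression}. This gives $\Lambda_1(B)\subseteq\Lambda_K(A)$. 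It is also the rank-one case of condition~\eqref{eq:block-rank-condition}, since $M\ge K\cdot 1$.

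Second, the converse identity. Suppose $PA_iP=\lambda_iP$. Taking traces gives $\lambda_i=\Tr(\rho_PA_i)$ as in Eq.~\eqref{eq:lambda-expectation}. The defining property of the partial trace, $\Tr(\rho(B\otimes\id))=\Tr(\Tr_{\mathrm{spect}}(\rho)B)$, then turns this into $\lambda_i=\Tr(\sigma_PB_i)$, which is Eq.~\eqref{eq:act-spect-reduced-state}. The reduced operator $\sigma_P$ is a density matrix on $\mathcal H_{\mathrm{act}}$. This is exactly the first half of the proof of Theorem~\ref{thm:block-multiplicity} with $\nu=1$ and $p_1=1$.

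Third, the upper inclusion. It remains to identify $\{(\Tr(\sigma B_i))_i:\sigma\in\mathcal D(\mathcal H_{\mathrm{act}})\}=\mathfrak W(B)$ with $\conv\Lambda_1(B)$. Spectrally decompose $\sigma=\sum_a q_a\ket{\phi_a}\!\bra{\phi_a}$; its expectation vector is then the convex combination $\sum_a q_a(\langle\phi_a|B_i|\phi_a\rangle)_i$ of points of $\Lambda_1(B)$. Conversely, any convex combination of pure-state vectors is the expectation vector of the corresponding mixture. Together these give $\Lambda_K(A)\subseteq\conv\Lambda_1(B)$ and complete the sandwich~\eqref{eq:active-spectator-sandwich}.

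No step is a real obstacle. The only point needing care is the identification of the mixed-state body with the convex hull of the pure-state range, and the spectral decomposition handles it.
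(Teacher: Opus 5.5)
Your proposal is correct and follows the paper's proof essentially step for step: the product projector $\ket{\psi}\!\bra{\psi}\otimes R$ gives the lower inclusion, the partial-trace identity $\Tr[\rho_P(B_i\otimes\id_M)]=\Tr(\sigma_PB_i)$ gives Eq.~\eqref{eq:act-spect-reduced-state}, and the spectral decomposition of $\sigma_P$ places the compression vector in $\conv\Lambda_1(B_1,\ldots,B_m)$. One cosmetic slip changes nothing: the spectator factor in your product computation is $R\,\id_M R=R^2=R$, not $R^3$.
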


\begin{proof}
The product construction above proves the first inclusion in Eq.~\eqref{eq:active-spectator-sandwich}.  For the second, let $P\in\cP_K(\mathcal H)$ satisfy $PA_iP=\lambda_iP$.  Then
\begin{equation}
\lambda_i
=\frac{1}{K}\Tr(PA_i)
=\Tr\!\left[\rho_P(B_i\otimes\id_M)\right]
=\Tr(\sigma_PB_i).
\end{equation}
This proves Eq.~\eqref{eq:act-spect-reduced-state}.  The partial trace $\sigma_P$ is positive and has unit trace, so it is a density matrix on $\mathcal H_{\mathrm{act}}$.  Writing $\sigma_P=\sum_a q_a\ket{\phi_a}\!\bra{\phi_a}$ with $q_a\ge0$ and $\sum_aq_a=1$, we obtain
\begin{equation}
(\lambda_1,\ldots,\lambda_m)
=
\sum_aq_a
\bigl(\langle\phi_a|B_1|\phi_a\rangle,\ldots,
\langle\phi_a|B_m|\phi_a\rangle\bigr).
\end{equation}
The right-hand side is therefore a convex combination of vectors in $\Lambda_1(B_1,\ldots,B_m)$, which proves the second inclusion in Eq.~\eqref{eq:active-spectator-sandwich}.
\end{proof}

\begin{corollary}[Convex active ranges]
\label{cor:convex-active-range}
In the setting of Corollary~\ref{thm:active-spectator}, if $M\ge K$ and $\Lambda_1(B_1,\ldots,B_m)$ is already convex, then
\begin{equation}
\Lambda_K(B_1\otimes\id_M,\ldots,B_m\otimes\id_M)
=
\Lambda_1(B_1,\ldots,B_m).
\end{equation}
\end{corollary}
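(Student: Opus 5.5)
The plan is to get the equality as a two-sided inclusion squeezed between the bounds of Corollary~\ref{thm:active-spectator}, with convexity making the outer bound collapse onto the inner one. Since $M\ge K$, that corollary gives the sandwich in Eq.~\eqref{eq:active-spectator-sandwich}:
\begin{equation}
\Lambda_1(B_1,\ldots,B_m)\subseteq\Lambda_K(B_1\otimes\id_M,\ldots,B_m\otimes\id_M)\subseteq\conv\Lambda_1(B_1,\ldots,B_m).
\end{equation}
The lower inclusion comes from the product projectors $P_{\psi,R}=\ket{\psi}\!\bra{\psi}\otimes R$ with $R$ a rank-$K$ projector on the $M$-dimensional spectator factor. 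The upper inclusion comes from the reduced state $\sigma_P$ on the active factor.

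Next I use the hypothesis that $\Lambda_1(B_1,\ldots,B_m)$ is convex. Then $\conv\Lambda_1(B_1,\ldots,B_m)=\Lambda_1(B_1,\ldots,B_m)$, so the two outer sets in the sandwich are the same set. The middle set must therefore equal both of them, which is the claimed identity.

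There is no genuine obstacle here; the only point to watch is the hypotheses. The assumption $M\ge K$ is needed so that a rank-$K$ projector $R$ exists on the spectator factor, which makes the lower inclusion valid. The convexity assumption is used only to identify $\conv\Lambda_1$ with $\Lambda_1$. Compactness of $\Lambda_1$ is not needed, because the upper bound is the convex hull itself and not its closure.

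A cross-check is available through Theorem~\ref{thm:block-multiplicity} with a single block, where $\mathfrak W(B)=\conv\Lambda_1(B)$. This route is not needed, and its full-multiplicity condition $M\ge K\dim\mathcal H_{\mathrm{act}}$ is stronger than the hypothesis here.
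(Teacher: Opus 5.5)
Your proposal is correct and matches the paper's proof. The sandwich in Eq.~\eqref{eq:active-spectator-sandwich} from Corollary~\ref{thm:active-spectator} (valid since $M\ge K$), combined with $\conv\Lambda_1(B_1,\ldots,B_m)=\Lambda_1(B_1,\ldots,B_m)$ under the convexity hypothesis, forces the middle set to equal $\Lambda_1(B_1,\ldots,B_m)$, exactly as the paper argues.
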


\begin{proof}
This is immediate from Eq.~\eqref{eq:active-spectator-sandwich}, since
\begin{equation}
\conv\Lambda_1(B_1,\ldots,B_m)=\Lambda_1(B_1,\ldots,B_m)
\end{equation}
under the stated convexity assumption.
\end{proof}

The reduced-state formula in Eq.~\eqref{eq:act-spect-reduced-state} shows that compression values are determined by the active marginal $\sigma_P$.  Set $D_{\mathrm{act}}:=\dim\mathcal H_{\mathrm{act}}$.  The uniform guarantee for realizing every pure-state expectation tuple begins at the threshold $M=K$.  Under the standing assumption $M\ge K$, there are two disjoint regimes.  If $K\le M<KD_{\mathrm{act}}$, every pure-state expectation tuple is realized, but full convexification is not guaranteed; the rank-two Pauli triple in Proposition~\ref{prop:rank2-pauli-ball} provides an explicit example in which it fails.  If $M\ge KD_{\mathrm{act}}$, Theorem~\ref{thm:block-multiplicity} realizes every active density matrix, and hence
\begin{equation}
\Lambda_K(B_1\otimes\id_M,\ldots,B_m\otimes\id_M)
=
\conv\Lambda_1(B_1,\ldots,B_m).
\label{eq:block-active-full-convexification}
\end{equation}


\section{Commuting Pauli geometry}
\label{sec:commuting}

Corollary~\ref{cor:common-spectral-polytope} identifies the rank-$K$ range of a tuple with one-dimensional active blocks as the convex hull of its common spectral data, provided each block multiplicity is at least $K$.  For commuting Pauli families, this abstract structure becomes explicit under a Clifford normal form: the generated abelian Pauli algebra is mapped to a $Z$-type algebra, and the joint eigenspace decomposition has one-dimensional active blocks with a uniform multiplicity $M$.  This leads to two rank regimes.  When $M\ge K$, the range is the full sector polytope, and the normal form yields explicit rank-$K$ projectors realizing every point.  When $M<K$, the convex-hull upper bound remains valid, but the full polytope need not be attained: the three-qubit local-$Z$ triple exhibits shrinking from a cube to an octahedron, collapse to the origin, and eventual emptiness.

\subsection{Commuting \texorpdfstring{$Z$}{Z}-type families}
\label{subsec:t0}

Fix $n\in\mathbb{N}$.  Let $Z_j$ denote the $n$-qubit Pauli operator that acts as $Z$ on qubit $j$ and as the identity on every other qubit.  We label phase-free Pauli-$Z$ strings by binary vectors $u=(u_1,\ldots,u_n)\in\mathbb F_2^n$ and write
\begin{equation}
Z^u
:=
\prod_{j=1}^n Z_j^{u_j}
=
Z^{u_1}\otimes\cdots\otimes Z^{u_n}.
\end{equation}
Let $S$ be a finite set of such strings.  Denote the $\mathbb F_2$-span of their labels by
\begin{equation}
\begin{aligned}
\mathcal{T}&:=\operatorname{span}_{\mathbb F_2}\{u\in\mathbb F_2^n:\ Z^u\in S\}\subseteq\mathbb F_2^n,\\
r&:=\dim_{\mathbb{F}_2}\mathcal{T},
\qquad
\ell:=n-r.
\end{aligned}
\end{equation}
We denote the uniform sector multiplicity in this section by
\begin{equation}
M:=2^\ell.
\end{equation}
Let $\langle S\rangle=\{Z^{u}:u\in\mathcal{T}\}$ be the abelian group generated by $S$.  The associated real group algebra is
\begin{equation}
\mathfrak{D}_S\ :=\ \operatorname{span}_{\R}\bigl\{Z^u : u\in\mathcal{T}\bigr\}.
\end{equation}
We say that a Hermitian operator is \emph{supported on $S$} if it lies in $\mathfrak D_S$.  Equivalently, every operator in $\mathfrak D_S$ is diagonal in the computational basis and depends only on the joint eigenvalue pattern of $\langle S\rangle$.  Indeed, for a computational basis vector $\ket{x}$, $x\in\mathbb F_2^n$, one has
\begin{equation}
Z^u\ket{x}=(-1)^{u\cdot x}\ket{x}.
\end{equation}
Thus two basis vectors $\ket{x}$ and $\ket{y}$ lie in the same joint eigenspace of $\langle S\rangle$ precisely when $(-1)^{u\cdot x}=(-1)^{u\cdot y}$ for every $u\in\mathcal T$.  For real coefficients $c_u$, the preceding identity gives
\begin{equation}
\left(\sum_{u\in\mathcal T}c_uZ^u\right)\ket{x}
=
\left(\sum_{u\in\mathcal T}c_u(-1)^{u\cdot x}\right)\ket{x},
\end{equation}
so the resulting eigenvalue is the same for all computational basis vectors in the same joint eigenspace.  Conversely, diagonal operators with this constancy property are exactly the real span of the characters $x\mapsto(-1)^{u\cdot x}$ with $u\in\mathcal T$.  Thus, we consider ordered Hermitian tuples $A=(A_1,\ldots,A_m)$ with each $A_i\in\mathfrak D_S$.

\subsection{Clifford reduction and sectorization}
\label{subsec:sectorization}

To make the common eigenspace structure of such tuples explicit, we first use a Clifford unitary to carry $\langle S\rangle$ to $\langle Z_1,\ldots,Z_r\rangle$.

\begin{lemma}[Clifford normal form]\label{lem:cliff}
There exists a Clifford unitary $U$ (generated by CNOTs and swaps) such that
\begin{equation}
U\,\langle S\rangle\,U^\dagger=\langle Z_1,\dots,Z_r\rangle.
\end{equation}
Consequently, for each $A_i\in\mathfrak{D}_S$ there is a real diagonal polynomial $f_i$ with
\begin{equation}
U A_i U^\dagger \;=\; f_i(Z_1,\dots,Z_r)\ \otimes\ \id_{2^\ell}.
\end{equation}
\end{lemma}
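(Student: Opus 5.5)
The plan is to work entirely at the level of binary labels. A $Z$-string $Z^u$ acts on computational basis states by $Z^u\ket{x}=(-1)^{u\cdot x}\ket{x}$. A CNOT from control $a$ to target $b$ maps $X$-type strings by adding coordinate $a$ to coordinate $b$. Dually, it maps $Z$-strings by $Z_b\mapsto Z_aZ_b$ and fixes $Z_a$. At the label level this is an elementary row-type operation $u\mapsto Eu$ on $\mathbb F_2^n$: it adds $u_b$ into $u_a$. Swaps permute coordinates. Both kinds of gate are Clifford and send $Z$-strings to $Z$-strings without phases, because they permute computational basis states.

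The first step is to check that the elementary operations $u_a\mapsto u_a+u_b$ and coordinate swaps generate all of $GL_n(\mathbb F_2)$. This is Gaussian elimination over $\mathbb F_2$. Each CNOT or swap therefore acts on $\mathcal T\subseteq\mathbb F_2^n$ by an invertible linear map, and any $G\in GL_n(\mathbb F_2)$ is realized by a product of such gates acting as $Z^u\mapsto Z^{Gu}$.

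The second step chooses the target map. Pick a basis $u^{(1)},\dots,u^{(r)}$ of $\mathcal T$ and extend it to a basis of $\mathbb F_2^n$. Let $G$ be the invertible map sending $u^{(k)}\mapsto e_k$. Realizing $G$ by a circuit $U$ as above gives $UZ^{u}U^\dagger=Z^{Gu}$ for every $u$. Since $G\mathcal T=\operatorname{span}\{e_1,\dots,e_r\}$, we obtain $U\langle S\rangle U^\dagger=\langle Z_1,\dots,Z_r\rangle$.

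The one point needing care is the direction convention. Conjugation by a permutation unitary $U_\sigma\ket{x}=\ket{\sigma x}$, with $\sigma$ linear, sends $Z^u$ to $Z^{\sigma^{-T}u}$. This follows from $(-1)^{u\cdot\sigma^{-1}y}=(-1)^{(\sigma^{-T}u)\cdot y}$. So I will take $\sigma:=G^{-T}$, i.e.\ $\sigma^{-T}=G$, and decompose this $\sigma$ into CNOTs and swaps. This sidesteps any gate-by-gate bookkeeping.

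The consequence then follows by linearity. Write $A_i=\sum_{u\in\mathcal T}c^{(i)}_uZ^u$ with $c^{(i)}_u$ real. Then
\begin{equation}
UA_iU^\dagger=\sum_{u\in\mathcal T}c^{(i)}_uZ^{Gu}.
\end{equation}
Each $Gu$ is supported on the first $r$ coordinates, so $Z^{Gu}=\bigl(\prod_{k\le r}Z_k^{(Gu)_k}\bigr)\otimes\id_{2^\ell}$. Hence $UA_iU^\dagger=f_i(Z_1,\dots,Z_r)\otimes\id_{2^\ell}$, where $f_i$ is a real multilinear polynomial. It is diagonal because each $Z^{Gu}$ is. The only real obstacle is the sign and transpose convention in the second step. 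It is handled by the explicit permutation-unitary computation, and no phases arise since the unitaries permute basis states.
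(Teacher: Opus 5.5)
Your proposal is correct and follows essentially the same route as the paper. Both arguments reduce to $\mathbb F_2$-linear algebra on $Z$-labels, realize an invertible label map carrying $\mathcal T$ onto $\operatorname{span}\{e_1,\dots,e_r\}$ by a CNOT--swap circuit, and finish by termwise conjugation. The paper builds this map concretely from the echelon form $[\,I_r\mid R\,]$ (swaps to move the pivots, then one CNOT per nonzero entry of $R$), so it does not need your basis extension, the fact that elementary operations generate $GL_n(\mathbb F_2)$, or the $\sigma=G^{-T}$ bookkeeping, which you nonetheless handle correctly.
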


\begin{proof}
Identify the phase-free diagonal Pauli group with $\mathbb F_2^n$ through the injective map $u\mapsto Z^u$.  Under this identification, multiplication of $Z$-strings corresponds to addition of support vectors, and the generated group $\langle S\rangle$ is determined exactly by the subspace $\mathcal T$.

Choose an $\mathbb F_2$-basis $u^{(1)},\dots,u^{(r)}$ of $\mathcal T$ and let $G_S$ be the $r\times n$ generator matrix whose rows are these vectors.  Elementary row operations merely replace this basis by another basis of the same subspace, hence do not change $\mathcal T$ or $\langle S\rangle$.  After Gaussian elimination, followed by qubit swaps that move the pivot columns to the first $r$ positions, the transformed row space has a representative
\begin{equation}
G_S=[\,I_r\mid R\,],\qquad R\in\mathbb F_2^{r\times \ell}.
\end{equation}

The remaining block $R$ can be eliminated by CNOT gates.  Indeed, the conjugation rule
\begin{equation}
\mathrm{CNOT}_{c\to t}:Z_t\mapsto Z_cZ_t,\qquad Z_c\mapsto Z_c,
\end{equation}
acts on support vectors as the column operation $u_c\leftarrow u_c\oplus u_t$.  Hence, for every nonzero entry $R_{i,j}$, the gate $\mathrm{CNOT}_{(r+j)\to i}$ adds the pivot column $i$ to column $r+j$.  Since that pivot column is the standard basis vector $e_i$, this changes only the entry $R_{i,j}$ in the non-pivot column and turns it from $1$ to $0$.  Applying these gates to all nonzero entries of $R$ carries the row space to $\operatorname{span}_{\mathbb F_2}\{e_1,\dots,e_r\}$.

Let $U$ be the product of the swaps and CNOT gates realizing these column operations.  The induced action of $U$ on support vectors is an invertible linear map sending $\mathcal T$ onto $\operatorname{span}_{\mathbb F_2}\{e_1,\dots,e_r\}$.  Therefore
\begin{equation}
U\langle S\rangle U^\dagger = \langle Z_1,\dots,Z_r\rangle.
\end{equation}

For the second assertion, write $A_i=\sum_{u\in\mathcal T} c_{i,u}Z^u$ with real coefficients.  The construction above gives a bijection between $\mathcal T$ and $\mathbb F_2^r$: for each $u\in\mathcal T$ there is a unique $\alpha=(\alpha_1,\ldots,\alpha_r)\in\mathbb F_2^r$ such that
\begin{equation}
UZ^uU^\dagger=Z_1^{\alpha_1}\cdots Z_r^{\alpha_r}\otimes\id_{2^\ell}.
\end{equation}
Define $\widehat c_{i,\alpha}=c_{i,u}$ for this corresponding $u$. Then termwise conjugation gives
\begin{equation}
\begin{aligned}
UA_iU^\dagger
&=
\sum_{\alpha\in\mathbb F_2^r}\widehat c_{i,\alpha}
Z_1^{\alpha_1}\cdots Z_r^{\alpha_r}\otimes\id_{2^\ell}\\
&=
f_i(Z_1,\dots,Z_r)\otimes\id_{2^\ell},
\end{aligned}
\end{equation}
where
\begin{equation}
f_i(z_1,\dots,z_r)
:=
\sum_{\alpha\in\mathbb F_2^r}\widehat c_{i,\alpha}
z_1^{\alpha_1}\cdots z_r^{\alpha_r}.
\end{equation}
\end{proof}

Hence the Hilbert space factors as
\begin{equation}
\mathcal H\ \cong\ \mathcal H_{\mathrm{sect}}\otimes\mathcal H_{\mathrm{deg}}
\cong
(\C^2)^{\!\otimes r}\otimes(\C^2)^{\!\otimes \ell}.
\end{equation}
The degeneracy factor has dimension $\dim\mathcal H_{\mathrm{deg}}=M=2^\ell$.  We label the sectors by $\sigma=(\sigma_1,\dots,\sigma_r)\in\{\pm1\}^r$, where the components of $\sigma$ are the joint eigenvalues of $(Z_1,\dots,Z_r)$.  The corresponding sector projectors are
\begin{equation}
\Pi_\sigma = \ket{\sigma}\!\bra{\sigma}\otimes \id_M
\end{equation}
and the scalar eigenvalues $a_i(\sigma)\in\R$ are determined by $f_i(Z_1,\dots,Z_r)\ket{\sigma}=a_i(\sigma)\ket{\sigma}$.  In this basis, every $A_i$ has the spectral form
\begin{equation}
A_i\ =\ \sum_{\sigma\in\{\pm1\}^r} a_i(\sigma)\, \Pi_\sigma. \label{eq:sector-decomp}
\end{equation}

\subsection{Commuting-sector polytope and explicit projectors}
\label{subsec:polytope}

\begin{theorem}[Commuting-sector polytope]
\label{thm:polytope}
Let $A=(A_1,\dots,A_m)$ be a tuple of Hermitian operators in $\mathfrak{D}_S$, and let $M=2^{n-r}$ be the common sector multiplicity.  Write
\begin{equation}
\mathbf a(\sigma):=\bigl(a_1(\sigma),\ldots,a_m(\sigma)\bigr).
\end{equation}
Then, whenever $M\ge K$,
\begin{equation}
\Lambda_K(A)
=
\conv\bigl\{\mathbf a(\sigma):\ \sigma\in\{\pm1\}^r\bigr\}.
\label{eq:commuting-polytope}
\end{equation}
Moreover, in the Clifford coordinates of Lemma~\ref{lem:cliff}, every point of this convex hull has an explicit rank-$K$ realization.  For every probability vector $p=\{p_\sigma\}_{\sigma\in\{\pm1\}^r}$ with $p_\sigma\ge 0$ and $\sum_\sigma p_\sigma=1$, and every rank-$K$ projector $R_K$ on $\mathcal H_{\mathrm{deg}}$, define
\begin{equation}
P_{p,K}\ :=\ \ket{\psi_p}\!\bra{\psi_p}\otimes R_K,
\qquad \ket{\psi_p}:=\sum_{\sigma}\sqrt{p_\sigma}\,\ket{\sigma}.
\label{eq:explicit-commuting-projector}
\end{equation}
Then, for $i=1,\ldots,m$,
\begin{equation}
P_{p,K}(UA_iU^\dagger)P_{p,K}
=\lambda_i(p)P_{p,K},
\end{equation}
where
\begin{equation}
\lambda_i(p):=\sum_{\sigma}p_\sigma a_i(\sigma).
\end{equation}
Thus $U^\dagger P_{p,K}U$ is the corresponding realizing projector for the original tuple.  In particular, when $K=M$ one has $R_M=\id_M$, so
\begin{equation}
P_{p,M}=\ket{\psi_p}\!\bra{\psi_p}\otimes\id_M.
\end{equation}
\end{theorem}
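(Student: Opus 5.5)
The plan is to reduce everything to the Clifford normal form of Lemma~\ref{lem:cliff} and then invoke Corollary~\ref{cor:common-spectral-polytope}, while also checking the explicit projector by direct computation. By the unitary invariance of $\Lambda_K$, we have $\Lambda_K(A)=\Lambda_K(UAU^\dagger)$, and a realizing projector $P$ for the conjugated tuple gives the realizing projector $U^\dagger PU$ for the original tuple. So it suffices to work with $UA_iU^\dagger=\sum_\sigma a_i(\sigma)\Pi_\sigma$, as in Eq.~\eqref{eq:sector-decomp}. This is a commuting tuple whose joint spectral projectors $\Pi_\sigma=\ket{\sigma}\!\bra{\sigma}\otimes\id_M$ each have rank $M$.

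One subtlety is that distinct sectors $\sigma\ne\sigma'$ may share the same value vector $\mathbf a(\sigma)=\mathbf a(\sigma')$. In that case the genuine joint spectral projectors are sums of several $\Pi_\sigma$. Their ranks are then still at least $M\ge K$, and the convex hull of the distinct values equals $\conv\{\mathbf a(\sigma)\}$. Alternatively, one can apply Theorem~\ref{thm:block-multiplicity} directly with $\nu=2^r$ blocks, $\mathcal H_\sigma=\C$, $B_{i,\sigma}=a_i(\sigma)$ and $m_\sigma=M$, since that theorem does not require the blocks to have distinct values. Either way, Eq.~\eqref{eq:block-multiplicity-equality} gives $\Lambda_K(A)=\mathfrak W(B)=\conv\{\mathbf a(\sigma)\}$, because states on one-dimensional blocks reduce to probability weights.

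For the explicit realization, I compute directly rather than relying on the abstract $\eta_j$ construction, which used mutually orthogonal spectator labels in different blocks. Write $UA_iU^\dagger=D_i\otimes\id_M$ with $D_i=\sum_\sigma a_i(\sigma)\ket{\sigma}\!\bra{\sigma}$. Then
\[
P_{p,K}(D_i\otimes\id_M)P_{p,K}=\bigl(\ket{\psi_p}\!\bra{\psi_p}D_i\ket{\psi_p}\!\bra{\psi_p}\bigr)\otimes R_K^2=\langle\psi_p|D_i|\psi_p\rangle\,P_{p,K},
\]
and $\langle\psi_p|D_i|\psi_p\rangle=\sum_\sigma p_\sigma a_i(\sigma)=\lambda_i(p)$. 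This is exactly the product construction of Corollary~\ref{thm:active-spectator}, with active factor $\mathcal H_{\mathrm{sect}}$ and spectator factor $\mathcal H_{\mathrm{deg}}$. Since $\ket{\psi_p}$ is a unit vector and $\rank R_K=K\le M$, the operator $P_{p,K}$ has rank $K$.

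As $p$ ranges over the simplex, the vectors $\lambda(p)$ cover the whole convex hull. This gives the inclusion $\supseteq$ constructively. The reverse inclusion follows from Eq.~\eqref{eq:block-multiplicity-upper}, or equivalently from the sandwich in Eq.~\eqref{eq:active-spectator-sandwich}: the joint numerical range of the diagonal tuple $(D_i)$ is already the convex hull, because coherent superpositions of sectors give exactly the weights $p_\sigma=|c_\sigma|^2$. The case $K=M$ is immediate, since the only rank-$M$ projector on $\mathcal H_{\mathrm{deg}}$ is $R_M=\id_M$.

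The only real obstacle is bookkeeping: possibly coincident sector values, and the threshold $M\ge K$. The active/spectator route avoids the stronger requirement $M\ge K\dim\mathcal H_{\mathrm{sect}}$, because a pure superposition over sectors already realizes every mixture of diagonal values.
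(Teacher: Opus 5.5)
Your proposal is correct and follows essentially the same route as the paper: you reduce to the Clifford normal form of Lemma~\ref{lem:cliff}, obtain the equality from Corollary~\ref{cor:common-spectral-polytope} (equivalently, Theorem~\ref{thm:block-multiplicity} with one-dimensional blocks of multiplicity $M\ge K$), and check $P_{p,K}$ by the same direct tensor-product computation. Your treatment of coincident sector values, and your alternative upper bound via the active/spectator sandwich together with Corollary~\ref{cor:convex-active-range}, are valid refinements of points the paper leaves implicit.
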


\begin{proof}
By Lemma~\ref{lem:cliff}, the tuple has one-dimensional active blocks indexed by $\sigma$, each with multiplicity $M$.  Since $M\ge K$, Corollary~\ref{cor:common-spectral-polytope} gives the equality in Eq.~\eqref{eq:commuting-polytope}.  It remains only to record the particularly simple Pauli realization.  In the Clifford coordinates, Eq.~\eqref{eq:sector-decomp} gives
\begin{equation}
\bra{\psi_p}f_i(Z_1,\ldots,Z_r)\ket{\psi_p}
=
\sum_{\sigma}p_\sigma a_i(\sigma).
\end{equation}
Tensoring with $R_K$ gives
\begin{equation}
P_{p,K}(U A_i U^\dagger)P_{p,K}
=
\left(\sum_\sigma p_\sigma a_i(\sigma)\right)P_{p,K}.
\end{equation}
Thus $P_{p,K}$ realizes the convex combination determined by $p$; conjugating back by $U^\dagger$ gives the stated projector for the original tuple.
\end{proof}

\subsection{Consequences and a concrete example}
\label{subsec:consequences}

We next record a Carath\'eodory sparsity consequence and then present a three-qubit rank-two example, whose tetrahedral range will serve as the unrestricted baseline in Sec.~\ref{sec:collapse}.

\begin{proposition}[Carath\'eodory sparsity]
Under the hypotheses and notation of Theorem~\ref{thm:polytope}, if $M\ge K$, then any point of $\Lambda_K(A)\subset\R^m$ can be realized with a probability vector $p$ supported on at most $m{+}1$ sectors.
Equivalently, one may choose
\begin{equation}
\ket{\psi_p}=\sum_{j=1}^{q}\sqrt{w_j}\,\ket{\sigma^{(j)}},
\qquad
q\le m+1,
\end{equation}
for suitable sectors $\sigma^{(j)}$ and weights $w_j$.
\end{proposition}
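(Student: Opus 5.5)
The plan is to combine Theorem~\ref{thm:polytope} with Carath\'eodory's theorem in $\R^m$. Fix $\lambda\in\Lambda_K(A)$. Since $M\ge K$, Eq.~\eqref{eq:commuting-polytope} gives
\[
\lambda\in\conv\bigl\{\mathbf a(\sigma):\sigma\in\{\pm1\}^r\bigr\}.
\]
This convex hull is generated by finitely many points of $\R^m$. Carath\'eodory's theorem then yields at most $m+1$ sectors $\sigma^{(1)},\ldots,\sigma^{(q)}$, $q\le m+1$, and weights $w_j\ge0$ with $\sum_j w_j=1$, such that
\[
\lambda=\sum_{j=1}^q w_j\,\mathbf a(\sigma^{(j)}).
\]

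Next, I define the probability vector $p$ by $p_{\sigma^{(j)}}=w_j$, with $p_\sigma=0$ on all other sectors. If the $\sigma^{(j)}$ are not distinct, I merge their weights, which only lowers $q$. By construction $\lambda_i(p)=\sum_\sigma p_\sigma a_i(\sigma)=\lambda_i$. The explicit realization in Theorem~\ref{thm:polytope} then shows that
\[
P_{p,K}=\ket{\psi_p}\!\bra{\psi_p}\otimes R_K
\]
realizes $\lambda$, and that $U^\dagger P_{p,K}U$ realizes it for the original tuple. Here $\ket{\psi_p}=\sum_j\sqrt{w_j}\,\ket{\sigma^{(j)}}$, which is exactly the displayed sparse form, and $R_K$ exists because $M\ge K$.

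There is no serious obstacle. The one point to handle carefully is that Carath\'eodory's theorem applies to the finite generating set $\{\mathbf a(\sigma)\}$ in $\R^m$, so the bound is $m+1$ and not $2^r$. One could sharpen this to $\dim\operatorname{aff}\{\mathbf a(\sigma)\}+1$, but that is not needed. The ``equivalently'' clause is just a restatement of the same fact through the correspondence $p\leftrightarrow\ket{\psi_p}$.
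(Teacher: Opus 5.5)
Your proposal is correct and follows essentially the same route as the paper: Theorem~\ref{thm:polytope} places the point in the convex hull of the finitely many sector vectors $\mathbf a(\sigma)$, Carath\'eodory's theorem selects at most $m+1$ of them, and the resulting weights are inserted into the explicit projector $P_{p,K}$. Your additional remarks, on merging repeated sectors and on $R_K$ existing because $M\ge K$, are harmless refinements of the same argument.
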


\begin{proof}
By Theorem~\ref{thm:polytope}, every point of $\Lambda_K(A)$ lies in the convex hull of finitely many points in $\R^m$.  Carath\'eodory's theorem expresses it as a convex combination of at most $m+1$ of those sector points.  Using these weights in the projector $P_{p,K}$ from Eq.~\eqref{eq:explicit-commuting-projector} gives the claimed realization.
\end{proof}

As a concrete three-qubit illustration, take $n=3$ with $S=\{Z_1,Z_2\}$. Then $\mathcal T=\operatorname{span}_{\mathbb F_2}\{(1,0,0),(0,1,0)\}$, so $r=2$, $\ell=1$, and $M=2$. Let
\begin{equation}
E^{\mathrm{tet}}:=(Z_1,Z_2,Z_1Z_2).
\end{equation}
The Clifford normal form is already achieved: the first two qubits form the sector register, while the third qubit is the degeneracy register.  For $i,j\in\{0,1\}$, set
\begin{equation}
\Pi_{ij}:=\ket{ij}\!\bra{ij}\otimes\id_2.
\end{equation}
On the sector $\Pi_{ij}$ the three operators have joint eigenvalue
\begin{equation}
\mathbf e_{ij}:=\bigl((-1)^i,\,(-1)^j,\,(-1)^{i+j}\bigr).
\end{equation}
Thus the four sector values are
\begin{equation}
\begin{aligned}
\mathbf e_{00}&=(1,1,1),&
\mathbf e_{01}&=(1,-1,-1),\\
\mathbf e_{10}&=(-1,1,-1),&
\mathbf e_{11}&=(-1,-1,1),
\end{aligned}
\end{equation}
and Theorem~\ref{thm:polytope} gives
\begin{equation}
\Lambda_2(E^{\mathrm{tet}})
=
\conv\left\{
\begin{array}{l}
(1,1,1),\ (1,-1,-1),\\
(-1,1,-1),\ (-1,-1,1)
\end{array}
\right\}.
\end{equation}

Let us spell out the projector realizing an arbitrary point of this tetrahedron.  Given a target
$(x,y,z)$, define
\begin{equation}
\begin{aligned}
p_{00}&=\frac{1+x+y+z}{4}, & p_{01}&=\frac{1+x-y-z}{4},\\
p_{10}&=\frac{1-x+y-z}{4}, & p_{11}&=\frac{1-x-y+z}{4},
\end{aligned}
\end{equation}
so that
\begin{equation}
\begin{aligned}
\sum_{i,j}p_{ij}&=1,&
\sum_{i,j}p_{ij}(-1)^i&=x,\\
\sum_{i,j}p_{ij}(-1)^j&=y,&
\sum_{i,j}p_{ij}(-1)^{i+j}&=z.
\end{aligned}
\end{equation}
The condition that $(x,y,z)$ lies in the displayed tetrahedron is exactly the nonnegativity of these four barycentric coordinates.  Equivalently, the four facet inequalities are
\begin{equation}
\begin{aligned}
1+x+y+z&\ge0,&
1+x-y-z&\ge0,\\
1-x+y-z&\ge0,&
1-x-y+z&\ge0.
\end{aligned}
\end{equation}
When these inequalities hold, set
\begin{equation}
\ket{\psi_p}:=\sum_{i,j\in\{0,1\}}\sqrt{p_{ij}}\ket{ij},
\qquad
P_p:=\ket{\psi_p}\!\bra{\psi_p}\otimes\id_2.
\end{equation}
Since $\sum_{i,j}p_{ij}=1$, the vector $\ket{\psi_p}$ is normalized and $P_p$ is a rank-$2$ projector.  Moreover, the three entries of $E^{\mathrm{tet}}$ act diagonally on the sector register and trivially on the degeneracy register. Therefore
\begin{equation}
\begin{aligned}
P_pZ_1P_p&=\left(\sum_{i,j}p_{ij}(-1)^i\right)P_p=xP_p,\\
P_pZ_2P_p&=\left(\sum_{i,j}p_{ij}(-1)^j\right)P_p=yP_p,\\
P_p(Z_1Z_2)P_p&=\left(\sum_{i,j}p_{ij}(-1)^{i+j}\right)P_p=zP_p.
\end{aligned}
\end{equation}
Thus $(x,y,z)\in\Lambda_2(E^{\mathrm{tet}})$.  Conversely, if a rank-$2$ projector $P$ has compression vector $(\lambda_1,\lambda_2,\lambda_3)$ for $E^{\mathrm{tet}}$, then taking normalized traces in the common sector decomposition gives the sector weights
\begin{equation}
\mu_{ij}:=\frac{\Tr(P\Pi_{ij})}{\Tr(P)}
\end{equation}
which satisfy $\mu_{ij}\ge0$ and $\sum_{i,j}\mu_{ij}=1$ because $\sum_{i,j}\Pi_{ij}=\id$.  Hence they form a probability vector and satisfy
\begin{equation}
(\lambda_1,\lambda_2,\lambda_3)=\sum_{i,j}\mu_{ij}\mathbf e_{ij}.
\end{equation}
Hence no point outside the tetrahedron can occur.

\subsection{Beyond the sector-degeneracy threshold}
\label{subsec:rank-sensitive-summary}

Theorem~\ref{thm:polytope} identifies the full sector polytope only while the common eigensector multiplicity accommodates the compression rank, $M\ge K$.  Once $M<K$, the upper inclusion in Theorem~\ref{thm:block-multiplicity} still places the range inside the convex hull of the joint eigenvalue data, but that convex hull need not be attainable.  We call this the \emph{rank-over-sector-degeneracy} regime.  Unlike limited spectator multiplicity, which may retain every pure-state expectation tuple without filling its convex hull, this regime can shrink, collapse, or empty the range.

The full local-$Z$ Pauli triple on three qubits already exhibits all three outcomes.  Set
\begin{equation}
E^Z:=(Z_1,Z_2,Z_3).
\label{eq:three-Z-tuple}
\end{equation}
Its eight joint eigenvalue tuples are the vertices of the cube, and every common eigenspace is one-dimensional.  Thus Theorem~\ref{thm:polytope} gives the cube only at rank one; at higher ranks the exact ranges are as follows.

\begin{proposition}[Rank-sensitive three-qubit diagonal ranges]
\label{prop:three-Z-ranges}
For $E^Z$ in Eq.~\eqref{eq:three-Z-tuple},
\begin{equation}
\begin{aligned}
\Lambda_1(E^Z)
&=[-1,1]^3,\\
\Lambda_2(E^Z)
&=\{(x,y,z)\in\R^3:\ |x|+|y|+|z|\le1\},\\
\Lambda_3(E^Z)
&=\Lambda_4(E^Z)=\{(0,0,0)\},\\
\Lambda_K(E^Z)
&=\emptyset\qquad(K\ge5).
\end{aligned}
\label{eq:three-Z-ranges}
\end{equation}
\end{proposition}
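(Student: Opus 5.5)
The plan is to prove the upper bounds with an interlacing argument applied to suitable diagonal combinations of $Z_1,Z_2,Z_3$, and to prove attainment with explicit subspaces spanned by computational-basis superpositions. The rank-one statement is the case $K=M=1$ of Theorem~\ref{thm:polytope}: here $r=3$, $\ell=0$, and the eight sector values are the vertices of $[-1,1]^3$.

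\emph{Upper bounds.} If $PHP=cP$ with $\rank P=K$, then $PHP$ restricted to $\operatorname{Ran}(P)$ is $c\,\id_K$. The Courant--Fischer min-max principle (Cauchy interlacing) then gives
\begin{equation}
\lambda_{N-K+1}^{\downarrow}(H)\le c\le\lambda_K^{\downarrow}(H),
\end{equation}
where the eigenvalues of $H$ are listed in decreasing order. Scalar compression of the $Z_j$ gives scalar compression of every real combination $H=\sum_j s_jZ_j$, with $c=\sum_j s_j\lambda_j$. We use the following choices of $H$.
\begin{itemize}
\item For $K=2$, take $H=s_1Z_1+s_2Z_2+s_3Z_3$ with $s\in\{\pm1\}^3$. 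Its spectrum is $3,1,1,1,-1,-1,-1,-3$, so the second largest eigenvalue is $1$. Hence $s\cdot\lambda\le1$ for every sign vector $s$, which is exactly $|x|+|y|+|z|\le1$.
\item For $K=3,4$, take $H=Z_i\pm Z_j$. Its spectrum is $2,2,0,0,0,0,-2,-2$, so both the third and the fourth largest and smallest eigenvalues are $0$. This forces $\lambda_i\pm\lambda_j=0$ for all $i\neq j$, hence $\lambda=0$.
\item For $K\ge5$, take $H=Z_1$. Its fifth largest eigenvalue is $-1$ and its fifth smallest is $+1$. We would need $1\le c\le-1$, which is impossible, so the range is empty.
\end{itemize}

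\emph{Attainment at $K=3,4$.} Use the four complementary-pair vectors $(\ket{x}+\ket{\bar x})/\sqrt2$ with $x\in\{000,001,010,100\}$, where $\bar x$ is the bitwise complement. Each $Z_j$ maps such a vector to $(\ket{x}-\ket{\bar x})/\sqrt2$. That image is orthogonal to all four vectors, so every compression vanishes. Taking all four vectors gives rank $4$, and any three of them give rank $3$. This realizes the origin at both ranks.

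\emph{Attainment at $K=2$ (the main step).} The plan is to split by parity. Choose $v_1$ supported on the even-parity strings $\{000,011,101,110\}$, whose $Z$-value vectors are the vertices of the tetrahedron $T$ from the $E^{\mathrm{tet}}$ example. Choose $v_2$ supported on the odd-parity strings, whose $Z$-value vectors form $-T$. Take amplitudes $\sqrt{p_x}$ and $\sqrt{q_x}$ respectively. Because the supports are disjoint and every $Z_j$ is diagonal, $v_1\perp v_2$ and $\bra{v_1}Z_j\ket{v_2}=0$. The diagonal entries $\bra{v_1}Z_j\ket{v_1}$ and $\bra{v_2}Z_j\ket{v_2}$ are the means of $p$ and $q$ over the respective vertex sets. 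So we need $p$ and $q$ with a common mean $(x,y,z)$, which is possible exactly when $(x,y,z)\in T\cap(-T)$.

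The remaining check is that $T\cap(-T)$ is the octahedron. The facets of $T$ are $1+s\cdot\lambda\ge0$ for the sign vectors $s$ with $s_1s_2s_3=1$, and those of $-T$ are the same inequalities for $s_1s_2s_3=-1$. Together they give all eight inequalities $|s\cdot\lambda|\le1$, i.e.\ $|x|+|y|+|z|\le1$. The one point that needs care is that the $K=2$ upper bound matches this construction exactly, so no gap remains between the two.
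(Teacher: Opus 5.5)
Your proof is correct and follows essentially the paper's route. Attainment uses the same disjoint-support constructions: the even/odd parity tetrahedra $T$ and $-T$ at rank two, and the four antipodal pairs at ranks three and four. Your interlacing bound $s\cdot\lambda\le\lambda_K^{\downarrow}(s\cdot E^Z)$ is the spectral restatement of the paper's half-space bound in Proposition~\ref{prop:diagonal-bounds}, applied to the same half-spaces $\epsilon\cdot\lambda\le1$ and $\eta_p\lambda_p+\eta_q\lambda_q\le0$. For $K\ge5$, interlacing on $Z_1$ replaces the paper's direct $V_\pm$ dimension count to the same effect.

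One trivial slip: $Z_j$ sends $(\ket{x}+\ket{\bar x})/\sqrt2$ to $(-1)^{x_j}(\ket{x}-\ket{\bar x})/\sqrt2$, not exactly $(\ket{x}-\ket{\bar x})/\sqrt2$. This does not affect your orthogonality argument.
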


The rank-sensitive transition of Proposition~\ref{prop:three-Z-ranges} is illustrated in Fig.~\ref{fig:rank-sensitive-ranges}: as $K$ increases, the cube first shrinks to an octahedron, then collapses to the origin, and finally becomes empty.  The contrast with the tetrahedral tuple above isolates the role of sector multiplicity.  That tuple acts trivially on the third qubit and has $M=2$, sufficient to realize its rank-two polytope, whereas $E^Z$ resolves all three computational-basis coordinates and has $M=1$, so $K=2$ already lies beyond the full-polytope regime.  General diagonal bounds and the proof of Proposition~\ref{prop:three-Z-ranges} are given in Appendix~\ref{app:rank-sensitive}.

\begin{figure*}[t]
  \centering
  \includegraphics[width=\textwidth]{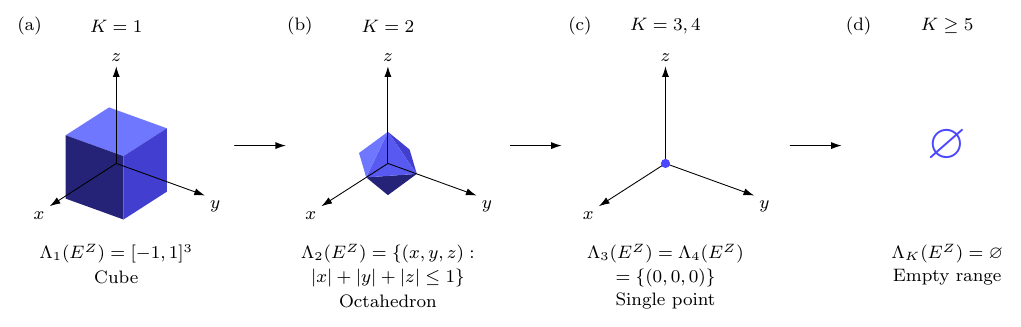}
  \caption{Rank-sensitive joint compression ranges for $E^Z=(Z_1,Z_2,Z_3)$, whose common eigensector multiplicity is $M=1$. At $K=M=1$ the range is the cube $[-1,1]^3$; once $K>M$, it is the octahedron $|x|+|y|+|z|\le1$ at rank two, collapses to the origin at ranks three and four, and is empty for $K\ge5$.}
  \label{fig:rank-sensitive-ranges}
\end{figure*}


\section{Subsystem codes and noncommuting geometry}
\label{sec:subsystem}

Corollary~\ref{thm:active-spectator} shows that, for a single active block with $D_{\mathrm{act}}=\dim\mathcal H_{\mathrm{act}}$, the condition $M\ge K$ realizes every pure-state expectation tuple, whereas $M\ge KD_{\mathrm{act}}$ realizes the full convex hull of these tuples.  Pauli subsystem codes supply this multiplicity algebraically: after a canonical Clifford reduction, the gauge algebra acts on the stabilizer and gauge degrees of freedom and trivially on the $2^k$-dimensional protected logical factor.  Thus, at the natural rank $K=2^k$, the logical subsystem is a base spectator space with $M=K$, and any unpinned stabilizer labels provide additional multiplicity.  The $[[4,1;1,2]]$ Bacon--Shor/compass block makes both thresholds explicit: stabilizer pinning changes $M$, producing a Bloch sphere when only pure-state realization is guaranteed and a filled Bloch ball once full convexification occurs; its lower-dimensional and mixed coordinate charts give disks, intervals, vertices, and product bodies.

\subsection{Canonical subsystem form}
\label{subsec:canonical}

To make the decomposition of a subsystem code explicit, we first use a Clifford unitary to separate its stabilizer, gauge, and protected logical degrees of freedom.

\begin{lemma}[Subsystem canonical form]
\label{lem:canonical}
Let \(\Gg\le\Pn\) be the operator gauge group of an \([[n,k;t,s]]\) Pauli subsystem code,
and let \(\Ss\) be the associated phase-fixed stabilizer subgroup from
Eq.~\eqref{eq:phase-fixed-stabilizer}.  There exists an \(n\)-qubit Clifford unitary \(U\) such that
\begin{equation}
\label{eq:canonical-form}
\begin{aligned}
U\,\Gg\,U^\dagger
&=
\Big\langle i\id,\ Z_{1},\dots,Z_{s},\
X_{s+1},Z_{s+1},\dots,X_{s+t},Z_{s+t}\Big\rangle,\\
U\Ss U^\dagger
&=\langle Z_1,\ldots,Z_s\rangle,
\end{aligned}
\end{equation}
and, consequently, for every $g\in \Gg$,
\begin{equation}
\label{eq:acts-trivially-on-logicals}
U g U^\dagger \;=\; O_g \otimes \id_{2^k},
\end{equation}
where $O_g$ acts on the first $s+t$ qubits.  In these coordinates the last $k$ qubits are the logical subsystem, and
\begin{equation}
U\Cc = \operatorname{span}\{\ket{0}^{\otimes s}\}\otimes(\C^2)^{\otimes t}\otimes(\C^2)^{\otimes k}.
\label{eq:canonical-codespace}
\end{equation}
\end{lemma}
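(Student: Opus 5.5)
The plan is to reduce the statement to the standard symplectic Gram--Schmidt argument over $\mathbb F_2$, followed by the fact that the Clifford group acts transitively on symplectic bases of $\Pnb\cong\mathbb F_2^{2n}$ with prescribed signs. First I will work at the level of labels. The label space $\Gb$ has the radical $\Sb$ of dimension $s$, and $\Gb/\Sb$ is a symplectic space of dimension $2t$. I will choose a basis $\bar s_1,\ldots,\bar s_s$ of $\Sb$, namely the labels of the chosen lifts $S_1,\ldots,S_s$. I will then choose labels $\bar x_{g_j},\bar z_{g_j}$, $j=1,\ldots,t$, in $\Gb$ whose images form a symplectic basis of $\Gb/\Sb$; these are the labels of the paired generators in Eq.~\eqref{eq:gauge-generators-abstract}. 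Since the $\bar s_i$ lie in the radical, the whole family $\bar s_1,\ldots,\bar s_s,\bar x_{g_1},\bar z_{g_1},\ldots,\bar x_{g_t},\bar z_{g_t}$ is linearly independent and has the required commutation pattern.

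Next I will extend this family to a full symplectic basis of $\mathbb F_2^{2n}$. The isotropic vectors $\bar s_i$ must be completed by partners $\bar d_i$ satisfying $\omega(\bar s_i,\bar d_j)=\delta_{ij}$ and $\bar d_i\perp_\omega \bar d_j$, and also orthogonal to all gauge pairs. This is possible by working inside the symplectic complement $W:=\operatorname{span}\{\bar x_{g_j},\bar z_{g_j}\}^{\perp_\omega}$. The space $W$ is nondegenerate of dimension $2(n-t)$, and it contains the isotropic subspace $\Sb$. Standard symplectic linear algebra (Witt extension) then gives hyperbolic partners for $\Sb$ inside $W$. After that, the complement of these $s$ hyperbolic pairs inside $W$ is nondegenerate of dimension $2(n-t-s)=2k$, and I will pick a symplectic basis $\bar x_{L_l},\bar z_{L_l}$ of it. The result is a symplectic basis of $\mathbb F_2^{2n}$ in which $\bar s_i$ occupies the ``$Z$'' slot of pair $i$, the gauge pairs occupy pairs $s+1,\ldots,s+t$, and the logical pairs occupy the last $k$ pairs.

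Then I will invoke the standard fact that the Clifford group surjects onto $Sp(2n,\mathbb F_2)$. Moreover, composing with a Pauli conjugation adjusts the sign of each Hermitian lift independently. This yields a Clifford $U$ with $US_iU^\dagger=Z_i$ (signs included, so that $\Ss$ maps to $\langle Z_1,\ldots,Z_s\rangle$), with $UX_{g_j}U^\dagger=\pm X_{s+j}$ and $UZ_{g_j}U^\dagger=\pm Z_{s+j}$, and with the logical labels mapped to the last $k$ qubits. Since $i\id$ is fixed by conjugation and the signs of the gauge images are absorbed by the phases in $\Gg$, Eq.~\eqref{eq:canonical-form} follows. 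Every $g\in\Gg$ is a product of these generators up to phase, so $UgU^\dagger$ is supported on the first $s+t$ qubits, which gives Eq.~\eqref{eq:acts-trivially-on-logicals}. Finally, $U\Cc$ is the common $+1$ eigenspace of $Z_1,\ldots,Z_s$, which is exactly Eq.~\eqref{eq:canonical-codespace}.

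The main obstacle is the extension step: producing the partners $\bar d_i$ for the radical simultaneously orthogonal to the gauge pairs, and checking that the leftover space has dimension exactly $2k$. I will handle this cleanly by inducting on $s$ inside $W$. At each step, nondegeneracy of $W$ supplies some $\bar d$ with $\omega(\bar s_i,\bar d)=1$ and $\omega(\bar s_{i'},\bar d)=0$ for $i'\ne i$. I then pass to the complement of the new pair, and the dimension count $2n-2t-2s=2k$ follows from $s=n-k-t$.
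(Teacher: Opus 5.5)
Your proposal is correct and follows essentially the same route as the paper: complete the labels of the stabilizer generators and gauge pairs to a symplectic basis of $\mathbb F_2^{2n}$, realize the resulting symplectic map by a Clifford unitary, and fix the stabilizer signs with single-qubit Pauli corrections. Your inductive Witt extension inside the symplectic complement of the gauge pairs spells out the ``standard symplectic basis extension theorem'' that the paper only cites. Letting $-\id\in\Gg$ absorb the gauge signs, instead of removing them as the paper does, is a harmless variation.
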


\begin{proof}
We use the binary symplectic representation of the phase-free label space \(\Pnb\).  Thus the label \(\pi(P)\) of an \(n\)-qubit Pauli \(P\) is represented by a vector \(v_P=(\mathbf x\mid \mathbf z)\in\mathbb F_2^{2n}\), and commutation is encoded by
\begin{equation}
\omega(v_P,v_Q)
:=
\mathbf x_P\!\cdot\!\mathbf z_Q+\mathbf z_P\!\cdot\!\mathbf x_Q
\qquad (\operatorname{mod} 2).
\end{equation}
The standard vectors $e_j$ and $e_{n+j}$ represent $X_j$ and $Z_j$, respectively, and form the usual symplectic pairs.

Choose the paired Hermitian generating set in Eq.~\eqref{eq:gauge-generators-abstract} and write
\begin{equation}
\begin{alignedat}{2}
u_i&:=v_{S_i} &&\qquad (i=1,\ldots,s),\\
v_j&:=v_{X_{g_j}},\qquad
w_j:=v_{Z_{g_j}} &&\qquad (j=1,\ldots,t).
\end{alignedat}
\end{equation}
The defining commutation relations of a subsystem gauge group say exactly that
\begin{equation}
\omega(u_i,u_{i'})=\omega(u_i,v_j)=\omega(u_i,w_j)=0,
\end{equation}
and
\begin{equation}
\omega(v_j,w_{j'})=\delta_{jj'},\qquad
\omega(v_j,v_{j'})=\omega(w_j,w_{j'})=0.
\end{equation}
Equivalently, the symplectic Gram matrix of the ordered list
\begin{equation}
u_1,\ldots,u_s,\ v_1,w_1,\ldots,v_t,w_t
\end{equation}
is
\begin{equation}
\begin{pmatrix}0_s&0\\0&J_t\end{pmatrix},
\qquad
J_t=\bigoplus_{j=1}^t
\begin{pmatrix}0&1\\1&0\end{pmatrix}.
\end{equation}
The listed vectors are linearly independent because the chosen Pauli generators are independent modulo phase.

By the standard symplectic basis extension theorem over $\mathbb F_2$, this partial symplectic basis can be completed as follows.  There exist vectors
\begin{equation}
\tilde u_1,\ldots,\tilde u_s,\qquad a_1,b_1,\ldots,a_k,b_k,
\qquad k=n-s-t,
\end{equation}
such that the ordered list
\begin{equation}
\begin{gathered}
\tilde u_1,u_1,\ldots,\tilde u_s,u_s,\,
v_1,w_1,\ldots,v_t,w_t,\\
a_1,b_1,\ldots,a_k,b_k
\end{gathered}
\end{equation}
is a symplectic basis of $\mathbb F_2^{2n}$.  In particular, $\omega(\tilde u_i,u_{i'})=\delta_{ii'}$ and $\omega(a_\ell,b_{\ell'})=\delta_{\ell\ell'}$, with all pairings between distinct displayed pairs equal to zero.

Define a linear map $\phi:\mathbb F_2^{2n}\to\mathbb F_2^{2n}$ on this adapted basis by
\begin{equation}
\begin{alignedat}{3}
&\tilde u_i\mapsto e_i,\qquad &&u_i\mapsto e_{n+i}
  \qquad&&(i=1,\ldots,s),\\
&v_j\mapsto e_{s+j},\qquad &&w_j\mapsto e_{n+s+j}
  \qquad&&(j=1,\ldots,t),\\
&a_\ell\mapsto e_{s+t+\ell},\qquad
&&b_\ell\mapsto e_{n+s+t+\ell}
  \qquad&&(\ell=1,\ldots,k).
\end{alignedat}
\end{equation}
Because $\phi$ sends a symplectic basis to the standard symplectic basis, it lies in $\mathrm{Sp}(2n,\mathbb F_2)$.  The Clifford representation surjects onto this symplectic group, so there is an $n$-qubit Clifford unitary $V$ inducing $\phi$.  Hence
\begin{equation}
\begin{alignedat}{2}
VS_iV^\dagger&=\pm Z_i &&\qquad (i=1,\ldots,s),\\
VX_{g_j}V^\dagger&=\pm X_{s+j} &&\qquad (j=1,\ldots,t),\\
VZ_{g_j}V^\dagger&=\pm Z_{s+j} &&\qquad (j=1,\ldots,t).
\end{alignedat}
\end{equation}

The remaining signs can be removed without changing the symplectic action.  If the image of $S_i$ has sign $-1$, left-multiply $V$ by $X_i$, which flips $Z_i$ and commutes with all other canonical generators in the displayed list.  Similarly, left multiplication by $Z_{s+j}$ flips the sign of $X_{s+j}$ without changing $Z_{s+j}$, and left multiplication by $X_{s+j}$ flips the sign of $Z_{s+j}$ without changing $X_{s+j}$.  Applying these independent single-qubit Pauli corrections gives a Clifford unitary $U$ satisfying
\begin{equation}
US_iU^\dagger=Z_i,\qquad
UX_{g_j}U^\dagger=X_{s+j},\qquad
UZ_{g_j}U^\dagger=Z_{s+j}.
\end{equation}
Together with the scalar generator \(i\id\), which is fixed by conjugation, these operators generate \(U\Gg U^\dagger\).  This proves both identities in Eq.~\eqref{eq:canonical-form}.

In this canonical form every generator of $U\Gg U^\dagger$ acts trivially on qubits $s+t+1,\ldots,n$.  Therefore every product of such generators also acts trivially on those qubits.  Thus, for each $g\in\Gg$,
\begin{equation}
UgU^\dagger=O_g\otimes \id_{2^k},
\end{equation}
where $O_g$ acts on the first $s+t$ qubits.  This is Eq.~\eqref{eq:acts-trivially-on-logicals}.

Finally, the transformed phase-fixed stabilizer subgroup is \(\langle Z_1,\ldots,Z_s\rangle\); the full operator center is this subgroup times \(\Phin\). The common \(+1\) eigenspace of the phase-fixed subgroup is precisely
\begin{equation}
\operatorname{span}\{\ket{0}^{\otimes s}\}\otimes(\C^2)^{\otimes t}\otimes(\C^2)^{\otimes k},
\end{equation}
which is Eq.~\eqref{eq:canonical-codespace}.
\end{proof}

\subsection{Logical subsystems as multiplicity spaces}
\label{subsec:subsystem-multiplicity}

\begin{theorem}[Subsystem gauge algebras as multiplicity representations]
\label{thm:subsystem-multiplicity}
Let $\Gg\le\Pn$ be the operator gauge group of an $[[n,k;t,s]]$ Pauli subsystem code, and let $E=(E_1,\ldots,E_m)$ be a tuple of Hermitian Pauli observables from $\Gg$.  There exist an $n$-qubit Clifford unitary $U$ and Hermitian operators $B_1,\ldots,B_m$ on the first $s+t$ qubits such that
\begin{equation}
UE_iU^\dagger=B_i\otimes\id_{2^k},
\qquad i=1,\ldots,m.
\end{equation}
Consequently, at the natural compression rank $K=2^k$,
\begin{equation}
\Lambda_1(B_1,\ldots,B_m)
\subseteq
\Lambda_{2^k}(E)
\subseteq
\conv\Lambda_1(B_1,\ldots,B_m).
\label{eq:subsystem-active-sandwich}
\end{equation}
In particular, every pure-state expectation tuple of the active operators is realized by a scalar compression on a full protected logical block.
\end{theorem}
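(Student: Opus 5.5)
The plan is to combine the subsystem canonical form of Lemma~\ref{lem:canonical} with the active/spectator reduction of Corollary~\ref{thm:active-spectator}.

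First, apply Lemma~\ref{lem:canonical} to the gauge group $\Gg$. This produces a Clifford unitary $U$ with $UgU^\dagger=O_g\otimes\id_{2^k}$ for every $g\in\Gg$, where $O_g$ acts on the first $s+t$ qubits. Each $E_i$ is a Hermitian element of $\Gg$, so we set $B_i:=O_{E_i}$. Hermiticity of $B_i$ follows because $UE_iU^\dagger$ is Hermitian and equals $B_i\otimes\id_{2^k}$. Taking the adjoint gives $B_i^\dagger\otimes\id=B_i\otimes\id$, and a partial trace over the logical factor divided by $2^k$ then yields $B_i^\dagger=B_i$. This establishes the displayed factorization.

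Second, use the unitary covariance of Sec.~\ref{subsec:jhrnr}: $\Lambda_{2^k}(E)=\Lambda_{2^k}(UE_1U^\dagger,\ldots,UE_mU^\dagger)$. The conjugated tuple has exactly the form of Eq.~\eqref{eq:act-spect-form}, with the following identifications:
\begin{equation}
\mathcal H_{\mathrm{act}}=(\C^2)^{\otimes(s+t)},\qquad
\mathcal H_{\mathrm{spect}}=(\C^2)^{\otimes k},\qquad
M=2^k .
\end{equation}
At $K=2^k$ the hypothesis $M\ge K$ holds with equality. Corollary~\ref{thm:active-spectator} therefore gives the sandwich~\eqref{eq:subsystem-active-sandwich} directly for the conjugated tuple, and hence for $E$.

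For the final sentence, take $R=\id_{2^k}$ in the product construction. This gives the projector $P_{\psi,\id}=\ket{\psi}\!\bra{\psi}\otimes\id_{2^k}$, which conjugated back by $U^\dagger$ has range $U^\dagger(\ket\psi\otimes(\C^2)^{\otimes k})$, a full copy of the protected logical factor. The only step needing care is the identification of $B_i$ as Hermitian operators with the claimed support, and Lemma~\ref{lem:canonical} already handles this. I expect no real obstacle beyond checking that the sign fixing in that lemma does not affect the factorization, and it does not, since signs are absorbed into $B_i$.
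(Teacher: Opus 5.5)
Your proposal is correct and follows the paper's proof essentially step for step. You use Lemma~\ref{lem:canonical} for the factorization, invoke unitary invariance of $\Lambda_{2^k}$, and apply Corollary~\ref{thm:active-spectator} with $M=K=2^k$. Your explicit Hermiticity check for the $B_i$ (via the partial trace) and your choice $R=\id_{2^k}$ for the final assertion only spell out details the paper leaves implicit.
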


\begin{proof}
Lemma~\ref{lem:canonical} gives a Clifford unitary $U$ for which every gauge operator acts trivially on the last $k$ logical qubits, and hence gives the displayed factorization for each $E_i$.  Unitary invariance yields
\begin{equation}
\Lambda_{2^k}(E)=\Lambda_{2^k}(UEU^\dagger).
\end{equation}
Applying Corollary~\ref{thm:active-spectator} with spectator dimension $M=2^k$ and compression rank $K=2^k$ proves the two inclusions and the final assertion.
\end{proof}

Thus the protected logical factor is precisely a multiplicity space whose dimension equals the natural compression rank.  The analysis below focuses on the sector-restricted range $\Lambda_K^{\Pi}(A)$.  For an orthogonal projector $\Pi$, this is the joint rank-$K$ numerical range of the restricted tuple $(\Pi A_1\Pi,\ldots,\Pi A_m\Pi)$ acting on $\Pi\mathcal H_n$, or equivalently the subset of $\Lambda_K(A)$ obtained by requiring the realizing projector $P$ to satisfy $P\le\Pi$.

\subsection{Noncommuting geometry: the \texorpdfstring{$[[4,1;1,2]]$}{[[4,1;1,2]]} atlas}
\label{subsec:atlas412}

As a concrete example, we consider the $[[4,1;1,2]]$ Bacon--Shor/compass subsystem block and compute its rank-$2$ joint numerical ranges under different stabilizer pinnings.  We begin with its canonical form.  For notational orientation, we arrange the four qubits as
\begin{equation}
\begin{matrix}1&2\\3&4\end{matrix}
\label{eq:412-layout}
\end{equation}
and take the phase-fixed stabilizer generators to be
\begin{equation}
\begin{alignedat}{2}
S_X&:=X_1X_2X_3X_4,\qquad & S_Z&:=Z_1Z_2Z_3Z_4.
\end{alignedat}
\end{equation}
For the noncommuting gauge qubit, choose
\begin{equation}
\begin{alignedat}{3}
X_g&:=X_1X_2,\qquad       & Z_g&:=Z_2Z_4,\qquad & Y_g&:= i X_g Z_g.
\end{alignedat}
\end{equation}
Let \(\Gg=\langle i\id,S_X,S_Z,X_g,Z_g\rangle\) be the phaseful operator gauge group.  Its symplectic radical is spanned by the labels of \(S_X,S_Z\), and its phase-fixed stabilizer subgroup is \(\Ss=\langle S_X,S_Z\rangle\).  The gauge pair \(X_g,Z_g\) anticommutes, and the codespace \(\Cc=\{\ket{\psi}:S_X\ket{\psi}=\ket{\psi}=S_Z\ket{\psi}\}\) has \(\dim\Cc=2^{k+t}=4\) and factors as \(\Cc\simeq\mathcal H_G\otimes\mathcal H_L\) with \(\dim\mathcal H_L=\dim\mathcal H_G=2\). By Lemma~\ref{lem:canonical} (with \(n=4\), \(k=1\), \(t=1\), \(s=2\)) there exists a 4-qubit Clifford \(U\) with
\begin{equation}
\begin{aligned}
U\Gg U^\dagger
&=\langle i\id,Z_1,Z_2,X_3,Z_3\rangle,\\
UgU^\dagger
&=O_g\otimes\id_2,\qquad g\in\Gg,
\end{aligned}
\end{equation}
so that qubits $1$ and $2$ carry stabilizer labels, qubit $3$ is the gauge qubit, and qubit $4$ is the logical qubit.  In the rest of this subsection we work in these canonical coordinates.

Stabilizer pinning is modeled by the sector projector $\Pi_\zeta$ associated with a subset $\mathcal{Z}\subseteq\{Z_1,Z_2\}$ and a sign assignment \(\zeta:\mathcal Z\to\{\pm1\}\).  For \(Z_j\in\mathcal Z\), write \(\zeta_j:=\zeta(Z_j)\).  Then
\begin{equation}
\Pi_\zeta=\prod_{Z_j\in\mathcal{Z}}\frac{1}{2}\bigl(\id+\zeta_j Z_j\bigr),
\label{eq:stabilizer-sector-projector}
\end{equation}
and we write $q:=|\mathcal{Z}|\in\{0,1,2\}$. On the restricted Hilbert space $\Pi_\zeta\mathcal H_4$, the gauge qubit remains active and the unpinned stabilizer-label qubits together with the logical qubit form a spectator space of dimension
\begin{equation}
M=2^{3-q},
\label{eq:atlas-m}
\end{equation}
so the noncentral axes restrict to
\begin{equation}
X_3\otimes\id_M,\qquad Y_3\otimes\id_M,\qquad Z_3\otimes\id_M
\end{equation}
after identifying $\Pi_\zeta\mathcal H_4\cong\C^2_{\mathrm{gauge}}\otimes\C^M_{\mathrm{spect}}$.

\begin{proposition}[Rank-2 atlas for the four-qubit subsystem block]
\label{prop:412-atlas}
Fix a stabilizer sector $\Pi_\zeta$ as above, and let $q=|\mathcal Z|$. In canonical coordinates the full noncentral gauge triple has rank-$2$ range
\begin{equation}
\Lambda_2^{\Pi_\zeta}(X_3,Y_3,Z_3)
=
\begin{cases}
\mathbb S^2,& q=2,\\
\mathbb B^3,& q=0,1,
\end{cases}
\end{equation}
where $\mathbb S^2=\{(x,y,z):x^2+y^2+z^2=1\}$ and $\mathbb B^3=\{(x,y,z):x^2+y^2+z^2\le 1\}$.

Every two-axis gauge chart is the corresponding unit disk.  For example,
\begin{equation}
\Lambda_2^{\Pi_\zeta}(X_3,Z_3)
=
\{(x,z)\in\R^2:x^2+z^2\le 1\}
\end{equation}
for every $q$.  Likewise every one-axis gauge chart is the interval
\begin{equation}
\Lambda_2^{\Pi_\zeta}(X_3)=[-1,1]
\end{equation}
and the same statement holds after replacing $X_3$ by $Y_3$ or $Z_3$.

The central stabilizer-label coordinates retain their commuting polytope:
\begin{equation}
\Lambda_2^{\Pi_\zeta}(Z_1,Z_2)
=
C_\zeta(\mathcal Z).
\label{eq:central-atlas}
\end{equation}
Here
\begin{equation}
C_\zeta(\mathcal Z)
:=
[-1,1]^2\cap
\bigcap_{Z_j\in\mathcal Z}
\{\eta\in\R^2:\eta_j=\zeta_j\}.
\end{equation}
Thus no pinning gives the square, one pinned stabilizer gives an interval, and two pinned stabilizers give the vertex $(\zeta_1,\zeta_2)$.
\end{proposition}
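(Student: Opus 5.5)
The plan is to work entirely in the canonical coordinates and view each sector-restricted problem as an ordinary joint rank-$2$ range on $\Pi_\zeta\mathcal H_4\cong\C^2_{\mathrm{gauge}}\otimes\C^M_{\mathrm{spect}}$, with $M=2^{3-q}$ from Eq.~\eqref{eq:atlas-m}. On this space the noncentral axes have the active/spectator form $X_3\otimes\id_M$, $Y_3\otimes\id_M$, $Z_3\otimes\id_M$. The single-qubit Pauli triple has pure-state joint range $\Lambda_1(X,Y,Z)=\mathbb S^2$, whose convex hull is $\mathbb B^3$. Corollary~\ref{thm:active-spectator} applies because $M\ge 2=K$ for every $q$, and gives the sandwich $\mathbb S^2\subseteq\Lambda_2^{\Pi_\zeta}(X_3,Y_3,Z_3)\subseteq\mathbb B^3$.

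For $q=0,1$ we have $M\ge 4=K\dim\C^2$. Theorem~\ref{thm:block-multiplicity}, through Eq.~\eqref{eq:block-active-full-convexification}, then realizes every qubit density matrix and yields the full ball $\mathbb B^3$.

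The case $q=2$, where $M=2=K$, is the main obstacle: we must show that no interior point of the ball is attained. The approach is to identify $\C^2\otimes\C^2$ with $2\times2$ matrices $\psi$, so that $B\otimes\id_2$ acts by left multiplication $\psi\mapsto B\psi$. Let $\psi_1,\psi_2$ be an orthonormal basis of $\operatorname{Ran}P$, so that $\Tr(\psi_j^\dagger\psi_l)=\delta_{jl}$. The scalar-compression conditions for $X_3,Y_3,Z_3$, together with orthonormality (which handles the identity), state that
\begin{equation}
\Tr(\psi_j^\dagger O\psi_l)=\delta_{jl}\Tr(\rho O)
\end{equation}
for every $2\times2$ matrix $O$, where $\rho=\tfrac12(\id+\lambda\cdot\sigma)$. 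These identities hold because $\{\id,X,Y,Z\}$ spans all $2\times 2$ matrices and both sides are linear in $O$. Since $\Tr(\psi_j^\dagger O\psi_l)=\Tr(O\,\psi_l\psi_j^\dagger)$, it follows that
\begin{equation}
\psi_1\psi_1^\dagger=\psi_2\psi_2^\dagger=\rho,
\qquad
\psi_1\psi_2^\dagger=0 .
\end{equation}
If $|\lambda|<1$, then $\rho$ is invertible, so $\psi_1$ is invertible. The relation $\psi_1\psi_2^\dagger=0$ then forces $\psi_2=0$, which contradicts normalization. Hence $|\lambda|=1$, and combined with the sandwich this gives $\Lambda_2^{\Pi_\zeta}(X_3,Y_3,Z_3)=\mathbb S^2$.

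For the two-axis and one-axis gauge charts, we note that the pure-state joint ranges $\Lambda_1(X,Z)$ (the unit disk) and $\Lambda_1(X)=[-1,1]$ are already convex, and similarly for the other axes. Since $M\ge 2=K$ for every $q$, Corollary~\ref{cor:convex-active-range} gives the disk and the interval uniformly in $q$.

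For the central coordinates, on $\Pi_\zeta\mathcal H_4$ each pinned $Z_j$ acts as the scalar $\zeta_j$, so that coordinate is fixed at $\zeta_j$. The unpinned $Z_j$ form a commuting tuple whose joint eigenspaces within the sector are indexed by their $2^{2-q}$ sign patterns. Each such eigenspace has dimension $2^{4-q}/2^{2-q}=4\ge 2$, since qubits $3$ and $4$ are untouched. Corollary~\ref{cor:common-spectral-polytope} therefore gives the convex hull of the sign vectors with the pinned entries fixed, which is exactly $C_\zeta(\mathcal Z)$. This yields the square for $q=0$, an edge interval for $q=1$, and the vertex $(\zeta_1,\zeta_2)$ for $q=2$.
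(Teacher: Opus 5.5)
Your proof is correct. Its architecture matches the paper's: reduce each sector to $\C^2_{\mathrm{gauge}}\otimes\C^M_{\mathrm{spect}}$ with $M=2^{3-q}$, use the multiplicity results to obtain the ball for $q=0,1$, and argue separately that $M=2$ leaves only the sphere. The differences are local.

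\textbf{The case $q=2$.} The paper cites Proposition~\ref{prop:rank2-pauli-ball}. That proof embeds $\C^2$ into $\C^3$ and uses only the off-diagonal equations. It shows that the spectator spans $\operatorname{span}\{x_1,x_2\}$ and $\operatorname{span}\{y_1,y_2\}$ of the two code vectors are orthogonal in $\C^3$. Hence one code vector is a product state, and $\lambda$ is a pure-state tuple. Your identity $\psi_1\psi_2^\dagger=0$ is exactly this orthogonality, since the rows of $\psi_1,\psi_2$ are $x_1,x_2$ and $y_1,y_2$. You also feed in the diagonal equations, which gives $\psi_1\psi_1^\dagger=\rho$ with $\rho$ fixed by $\lambda$. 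The contradiction for $|\lambda|<1$ then becomes a one-line invertibility argument, and no embedding is needed. The same rank count also recovers the paper's $M=3$ case: $\rank\psi_1=2$ confines the rows of $\psi_2$ to a line, which contradicts $\psi_2\psi_2^\dagger=\rho$ having rank two.

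\textbf{Lower-dimensional charts.} You invoke Corollary~\ref{cor:convex-active-range}. The paper instead proves Proposition~\ref{prop:rank2-pauli-lower-charts} by projecting the triple range and applying the Bloch bound to the reduced state. It also points out your shortcut in the remark after the proof.

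\textbf{Central coordinates.} You apply Corollary~\ref{cor:common-spectral-polytope} to the sector-restricted commuting pair, whose joint eigenspaces each have dimension $4\ge 2$. The paper gives a direct norm bound together with an explicit product projector. The content is the same: your version is more economical, while the paper's exhibits the realizing projectors explicitly.
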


\begin{proof}
In the canonical reduction above, the sector $\Pi_\zeta\mathcal H_4$ factors as $\C^2_{\mathrm{gauge}}\otimes\C^M_{\mathrm{spect}}$ with $M=2^{3-q}$.  The three gauge axes therefore have the form $(X\otimes\id_M,Y\otimes\id_M,Z\otimes\id_M)$ on this factorization.  Proposition~\ref{prop:rank2-pauli-ball} therefore gives the Bloch sphere when $q=2$, for which $M=2$, and the filled Bloch ball when $q=0$ or $1$, for which $M=8$ or $4$, respectively.  This sphere-to-ball transition is illustrated in Fig.~\ref{fig:412-atlas}.

It remains only to identify the lower-dimensional charts.  Any two distinct axes among $X_3,Y_3,Z_3$ are unitarily equivalent on the gauge qubit to $(X,Z)$.  Higher-rank numerical ranges are invariant under this conjugation, so Proposition~\ref{prop:rank2-pauli-lower-charts} gives the unit disk for every two-axis chart and the interval $[-1,1]$ for every one-axis chart.

For the central coordinates, each pinned $Z_j$ is scalar on $\Pi_\zeta\mathcal H_4$, so every compression value satisfies $\eta_j=\zeta_j$ whenever $Z_j\in\mathcal Z$; the norm bound $\lVert Z_j\rVert=1$ also gives $\eta_j\in[-1,1]$ for every unpinned coordinate.  This proves the inclusion into $C_\zeta(\mathcal Z)$.

For the reverse inclusion, fix $\eta\in C_\zeta(\mathcal Z)$.  On every pinned label qubit choose the $Z_j$-eigenvector with eigenvalue $\zeta_j$, and on every unpinned label qubit choose a unit vector $\ket{\phi_j}$ satisfying $\langle\phi_j|Z_j|\phi_j\rangle=\eta_j$.  Let $R$ be any rank-$2$ projector on the gauge--logical factor carried by qubits $3$ and $4$.  The tensor product of the chosen rank-one label projectors with $R$ is a rank-$2$ projector $P_\eta\le\Pi_\zeta$, and direct multiplication gives
\begin{equation}
P_\eta Z_j P_\eta=\eta_jP_\eta,
\qquad j=1,2.
\end{equation}
Thus every point of $C_\zeta(\mathcal Z)$ is attained, proving Eq.~\eqref{eq:central-atlas}.
\end{proof}

\begin{figure*}[t]
  \centering
  \includegraphics[width=\textwidth]{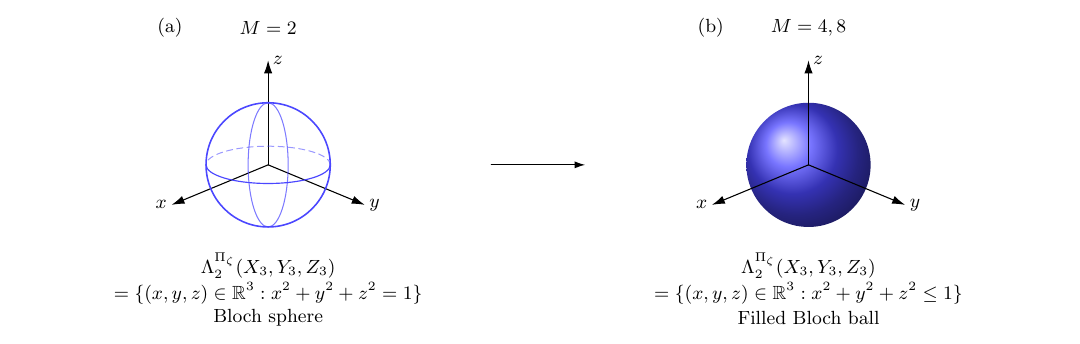}
  \caption{Rank-$2$ sphere-to-ball transition for the full noncentral gauge triple of the $[[4,1;1,2]]$ subsystem block, with $K=D_{\mathrm{act}}=2$ and $M=2^{3-q}$. (a) For $q=2$, one has $M=2=K<KD_{\mathrm{act}}=4$, and the range is the Bloch sphere. (b) For $q\in\{0,1\}$, one has $M\in\{4,8\}$ and hence $M\ge KD_{\mathrm{act}}=4$, and the range is the filled Bloch ball.}
  \label{fig:412-atlas}
\end{figure*}

The ranges obtained above fit the multiplicity framework of Theorem~\ref{thm:subsystem-multiplicity} and Corollary~\ref{thm:active-spectator}.  The protected logical qubit supplies the base spectator dimension $M=K=2$, while the active gauge qubit has dimension $D_{\mathrm{act}}=2$.  When both stabilizers are pinned, $M=2$ lies in the intermediate regime $K\le M<KD_{\mathrm{act}}$: every pure-state gauge expectation is realized, but full convexification is not guaranteed, consistently with the Bloch sphere found above.  If at least one stabilizer label remains unpinned, then $M=4$ or $8$, so $M\ge KD_{\mathrm{act}}$ and Eq.~\eqref{eq:block-active-full-convexification} yields the full Bloch ball.  For a two-axis chart, the pure-state range is already the convex unit disk, so Corollary~\ref{cor:convex-active-range} shows that the weaker condition $M\ge K$ suffices for every $q$.

The stabilizer and gauge coordinates may also be retained simultaneously, producing product bodies when sufficient spectator multiplicity remains.  A representative example, computed in Appendix~\ref{app:cylinder}, is
\begin{equation}
\Lambda_4(Z_1,X_3,Z_3)=\bigl\{(x,y,z)\in\R^3:\ |x|\le 1,\ y^2+z^2\le 1\bigr\},
\label{eq:stabilizer-gauge-cylinder}
\end{equation}
an interval$\times$disk cylinder.

Finally, on the codespace $(S_X,S_Z)=(+,+)$ every Hamiltonian built from the gauge algebra has the form
\begin{equation}
H_{\Cc}=H_{\mathrm{gauge}}\otimes\id_L.
\end{equation}
Hence the logical degeneracy is protected by the spectator factor.  If the gauge Hamiltonian has a unique ground state, the full ground space has rank $2$ and every gauge observable $g\in\Gg$ has a scalar compression on it:
\begin{equation}
P_{\mathrm{gs}}\,g\,P_{\mathrm{gs}}
=
\langle\phi_{\mathrm{gauge}}|O_g|\phi_{\mathrm{gauge}}\rangle\,P_{\mathrm{gs}},
\end{equation}
where $\ket{\phi_{\mathrm{gauge}}}$ is the active gauge ground state.  This is the subsystem-code version of Knill--Laflamme detectability: the gauge expectation may vary with the chosen gauge fixing, but it is invisible to the logical factor.


\section{Parent Hamiltonians}
\label{sec:hamiltonians}

Section~\ref{sec:subsystem} identified protected logical qubits algebraically as multiplicity spaces for the gauge algebra.  The same factorization has a dynamical interpretation: a Hamiltonian built from the active or gauge algebra can select an active ground state while acting trivially on the multiplicity factor.  Its ground projector can then realize scalar compressions of observables in the same algebra, while the multiplicity factor carries the residual ground-space degeneracy.  This mechanism also appears in stabilizer Hamiltonians and compass or bond-algebra models~\cite{BravyiTerhal2009,NussinovOrtiz2009}.

We first formulate a general template for active and spectator factors, including gauge fixing, and illustrate it with a minimal two-qubit path.  We then show that the spectator factor may itself be encoded rather than represented by an untouched physical qubit.  Finally, we return to the $[[4,1;1,2]]$ Bacon--Shor/compass block of Sec.~\ref{subsec:atlas412} and reinterpret its gauge geometry through the ground space of the compass Hamiltonian.

\subsection{General template: spectators, active blocks, and gauge fixing}
\label{subsec:ham-template}

Let a commuting family of Pauli symmetries decompose the Hilbert space into sectors
\begin{equation}
\mathcal H=\bigoplus_{\eta\in\Omega}
\mathcal H_{\mathrm{act},\eta}\otimes\mathcal H_{\mathrm{spect},\eta}.
\end{equation}

\begin{theorem}[Ground-space realization of scalar-compression vectors]
\label{thm:ground-space-compression}
Suppose a Hamiltonian $H(\theta)$ and a monitored Hermitian Pauli tuple $E=(E_1,\ldots,E_m)$ have the block forms
\begin{equation}
H(\theta)=\bigoplus_{\eta\in\Omega} H_\eta(\theta)\otimes\id_{M_\eta},
\qquad
E_j=\bigoplus_{\eta\in\Omega} B_{j,\eta}\otimes\id_{M_\eta},
\label{eq:sector-active-parent-form}
\end{equation}
where $M_\eta=\dim\mathcal H_{\mathrm{spect},\eta}$.  Assume that, for $\eta$ in a set of ground sectors \(\Omega_0(\theta)\), the active Hamiltonian \(H_\eta(\theta)\) has a unique normalized ground state \(\ket{\phi_\eta(\theta)}\), all these active ground energies are equal, and all other sectors have strictly higher ground energy.  Then the full ground-space projector is
\begin{equation}
P_\theta=\bigoplus_{\eta\in\Omega_0(\theta)}
\ket{\phi_\eta(\theta)}\!\bra{\phi_\eta(\theta)}\otimes\id_{M_\eta}.
\label{eq:sector-ground-projector}
\end{equation}
If, in addition, the active expectations
\begin{equation}
\lambda_j(\theta):=
\bra{\phi_\eta(\theta)}B_{j,\eta}\ket{\phi_\eta(\theta)}
\label{eq:sector-independent-expectation}
\end{equation}
are independent of \(\eta\in\Omega_0(\theta)\), then
\begin{equation}
P_\theta E_jP_\theta=\lambda_j(\theta)P_\theta,\qquad j=1,\ldots,m.
\label{eq:ground-projector-compression}
\end{equation}
Thus the ground space realizes the point \(\lambda(\theta)=(\lambda_1(\theta),\ldots,\lambda_m(\theta))\in\Lambda_{K(\theta)}(E_1,\ldots,E_m)\), where
\begin{equation}
K(\theta)=\rank P_\theta=\sum_{\eta\in\Omega_0(\theta)}M_\eta.
\end{equation}
\end{theorem}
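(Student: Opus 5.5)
The plan is to diagonalize $H(\theta)$ sector by sector. Because $H(\theta)$ is block diagonal with respect to $\bigoplus_\eta \mathcal H_{\mathrm{act},\eta}\otimes\mathcal H_{\mathrm{spect},\eta}$, its spectrum is the union of the spectra of the blocks $H_\eta(\theta)\otimes\id_{M_\eta}$. The spectrum of $H_\eta(\theta)\otimes\id_{M_\eta}$ is the spectrum of $H_\eta(\theta)$, with each eigenspace tensored by $\C^{M_\eta}$.

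Let $e_0$ denote the common active ground energy on $\Omega_0(\theta)$. For $\eta\in\Omega_0(\theta)$, the ground state is unique, so the $e_0$-eigenspace of the block is exactly $\ket{\phi_\eta(\theta)}\otimes\C^{M_\eta}$. Every other sector has spectrum strictly above $e_0$, so it contributes nothing at energy $e_0$. It follows that $e_0$ is the minimum of the spectrum of $H(\theta)$. The corresponding eigenspace is the orthogonal direct sum of these subspaces, and this gives Eq.~\eqref{eq:sector-ground-projector}. The rank count $K(\theta)=\sum_{\eta\in\Omega_0(\theta)}M_\eta$ is immediate from the fact that each summand has rank $1\cdot M_\eta$.

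For the compression, we multiply the block forms directly. Both $P_\theta$ and $E_j$ are block diagonal in the same decomposition, so $P_\theta E_jP_\theta$ is the direct sum over $\eta\in\Omega_0(\theta)$ of
\begin{equation}
\bigl(\ket{\phi_\eta}\!\bra{\phi_\eta}B_{j,\eta}\ket{\phi_\eta}\!\bra{\phi_\eta}\bigr)\otimes\id_{M_\eta}
=\bra{\phi_\eta}B_{j,\eta}\ket{\phi_\eta}\,\ket{\phi_\eta}\!\bra{\phi_\eta}\otimes\id_{M_\eta}.
\end{equation}
The sectors outside $\Omega_0(\theta)$ are annihilated. Under the $\eta$-independence hypothesis Eq.~\eqref{eq:sector-independent-expectation}, each scalar equals $\lambda_j(\theta)$, so the direct sum is $\lambda_j(\theta)P_\theta$. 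This is Eq.~\eqref{eq:ground-projector-compression}, and by the definition of $\Lambda_{K}$ the vector $\lambda(\theta)$ lies in $\Lambda_{K(\theta)}(E)$. This is also a special case of the realization in Theorem~\ref{thm:block-multiplicity}: there the active state is pure in each ground sector, and the spectator rank is the full $M_\eta$.

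The only step needing care is the identification of the ground space: uniqueness of each active ground state and the strict energy gap to the other sectors must together rule out any extra ground vectors. The compression itself is routine block multiplication.
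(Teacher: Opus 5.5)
Your argument is correct and follows the paper's proof essentially verbatim: you identify the ground space sector by sector using the uniqueness of each active ground state and the strict gap to the other sectors, then compute $P_\theta E_j P_\theta$ by block multiplication and apply the sector-independence hypothesis. One caveat: your closing remark that this is a special case of the realization in Theorem~\ref{thm:block-multiplicity} holds only when $\Omega_0(\theta)$ is a single sector. That construction would need $M_\eta\ge K(\theta)=\sum_{\eta'\in\Omega_0(\theta)}M_{\eta'}$ and builds code vectors superposed across blocks, whereas here the direct-sum projector works precisely because of the sector-independence hypothesis.
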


\begin{proof}
The strict separation of the sector ground energies implies that the full ground space is the direct sum, over $\eta\in\Omega_0(\theta)$, of the active ground-state line tensored with the entire spectator factor.  This gives the formula for $P_\theta$ and its rank.  Writing
\begin{equation}
c_{j,\eta}(\theta)
:=
\bra{\phi_\eta(\theta)}B_{j,\eta}\ket{\phi_\eta(\theta)},
\end{equation}
one has
\begin{equation}
P_\theta E_jP_\theta
=
\bigoplus_{\eta\in\Omega_0(\theta)}
c_{j,\eta}(\theta)
\ket{\phi_\eta(\theta)}\!\bra{\phi_\eta(\theta)}\otimes\id_{M_\eta}.
\end{equation}
When the expectations are sector-independent, $c_{j,\eta}(\theta)=\lambda_j(\theta)$ throughout the direct sum, which proves the claim.
\end{proof}

A particularly common situation occurs when the Hamiltonian and monitored observables act trivially on a common spectator factor:
\begin{equation}
H(\theta)=H_{\mathrm{act}}(\theta)\otimes\id_M,\qquad
E_j=B_j\otimes\id_M.
\label{eq:product-parent-form}
\end{equation}
If \(H_{\mathrm{act}}(\theta)\) has a unique ground vector \(\ket{\phi_\theta}\), then
\begin{equation}
P_\theta=\ket{\phi_\theta}\!\bra{\phi_\theta}\otimes\id_M
\end{equation}
has rank \(M\) and satisfies
\begin{equation}
P_\theta E_jP_\theta=
\bra{\phi_\theta}B_j\ket{\phi_\theta}\,P_\theta.
\label{eq:product-parent-compression}
\end{equation}
This is precisely the construction of Corollary~\ref{thm:active-spectator}, now viewed through Theorem~\ref{thm:ground-space-compression} as a ground-space construction.

Gauge fixing is implemented by adding commuting Pauli penalty terms.  Let \(C_1,\ldots,C_p\) be mutually commuting Hermitian Pauli observables satisfying \(C_a^2=\id\) and \([C_a,H(\theta)]=0\), and let \(\xi=(\xi_1,\ldots,\xi_p)\in\{\pm1\}^p\) be a compatible sign assignment, meaning that the joint eigenspace
\begin{equation}
\mathcal H_\xi
:=
\{\ket{\psi}:C_a\ket{\psi}=\xi_a\ket{\psi}\ \text{for all }a\}
\end{equation}
is nonzero.  Adding
\begin{equation}
H_{\mathrm{fix}}=-\Delta\sum_{a=1}^p \xi_a C_a,
\qquad \Delta>0,
\end{equation}
shifts a nonzero joint sector with eigenvalue pattern \(\eta=(\eta_1,\ldots,\eta_p)\) by \(-\Delta\sum_a\xi_a\eta_a\).  Hence, when \(\Delta\) is sufficiently large compared with the differences among the sector ground energies, the ground space of \(H(\theta)+H_{\mathrm{fix}}\) is contained in \(\mathcal H_\xi\), and no other joint sector contributes.  On \(\mathcal H_\xi\), the penalty is the scalar \(-p\Delta\), so the restriction of the active Hamiltonian is unchanged up to this additive constant.

In the exact product case in Eq.~\eqref{eq:product-parent-form}, if the \(C_a\) act only on the spectator factor, all spectator sectors have the same unperturbed active ground energy and every \(\Delta>0\) suffices.  Writing \(\Pi_\xi\) for the joint spectator-sector projector, the gauge-fixed ground projector is
\begin{equation}
P_{\theta,\xi}
=
\ket{\phi_\theta}\!\bra{\phi_\theta}\otimes\Pi_\xi,
\qquad
\rank P_{\theta,\xi}=\rank\Pi_\xi.
\end{equation}

As a minimal illustration of the sector construction and its gauge fixing, consider the two-qubit path
\begin{equation}
H(\theta)=-(1-\theta)\,Z\!\otimes\!Z-\theta\,X\!\otimes\!I,
\qquad \theta\in[0,1],
\end{equation}
and write \(S=I\otimes Z\).  Since \([H(\theta),S]=0\), the two eigenspaces \(S=\zeta\), \(\zeta=\pm1\), are superselection sectors.  In the sector \(S=\zeta\),
\begin{equation}
H_\zeta(\theta)=-(1-\theta)\zeta\,Z_1-\theta X_1.
\end{equation}
Let
\begin{equation}
\varepsilon(\theta):=\sqrt{(1-\theta)^2+\theta^2}.
\end{equation}
For every \(\theta\), each \(H_\zeta(\theta)\) has a unique ground vector \(\ket{\phi_\zeta(\theta)}\) with ground energy \(-\varepsilon(\theta)\).  Thus the full ground projector is
\begin{equation}
P_\theta=\sum_{\zeta=\pm1}
\ket{\phi_\zeta(\theta)}\!\bra{\phi_\zeta(\theta)}
\otimes
\frac{\id+\zeta Z}{2},
\qquad \rank P_\theta=2.
\end{equation}
The two monitored Pauli observables have sector-independent ground expectations:
\begin{equation}
\begin{aligned}
\bra{\phi_\zeta(\theta)}\zeta Z_1\ket{\phi_\zeta(\theta)}
&=\frac{1-\theta}{\varepsilon(\theta)},\\
\bra{\phi_\zeta(\theta)}X_1\ket{\phi_\zeta(\theta)}
&=\frac{\theta}{\varepsilon(\theta)}.
\end{aligned}
\end{equation}
Consequently,
\begin{equation}
\begin{aligned}
P_\theta(Z\otimes Z)P_\theta&=\frac{1-\theta}{\varepsilon(\theta)}\,P_\theta,\\
P_\theta(X\otimes I)P_\theta&=\frac{\theta}{\varepsilon(\theta)}\,P_\theta.
\end{aligned}
\label{eq:n2-compressions}
\end{equation}
Writing
\begin{equation}
\lambda_{ZZ}(\theta)
:=
\frac{1-\theta}{\varepsilon(\theta)},
\qquad
\lambda_{XI}(\theta)
:=
\frac{\theta}{\varepsilon(\theta)}
\label{eq:n2-compression-coordinates}
\end{equation}
for the two compression coordinates, we obtain
\begin{equation}
\bigl(\lambda_{ZZ}(\theta),\lambda_{XI}(\theta)\bigr)
\in\Lambda_2(Z\otimes Z,X\otimes I),
\label{eq:n2-compression-range}
\end{equation}
where
\begin{equation}
\lambda_{ZZ}(\theta)^2+\lambda_{XI}(\theta)^2=1.
\label{eq:n2-quarter-circle}
\end{equation}
Thus, as $\theta$ varies from $0$ to $1$, this compression point traces the first-quadrant unit-circle arc shown in Fig.~\ref{fig:parent-hamiltonian-path}.  Adding a large term \(-\Delta\,\zeta_0(I\otimes Z)\) selects the sector \(S=\zeta_0\) and turns the same active interpolation into a nondegenerate rank-one ground space.

\begin{figure}[t]
  \centering
  \includegraphics[width=0.82\columnwidth]{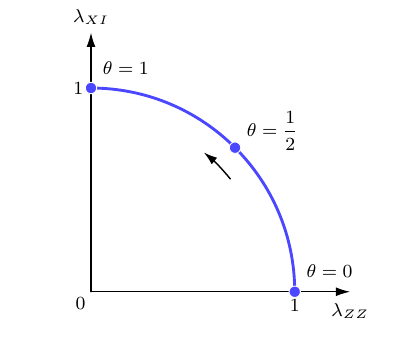}
  \caption{Parent-Hamiltonian interpolation for the two-qubit example. The rank-two ground projector traces the first-quadrant unit-circle arc in the $(\lambda_{ZZ},\lambda_{XI})$ plane as $\theta$ varies from $0$ to $1$. The arrow indicates increasing $\theta$.}
  \label{fig:parent-hamiltonian-path}
\end{figure}

\subsection{Encoded spectators and interpolation paths}
\label{ex:n4-r3}
A spectator factor need not coincide with an untouched physical qubit in the displayed Hamiltonian; it may instead be encoded in the commutant.  As a representative example, let
\begin{equation}
H(\theta)=-(1-\theta)(Z_1Z_2+Z_2Z_3+Z_3Z_4)-\theta(X_1+X_2+X_3).
\end{equation}
Although \(Z_3Z_4\) contains qubit~4, this qubit is an encoded spectator rather than an active degree of freedom.  Indeed \(Z_4\) and \(X_1X_2X_3X_4\) commute with \(H(\theta)\) for every \(\theta\) and anticommute with each other, so they generate a Pauli algebra inside the commutant.  The Clifford
\begin{equation}
U=\mathrm{CNOT}_{4\to 1}\,\mathrm{CNOT}_{4\to 2}\,\mathrm{CNOT}_{4\to 3}
\end{equation}
identifies this encoded spectator with qubit~4.  Using \(\mathrm{CNOT}_{c\to t}: Z_t\mapsto Z_cZ_t\), \(Z_c\mapsto Z_c\), and \(X_c\mapsto X_cX_t\), \(X_t\mapsto X_t\), the three gates send \(Z_3Z_4\) to \(Z_3\): the last CNOT contributes an extra \(Z_4\) to \(Z_3\), which cancels the original \(Z_4\).  The same rules show that the extra \(Z_4\) factors cancel in \(Z_1Z_2\) and \(Z_2Z_3\), while the target-qubit operators \(X_1,X_2,X_3\) are unchanged.  Hence
\begin{equation}
U H(\theta) U^\dagger = H_{\mathrm{act}}(\theta)\otimes\id_{\mathrm{spect}}^{(4)},
\end{equation}
where
\begin{equation}
H_{\mathrm{act}}(\theta):=-(1-\theta)(Z_1Z_2+Z_2Z_3+Z_3)-\theta(X_1+X_2+X_3).
\end{equation}
The active Hamiltonian has a unique ground state for all \(\theta\in[0,1]\).  At \(\theta=0\) the three commuting constraints \(Z_1Z_2=Z_2Z_3=Z_3=+1\) have a unique common solution.  For \(0<\theta\le1\), the computational-basis matrix is stoquastic and the \(X_1,X_2,X_3\) flips connect the active basis graph; Perron--Frobenius gives a unique ground vector \(\ket{\phi_\theta}\).  Therefore
\begin{equation}
U P_\theta U^\dagger=\ket{\phi_\theta}\!\bra{\phi_\theta}\otimes\id_2,
\qquad \rank P_\theta=2.
\end{equation}
For the tuple
\begin{equation}
E=(Z_1Z_2,\ Z_2Z_3,\ Z_3Z_4,\ X_1,\ X_2,\ X_3),
\end{equation}
the compression vector is
\begin{equation}
\begin{aligned}
\lambda(\theta)=\bigl(&
\langle Z_1Z_2\rangle_{\phi_\theta},
\langle Z_2Z_3\rangle_{\phi_\theta},
\langle Z_3\rangle_{\phi_\theta},\\
&
\langle X_1\rangle_{\phi_\theta},
\langle X_2\rangle_{\phi_\theta},
\langle X_3\rangle_{\phi_\theta}
\bigr),
\end{aligned}
\end{equation}
after the displayed Clifford reduction.  Thus \(P_\theta\) realizes a rank-\(2\) curve in \(\Lambda_2(E)\).  Adding a large \(-\Delta Z_4\) term selects one encoded-spectator eigenstate without changing the active spectrum.

\subsection{Mini Bacon--Shor / compass on a \texorpdfstring{$2\times 2$}{2 x 2} lattice}
\label{ex:bacon-shor}
On the $2\times 2$ layout in Eq.~\eqref{eq:412-layout}, take the row/column gauge terms
\begin{equation}
X_1X_2,\qquad X_3X_4,\qquad Z_1Z_3,\qquad Z_2Z_4,
\end{equation}
with stabilizers
\begin{equation}
S_X=X_1X_2X_3X_4,\qquad S_Z=Z_1Z_2Z_3Z_4,
\end{equation}
and the compass Hamiltonian with stabilizer penalties
\begin{equation}
\begin{aligned}
H_{\mathrm{comp}}(J_x,J_z,\Delta)
=&-J_x(X_1X_2+X_3X_4)\\
&-J_z(Z_1Z_3+Z_2Z_4)
-\Delta(S_X+S_Z).
\end{aligned}
\end{equation}
As in Sec.~\ref{subsec:atlas412}, we choose \(X_g=X_1X_2\) and \(Z_g=Z_2Z_4\).  In a stabilizer sector \((S_X,S_Z)=(\xi_X,\xi_Z)\), the row and column sums reduce to
\begin{equation}
\begin{aligned}
X_1X_2+X_3X_4&=(1+\xi_X)X_g,\\
Z_1Z_3+Z_2Z_4&=(1+\xi_Z)Z_g,
\end{aligned}
\end{equation}
so, after the canonical Clifford of Lemma~\ref{lem:canonical}, the sector Hamiltonian is
\begin{equation}
\begin{aligned}
H_{\xi_X,\xi_Z}
=&-\bigl[J_x(1+\xi_X)X_3
       +J_z(1+\xi_Z)Z_3\bigr]\otimes\id_2\\
&-\Delta(\xi_X+\xi_Z)\id.
\end{aligned}
\label{eq:compass-sector-hamiltonian}
\end{equation}
The identity factor is the protected logical qubit.  For \(J_x,J_z>0\) and \(\Delta\ge0\), the \((\xi_X,\xi_Z)=(+,+)\) sector is the unique lowest stabilizer sector.  Inside it the active Hamiltonian is
\begin{equation}
-2(J_xX_3+J_zZ_3),
\end{equation}
whose ground state is unique unless \(J_x=J_z=0\).  Therefore the physical ground projector has rank \(2\) and, with \(R=\sqrt{J_x^2+J_z^2}\),
\begin{equation}
\begin{aligned}
P_{\mathrm{gs}}X_gP_{\mathrm{gs}}&=\frac{J_x}{R}P_{\mathrm{gs}},\\
P_{\mathrm{gs}}Z_gP_{\mathrm{gs}}&=\frac{J_z}{R}P_{\mathrm{gs}},\\
P_{\mathrm{gs}}Y_gP_{\mathrm{gs}}&=0.
\end{aligned}
\label{eq:compass-gauge-compression}
\end{equation}
Thus the compass parent Hamiltonian selects a point on the Bloch sphere of the pinned gauge qubit, while the rank-two logical degeneracy is untouched.

This example also connects directly to the atlas of Proposition~\ref{prop:412-atlas}.  If both stabilizers are pinned, the full gauge triple \((X_g,Y_g,Z_g)\) has spectator dimension \(M=2\) and rank-\(2\) range \(\mathbb S^2\).  If one studies the same gauge triple before pinning all stabilizer labels, the additional sector multiplicity supplies \(M=4\) or \(8\), and the corresponding rank-\(2\) range is the filled Bloch ball \(\mathbb B^3\).  The parent Hamiltonian chooses a ground-state point in these bodies, while the joint numerical range describes the whole family of gauge fixings compatible with the same subsystem structure.

Additional tensor-product, cluster, and mixed-Pauli examples are collected in Appendix~\ref{app:parent-hamiltonian-examples}.  Together, Secs.~\ref{sec:subsystem} and~\ref{sec:hamiltonians} provide the algebraic and ground-space realizations of the subsystem multiplicity mechanism.


\section{A disconnected signature spectrum example}
\label{sec:collapse}

The example below exhibits a complementary obstruction to the multiplicity
mechanisms developed above.  Beginning with the rank-two tetrahedral
range of Sec.~\ref{subsec:consequences}, we impose additional noncommuting Pauli detection constraints. They do not change which sector weights the trace condition allows, but they forbid the superpositions that would realize most of those weights, leaving only the barycenter and four vertices in the monitored range.

\subsection{The monitored tetrahedron and the enlarged detection family}

In the absence of any additional compression equations, the commuting tuple from Sec.~\ref{subsec:consequences} is
\begin{equation}
E^{\mathrm{tet}}=(Z_1,Z_2,Z_1Z_2).
\label{eq:collapse-monitored-triple}
\end{equation}
Its rank-two range is the tetrahedron obtained there.  We now place the same monitored coordinates inside a larger, noncommuting Pauli detection problem.  The resulting full tuple is not a commuting family; the point of this section is precisely to determine how additional noncommuting constraints alter the commuting tetrahedron.

On the first two qubits define the seven-dimensional Pauli subspace
\begin{equation}
\mathfrak B
:=
\operatorname{span}_{\C}
\{II,XI,IX,XX,ZI,IZ,ZZ\}
\subset M_4(\C).
\label{eq:collapse-B-space}
\end{equation}
Its Hilbert--Schmidt orthogonal complement is $\mathfrak B^\perp=\operatorname{span}_{\C}\mathcal R$, where
\begin{equation}
\mathcal R:=\{YI,IY,YY,XY,YX,XZ,ZX,YZ,ZY\}.
\label{eq:collapse-R-set}
\end{equation}
Define
\begin{equation}
\mathcal F^{\mathrm{bg}}
:=
\{R\otimes Q:R\in\mathcal R,\ Q\in\{I,X,Y,Z\}\},
\label{eq:collapse-background-family}
\end{equation}
so $|\mathcal F^{\mathrm{bg}}|=36$.  Fix any ordering
$E^{\mathrm{bg}}$ of $\mathcal F^{\mathrm{bg}}$.  Appending the monitored
triple $E^{\mathrm{tet}}$, write
\begin{equation}
E^{\mathrm{full}}:=(E^{\mathrm{bg}},E^{\mathrm{tet}})
\label{eq:collapse-full-tuple}
\end{equation}
for the resulting $39$-entry Pauli tuple.

\subsection{Classification of rank-one elements}

The feasible projectors will be classified by two sets of normalized product states on the first two qubits:
\begin{align}
\mathcal W_Z
&:=\{\ket{00},\ket{01},\ket{10},\ket{11}\},
\label{eq:collapse-WZ}\\
\mathcal W_X
&:=\{\ket{++},\ket{+-},\ket{-+},\ket{--}\}.
\label{eq:collapse-WX}
\end{align}
The algebraic input is the following classification of the rank-one elements of $\mathfrak B$.

\begin{lemma}[Rank-one classification]
\label{lem:collapse-rank-one-classification}
Let $u,v\in\C^4$ be nonzero.  If
\begin{equation}
uv^\dagger\in\mathfrak B,
\end{equation}
then there exists $w\in\mathcal W_Z\cup\mathcal W_X$ such that $u$ and $v$ are both proportional to $w$.
\end{lemma}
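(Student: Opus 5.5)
The plan is to replace the Pauli description of $\mathfrak B$ by a coordinate description in the computational basis, and then solve the rank-one equations entrywise. Label the basis of $\C^4$ by $x\in\mathbb F_2^2$. The span of $\{II,ZI,IZ,ZZ\}$ is exactly the set of diagonal matrices. The span of $\{II,XI,IX,XX\}$ is $\operatorname{span}\{I,X\}^{\otimes 2}$; since $X^{a}\otimes X^{b}\ket{y}=\ket{y\oplus(a,b)}$, it is exactly the set of matrices whose entries $M_{x,y}$ depend only on $x\oplus y$. These two subspaces meet in $\C\,II$, so their sum $\mathfrak B$ has dimension $4+4-1=7$. This gives the characterization
\begin{equation}
M\in\mathfrak B
\iff
M_{x,y}=c_{x\oplus y}\ \text{for all } x\neq y,
\end{equation}
for some $c_s\in\C$ with $s\neq 0$, and with the diagonal of $M$ unrestricted. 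The dimension count $4+3=7$ confirms that nothing is lost. Applied to $M=uv^\dagger$, the condition reads $u_x\bar v_y=c_{x\oplus y}$ for all $x\neq y$.

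\emph{Degenerate case.} Suppose some coordinate vanishes, say $u_x=0$. Then for every $s\neq 0$ we get $c_s=u_x\bar v_{x\oplus s}=0$, so all off-diagonal entries of $uv^\dagger$ vanish. A rank-one matrix with zero off-diagonal part has a single nonzero entry. Hence $u$ and $v$ are both supported on the same basis vector, so both are proportional to some $w\in\mathcal W_Z$. The same argument applies if some $v_y=0$.

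\emph{Generic case.} Now assume all entries of $u$ and $v$ are nonzero. The symmetric pair of equations $u_x\bar v_{x\oplus s}=c_s=u_{x\oplus s}\bar v_x$ gives $u_x/\bar v_x=u_{x\oplus s}/\bar v_{x\oplus s}$. Since $s$ ranges over all nonzero labels, $u=\kappa\bar v$ for a constant $\kappa\neq0$, where $\bar v$ is the entrywise conjugate. Substituting back, $c_s=\kappa\,\bar v_x\bar v_{x\oplus s}$ must be independent of $x$. Writing $w_x:=\bar v_x$ and normalizing $w_{00}=1$, this gives three equations:
\begin{equation}
w_{01}=w_{10}w_{11},\qquad w_{10}=w_{01}w_{11},\qquad w_{11}=w_{01}w_{10}.
\end{equation}
From these, $w_{10}^2=w_{01}^2=1$ and $w_{11}=w_{01}w_{10}$. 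So $w=(1,a)\otimes(1,b)$ with $a,b\in\{\pm1\}$, which is proportional to an element of $\mathcal W_X$. Because $w$ is real, $v\propto w$ and $u\propto\bar v\propto w$, which completes the proof.

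The step I expect to need the most care is the first one: the coordinate characterization of $\mathfrak B$, in particular checking that the two four-dimensional pieces intersect only in the identity, so that the description is complete. After that, the case split is elementary. The only remaining subtlety is tracking the complex conjugations correctly, so that one concludes $u$ and $v$ are both proportional to the same real vector $w$, rather than merely proportional to conjugates of each other.
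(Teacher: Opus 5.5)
Your proof is correct and follows essentially the same route as the paper. You use the same computational-basis form of $\mathfrak B$ (derived via the $x\oplus y$ description rather than asserted), the same split into a vanishing-coordinate case and a generic case, and the same three product equations leading to $(1,a)\otimes(1,b)$. The main difference is how you get $v\propto\overline u$: the paper obtains it in one step from the symmetry of $\mathfrak B$ (the rank-one matrix $uv^\dagger$ equals its transpose $\overline v\,u^T$, so the column spaces agree), with no nonvanishing assumption, whereas you derive it from ratios in the generic case and treat the degenerate case by showing the off-diagonal part vanishes. Also, the intersection $\C\,II$ is needed only for the dimension count, not for completeness of your characterization, since $\mathfrak B$ is by definition the sum of the two four-dimensional spans.
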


\begin{proof}
In the computational basis $\ket{00},\ket{01},\ket{10},\ket{11}$, every element of $\mathfrak B$ has the form
\begin{equation}
\begin{pmatrix}
r_1&a&b&c\\
a&r_2&c&b\\
b&c&r_3&a\\
c&b&a&r_4
\end{pmatrix}.
\label{eq:collapse-B-matrix-form}
\end{equation}
In particular, every matrix in $\mathfrak B$ is symmetric.
Let $u,v\in\C^4$ be nonzero and suppose $uv^\dagger\in\mathfrak B$.  Symmetry of Eq.~\eqref{eq:collapse-B-matrix-form} gives
\begin{equation}
uv^\dagger=(uv^\dagger)^T=\overline v\,u^T.
\end{equation}
Both sides have rank one, so their column spaces coincide.  Hence $v\propto\overline u$, and, up to a nonzero scalar, the matrix $uv^\dagger$ equals $uu^T$.

Write $u=(u_1,u_2,u_3,u_4)^T$.  Comparing the three pairs of repeated off-diagonal entries in Eq.~\eqref{eq:collapse-B-matrix-form} yields
\begin{equation}
u_1u_2=u_3u_4,
\qquad
u_1u_3=u_2u_4,
\qquad
u_1u_4=u_2u_3.
\label{eq:collapse-rank-one-equations}
\end{equation}
If one coordinate vanishes, these equations force all but one coordinate to vanish.  Thus $u$ is proportional to one of the four computational-basis vectors in $\mathcal W_Z$.

Suppose instead that no coordinate vanishes.  Scale so that $u_1=1$.  Equation~\eqref{eq:collapse-rank-one-equations} becomes
\begin{equation}
u_2=u_3u_4,
\qquad
u_3=u_2u_4,
\qquad
u_4=u_2u_3.
\end{equation}
They imply $u_2^2=u_3^2=u_4^2=1$, with the signs constrained by the first equation.  Hence
\begin{equation}
u=(1,\epsilon,\eta,\epsilon\eta)^T,
\qquad \epsilon,\eta\in\{+1,-1\},
\end{equation}
which is proportional to one of the four product $X$-basis states in $\mathcal W_X$.  Since $v\propto\overline u$ and all these representatives are real, $u$ and $v$ are proportional to the same $w\in\mathcal W_Z\cup\mathcal W_X$.
\end{proof}

\subsection{Five-point range under Pauli constraints}

\begin{theorem}[Exact five-point Pauli compression range]
\label{thm:five-point-range}
A rank-two projector $P$ detects every entry of $E^{\mathrm{full}}$ if and only if
\begin{equation}
P=P_w:=\ket{w}\!\bra{w}\otimes\id_2
\label{eq:collapse-Pw}
\end{equation}
for some $w\in\mathcal W_Z\cup\mathcal W_X$.  Define
\begin{equation}
\begin{aligned}
\mathbf e_{00}&=(1,1,1),&
\mathbf e_{01}&=(1,-1,-1),\\
\mathbf e_{10}&=(-1,1,-1),&
\mathbf e_{11}&=(-1,-1,1),
\end{aligned}
\label{eq:collapse-tetrahedron-vertices}
\end{equation}
and
\begin{equation}
V_5:=\{\mathbf0_3,\mathbf e_{00},\mathbf e_{01},\mathbf e_{10},\mathbf e_{11}\}\subset\R^3.
\label{eq:collapse-V5}
\end{equation}
Then the ordinary joint higher-rank numerical range of the full tuple is
\begin{equation}
\Lambda_2(E^{\mathrm{full}})
=
\{(\mathbf0_{36},\mathbf v):\mathbf v\in V_5\}.
\label{eq:collapse-full-range}
\end{equation}
Equivalently, the monitored triple under the $36$ additional background constraints has range
\begin{equation}
\Lambda_2(E^{\mathrm{tet}}\mid\mathcal F^{\mathrm{bg}})
=V_5.
\label{eq:collapse-monitored-range}
\end{equation}
\end{theorem}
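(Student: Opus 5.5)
The plan is to turn the $36$ background detection equations into a membership statement about $\mathfrak B$, apply Lemma~\ref{lem:collapse-rank-one-classification}, and then read off the monitored coordinates. The background family $\mathcal F^{\mathrm{bg}}$ consists of all $R\otimes Q$ with $R$ in a basis of $\mathfrak B^\perp$ and $Q$ ranging over a basis of $M_2(\C)$. This makes the detection conditions equivalent to statements about the partial blocks of code vectors. Write every vector of $\C^4\otimes\C^2$ as
\begin{equation}
\psi=\psi^0\otimes\ket0+\psi^1\otimes\ket1,\qquad \psi^q\in\C^4 .
\end{equation}

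\textbf{Step 1 (the background equations force $\mathfrak B$-membership).} Let $\psi,\varphi$ be orthonormal vectors in $\operatorname{Ran}P$. Then $\bra{\psi}(R\otimes Q)\ket{\varphi}=0$ for every $R\in\mathcal R$ and every $Q$. The four Paulis $Q$ span all matrix units $\ket{q'}\!\bra{q}$. Hence for all $q,q'$ we get $\bra{\psi^{q}}R\ket{\varphi^{q'}}=\Tr\bigl(R\,\varphi^{q'}\psi^{q\dagger}\bigr)=0$ for every $R\in\mathcal R$. Since $\mathfrak B^\perp=\operatorname{span}_{\C}\mathcal R$, this says
\begin{equation}
\varphi^{q'}\psi^{q\dagger}\in\mathfrak B\qquad\text{for all } q,q'\in\{0,1\}.
\end{equation}
This holds for every orthonormal pair in the code space, so the basis may be rotated freely.

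\textbf{Step 2 (the code space has the form $\ket w\otimes\C^2$).} First I would show that neither qubit-3 component vanishes identically on $\operatorname{Ran}P$. Suppose, say, every code vector had $\psi^1=0$. Then for an orthonormal pair the vectors $\psi^0,\varphi^0$ are nonzero and orthogonal, and $\varphi^0\psi^{0\dagger}\in\mathfrak B$. Lemma~\ref{lem:collapse-rank-one-classification} would make them proportional to a common $w$, which contradicts their orthogonality.

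Next, choose a generic orthonormal basis, obtained by rotating by a generic $2\times2$ unitary. For such a basis all four blocks $\psi^0,\psi^1,\varphi^0,\varphi^1$ are nonzero, because vanishing of a block is a proper Zariski-closed condition on the rotation. Applying the lemma to each of the four pairs $(\varphi^{q'},\psi^q)$ shows that all four blocks are proportional to a single $w\in\mathcal W_Z\cup\mathcal W_X$. The lemma forces a common $w$ within each pair, and the pairs overlap. Hence $\psi,\varphi\in\operatorname{span}\{\ket w\}\otimes\C^2$. That space has dimension $2$, so $P=P_w$.

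\textbf{Step 3 (converse and the range).} Each $P_w$ has the product form $\ket w\!\bra w\otimes\id_2$. So $P_w(A\otimes Q)P_w=\bra wA\ket w\,\ket w\!\bra w\otimes Q$, which is scalar on $\operatorname{Ran}P_w$ only if $\bra wA\ket w\,Q$ is a multiple of $\id_2$. For $R\in\mathcal R$ we have $\bra wR\ket w=\Tr(R\,ww^\dagger)=0$, because $ww^\dagger\in\mathfrak B$ (it is of the matrix form in Eq.~\eqref{eq:collapse-B-matrix-form}). So every background operator compresses to $0$, and $P_w$ detects all of $\mathcal F^{\mathrm{bg}}$ for any $Q$. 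The monitored entries have $Q=I$ and compress to $\bra w Z_1\ket w$, $\bra wZ_2\ket w$, $\bra wZ_1Z_2\ket w$. For $w\in\mathcal W_Z$ these give the four vertices $\mathbf e_{ij}$. For $w\in\mathcal W_X$ all three values vanish. This yields Eq.~\eqref{eq:collapse-full-range}, with background coordinates $\mathbf 0_{36}$. Equation~\eqref{eq:collapse-monitored-range} then follows by the projection identity Eq.~\eqref{eq:background-projection}.

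\textbf{Main obstacle.} The delicate point is Step 2: handling the possibility that some blocks vanish, and making sure a single $w$ is shared across all blocks. The generic-rotation argument, combined with the exclusion of an identically vanishing component, is what I would try first.
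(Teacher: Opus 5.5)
Your proposal is correct and follows essentially the same route as the paper: you turn the off-diagonal background equations into $\mathfrak B$-membership of the third-qubit blocks of the code vectors, apply Lemma~\ref{lem:collapse-rank-one-classification} and glue the resulting $w$'s through overlapping pairs, and then verify the converse and read off the coordinates directly. The only difference is your generic-rotation step in Step 2, which is valid but unnecessary: as the paper does, it suffices to take one nonzero block from each basis vector to pin down $w$, because any vanishing block is trivially a multiple of $w$.
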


\begin{proof}
Suppose first that $P$ detects every entry of $E^{\mathrm{full}}$, and choose an orthonormal basis $\ket{0_L},\ket{1_L}$ of its range.  Define the off-diagonal logical matrix unit
\begin{equation}
L_{01}:=\ket{0_L}\!\bra{1_L}.
\label{eq:collapse-logical-matrix-unit}
\end{equation}
For every $E\in\mathcal F^{\mathrm{bg}}$, scalar compression implies
\begin{equation}
\bra{1_L}E\ket{0_L}=0,
\end{equation}
or equivalently
\begin{equation}
\Tr(E^\dagger L_{01})=0.
\end{equation}
The four single-qubit Pauli matrices span $M_2(\C)$, while $\mathcal R$ spans $\mathfrak B^\perp$.  Therefore
\begin{equation}
\operatorname{span}_{\C}(\mathcal F^{\mathrm{bg}})
=
\mathfrak B^\perp\otimes M_2(\C),
\end{equation}
and Hilbert--Schmidt orthogonality gives
\begin{equation}
L_{01}
\in
\left(\mathfrak B^\perp\otimes M_2(\C)\right)^\perp
=
\mathfrak B\otimes M_2(\C).
\label{eq:collapse-L01-in-B}
\end{equation}

Decompose the logical vectors with respect to the third qubit:
\begin{equation}
\begin{aligned}
\ket{0_L}&=u_0\otimes\ket0+u_1\otimes\ket1,\\
\ket{1_L}&=v_0\otimes\ket0+v_1\otimes\ket1,
\end{aligned}
\label{eq:collapse-logical-blocks}
\end{equation}
where $u_0,u_1,v_0,v_1\in\C^4$.  The four third-qubit blocks of $L_{01}$ are
\begin{equation}
u_av_b^\dagger,
\qquad a,b\in\{0,1\}.
\end{equation}
The inclusion in Eq.~\eqref{eq:collapse-L01-in-B} implies that every nonzero such block belongs to $\mathfrak B$.

Choose one nonzero $u_{a_0}$ and one nonzero $v_{b_0}$.  Lemma~\ref{lem:collapse-rank-one-classification} shows that both are proportional to a common $w\in\mathcal W_Z\cup\mathcal W_X$.  Pairing every other nonzero $u_a$ with $v_{b_0}$ and every other nonzero $v_b$ with $u_{a_0}$ forces the same $w$ for every block.  Consequently, there exist $\alpha,\beta\in\C^2$ such that
\begin{equation}
\ket{0_L}=w\otimes\alpha,
\qquad
\ket{1_L}=w\otimes\beta.
\label{eq:collapse-logical-factorization}
\end{equation}
Because $w$ is normalized, normalization and orthogonality of the logical states imply that $\alpha,\beta$ form an orthonormal basis of the third-qubit Hilbert space.  It follows that
\begin{equation}
P=\ket w\!\bra w\otimes\id_2.
\end{equation}

Conversely, fix $w\in\mathcal W_Z\cup\mathcal W_X$ and set $P_w=\ket w\!\bra w\otimes\id_2$.  For every $R\in\mathcal R$,
\begin{equation}
\bra wR\ket w=0.
\label{eq:collapse-R-expectation-zero}
\end{equation}
Indeed, a computational-basis product state has zero expectation on every member of $\mathcal R$ because each contains an $X$ or $Y$ factor, while an $X$-basis product state has zero expectation because each member contains a $Y$ or $Z$ factor.  Hence, for every $Q\in\{I,X,Y,Z\}$,
\begin{equation}
P_w(R\otimes Q)P_w=0,
\end{equation}
so $P_w$ detects all of $\mathcal F^{\mathrm{bg}}$.

The monitored operators act trivially on the third qubit.  If $w\in\mathcal W_X$, then
\begin{equation}
\bra wZI\ket w
=
\bra wIZ\ket w
=
\bra wZZ\ket w
=0,
\end{equation}
so the monitored vector is the origin.  If $w\in\mathcal W_Z$, the three expectations are the joint eigenvalues of $(ZI,IZ,ZZ)$ and give exactly the four vectors $\mathbf e_{00},\mathbf e_{01},\mathbf e_{10},\mathbf e_{11}$ in Eq.~\eqref{eq:collapse-tetrahedron-vertices}.  All $36$ coordinates indexed by $\mathcal F^{\mathrm{bg}}$ vanish in both cases.  Thus $P_w$ detects every entry of $E^{\mathrm{full}}$, and the projector classification and range formulas in Eqs.~\eqref{eq:collapse-full-range}--\eqref{eq:collapse-monitored-range} follow.
\end{proof}

\subsection{Radial obstruction and disconnected signature spectrum}

Theorem~\ref{thm:five-point-range} should be compared directly with the unrestricted commuting result
\begin{equation}
\Lambda_2(E^{\mathrm{tet}})
=
T:=\conv\{\mathbf e_{00},\mathbf e_{01},\mathbf e_{10},\mathbf e_{11}\}.
\label{eq:collapse-tetrahedron}
\end{equation}
Figure~\ref{fig:tetrahedron-five-point} contrasts this unrestricted tetrahedron with the five-point configuration that survives the additional background detection constraints.  To express the contrast in sector weights, write a probability vector over the four sectors as $p=(p_{00},p_{01},p_{10},p_{11})$; its affine sector map is
\begin{equation}
\lambda(p)=\sum_{i,j}p_{ij}\mathbf e_{ij}.
\label{eq:collapse-sector-map}
\end{equation}

\begin{figure*}[t]
  \centering
  \includegraphics[width=\textwidth]{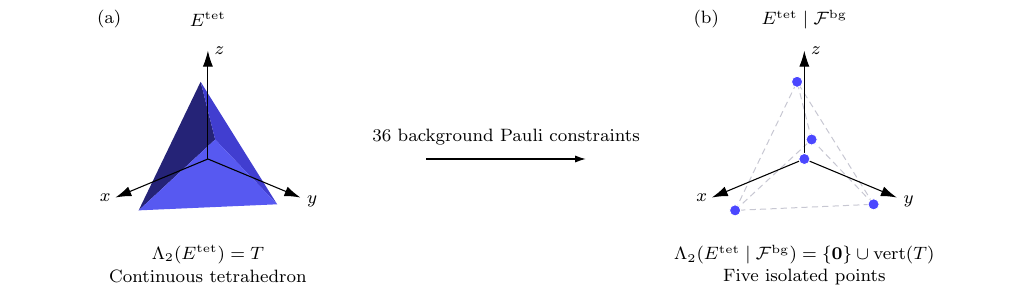}
  \caption{Effect of the background constraints on the monitored tuple $E^{\mathrm{tet}}=(Z_1,Z_2,Z_1Z_2)$. (a) Without additional background constraints, the rank-two range is the continuous tetrahedron $T$. (b) The $36$ noncommuting Pauli constraints in $\mathcal F^{\mathrm{bg}}$ reduce the monitored range to the tetrahedron barycenter and its four vertices.}
  \label{fig:tetrahedron-five-point}
\end{figure*}
The full simplex $\Delta_3$ occurs in Eq.~\eqref{eq:collapse-tetrahedron}, whereas the added constraints permit only
\begin{equation}
\left\{
\left(\frac14,\frac14,\frac14,\frac14\right),
\delta_{00},\delta_{01},\delta_{10},\delta_{11}
\right\}.
\label{eq:collapse-feasible-weights}
\end{equation}
The uniform distribution maps to the tetrahedron barycenter $0$, and the delta distributions map to its four vertices.  Thus trace-compatible sector weights alone do not determine achievability: the noncommuting background equations restrict which realizations of those weights are attainable.

This obstruction is already visible on the radial line from the barycenter to $\mathbf e_{00}$.  Let
\begin{equation}
p(t)
=(1-t)\left(\frac14,\frac14,\frac14,\frac14\right)
+t\delta_{00},
\qquad 0\le t\le1,
\label{eq:collapse-radial-weights}
\end{equation}
so $\lambda(p(t))=t(1,1,1)$.  The standard commuting realization uses
\begin{equation}
\ket{\psi_t}
=
\frac{\sqrt{1+3t}}{2}\ket{00}
+
\frac{\sqrt{1-t}}{2}
\bigl(\ket{01}+\ket{10}+\ket{11}\bigr)
\label{eq:collapse-radial-state}
\end{equation}
on the first two qubits.  One of the added constraints is $XZ\otimes X\in\mathcal F^{\mathrm{bg}}$, and its active expectation is
\begin{equation}
\bra{\psi_t}XZ\ket{\psi_t}
=
\frac12
\left[
\sqrt{(1-t)(1+3t)}-(1-t)
\right].
\label{eq:collapse-radial-obstruction}
\end{equation}
This vanishes only for $t=0$ and $t=1$.  Lemma~\ref{lem:collapse-rank-one-classification} and Theorem~\ref{thm:five-point-range} show that no alternative realization restores an interior point.

\begin{corollary}[Disconnected signature spectrum]
\label{cor:collapse-signature}
The signature spectrum of the full $39$-entry tuple, equivalently of the monitored triple under the $36$ additional constraints $\mathcal F^{\mathrm{bg}}$, is
\begin{equation}
\Sigma_2(E^{\mathrm{full}})
=
\Sigma_2(E^{\mathrm{tet}}\mid\mathcal F^{\mathrm{bg}})
=
\{0,\sqrt3\}.
\label{eq:collapse-signature}
\end{equation}
\end{corollary}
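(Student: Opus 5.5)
The corollary follows by applying the norm map directly to the ranges already determined in Theorem~\ref{thm:five-point-range}. First I would check that the two spectra in Eq.~\eqref{eq:collapse-signature} coincide. By Eq.~\eqref{eq:collapse-full-range}, every point of $\Lambda_2(E^{\mathrm{full}})$ has the form $(\mathbf0_{36},\mathbf v)$ with $\mathbf v\in V_5$. Its Euclidean norm is therefore $\|\mathbf v\|_2$, because the $36$ background coordinates all vanish. The map $(\mathbf0_{36},\mathbf v)\mapsto\mathbf v$ is a bijection onto $V_5=\Lambda_2(E^{\mathrm{tet}}\mid\mathcal F^{\mathrm{bg}})$ by Eq.~\eqref{eq:collapse-monitored-range}, so both norm images equal $\{\|\mathbf v\|_2:\mathbf v\in V_5\}$. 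Alternatively, this is Eq.~\eqref{eq:background-projection} combined with the vanishing of the background coordinates.

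Next I would evaluate the norms on the five points of $V_5$. The barycenter gives $\|\mathbf0_3\|_2=0$. Each vertex $\mathbf e_{ij}$ in Eq.~\eqref{eq:collapse-tetrahedron-vertices} has three entries equal to $\pm1$, so $\|\mathbf e_{ij}\|_2=\sqrt{1+1+1}=\sqrt3$. Hence the signature spectrum is exactly $\{0,\sqrt3\}$. Since $0<\sqrt3$ and the set is finite with two points, it is disconnected.

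Finally, I would note how this fits Proposition~\ref{prop:semialgebraic-components}. By Theorem~\ref{thm:five-point-range}, the feasible projectors are the eight isolated projectors $P_w$, so each component interval $[r_a^-,r_a^+]$ degenerates to a single point. Four of these points equal $0$ (from $w\in\mathcal W_X$) and four equal $\sqrt3$ (from $w\in\mathcal W_Z$), which produces the gap. There is no real obstacle here: all the substantive work, namely the classification of feasible projectors via Lemma~\ref{lem:collapse-rank-one-classification}, is already contained in Theorem~\ref{thm:five-point-range}. The only point requiring care is stating explicitly that the background coordinates contribute zero to the norm.
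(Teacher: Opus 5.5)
Your proposal is correct and follows the paper's (implicit) argument: the corollary is read off from Theorem~\ref{thm:five-point-range} by noting that the $36$ background coordinates vanish and that the five points of $V_5$ have norms $0$ and $\sqrt3$. Your additional remark is also accurate: the eight isolated projectors $P_w$ give degenerate component intervals in Proposition~\ref{prop:semialgebraic-components}, which is where the gap comes from.
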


Reference~\cite{GuptaSunHeZeng2026} observed interval signature spectra in the unrestricted Pauli-detection problems studied there.  Our example shows that this behavior does not extend to all prescribed finite Pauli families.  There is no conflict with Theorem~\ref{thm:block-multiplicity}: the monitored triple has one-dimensional active blocks of multiplicity two and hence fills the tetrahedron, whereas the enlarged $39$-entry tuple does not retain this identity-on-multiplicity structure.  The $36$ additional constraints instead leave only five attainable compression vectors and open a radial gap.


\section{Discussion}
\label{sec:discussion}

We began by asking what features of an operator representation determine the geometry of its Pauli compression vectors: when can active expectation data be lifted to exact rank-$K$ scalar compressions, when is the resulting range connected or convex, and what can obstruct these properties?  The results of this paper show that representation multiplicity provides a broad answer to the first two questions, while insufficient degeneracy and additional detection constraints account for several distinct departures from that regular behavior.

Theorem~\ref{thm:block-multiplicity} makes the role of multiplicity precise. An active block state can be lifted whenever $m_\alpha\ge K\rank(\rho_\alpha)$ on every occupied block.  When this condition holds uniformly for all active states, the compression range is exactly the full state-space expectation body and is therefore compact, convex, and connected.  The two main specializations illustrate this result in familiar coding settings.  For a commuting Pauli family, multiplicity is the degeneracy of the common eigenspaces.  If that degeneracy is at least the compression rank, the range is the convex hull of the joint sector values, with an explicit projector realizing every point.  Once the compression rank exceeds the available sector degeneracy, this description can fail sharply: the range may shrink, collapse to a lower-dimensional set, or become empty.  For a noncommuting subsystem gauge algebra, the protected logical subsystem instead supplies a spectator factor.  In the single-block setting, spectator multiplicity $M\ge K$ realizes every pure-state active expectation tuple, while $M\ge KD_{\mathrm{act}}$ realizes the full active state-space body.  This active/spectator picture accounts for the spheres, balls, disks, intervals, and product bodies found in the subsystem examples.  The parent-Hamiltonian construction gives the same structure a dynamical interpretation by selecting particular points or paths within these bodies.

The signature spectrum poses a related but genuinely different question. Proposition~\ref{prop:semialgebraic-components} shows that, for any finite background-constrained compression problem, the signature spectrum is a finite union of closed intervals.  Thus a connected compression range always has an interval as its radial image, but a disconnected vector range may do so as well when the radial images of its components overlap.  The five-point construction shows that such overlap is not automatic.  There, the additional noncommuting detection equations reduce an otherwise continuous tetrahedral range to its barycenter and four vertices, producing the disconnected spectrum $\{0,\sqrt{3}\}$. Consequently, interval behavior is not universal for arbitrary prescribed finite Pauli families.  At the same time, the example is deliberately structured through a special collection of background constraints, so it does not settle the corresponding question for the more systematic error families that usually arise in quantum coding theory.

Several natural problems remain.  The most immediate one is to determine whether signature spectra are always intervals for Pauli families defined by a weight cutoff, locality condition, or another coding-theoretically natural rule.  On the vector side, sharper criteria are needed below the multiplicity and common-sector degeneracy thresholds, particularly in the presence of background detection equations: when must the constrained range remain connected, star-shaped, or convex, and when can it become finite or disconnected?  On the radial side, it would be useful to identify structural conditions forcing the component intervals in Eq.~\eqref{eq:finite-union-radial-intervals} to overlap.  Finally, reducing the number, weight, or Clifford complexity of the $39$-entry counterexample may reveal the smallest Pauli configuration capable of producing a genuine radial gap.

\medskip
\noindent\textbf{Note added.} The three-qubit example with a disconnected signature spectrum in Sec.~\ref{sec:collapse} was initially identified by GPT-5.6 and subsequently incorporated by the authors into the mathematical framework and discussion developed here. AI tools were also used to assist with writing and revising the manuscript.


\appendix

\section{Rank-sensitive diagonal ranges beyond sector multiplicity}
\label{app:rank-sensitive}

The general diagonal lower and upper bounds below prove the rank-sensitive range classification for $E^Z=(Z_1,Z_2,Z_3)$ stated in Proposition~\ref{prop:three-Z-ranges} of Sec.~\ref{subsec:rank-sensitive-summary}.

\begin{proposition}[Diagonal disjoint-support lower bound and half-space upper bound]
\label{prop:diagonal-bounds}
Let \(A=(A_1,\ldots,A_m)\) be a tuple of real diagonal \(N\times N\) matrices,
\begin{align}
A_i&=\operatorname{diag}(a_i(1),\ldots,a_i(N)),\nonumber\\
\mathbf b_q&:=\bigl(a_1(q),\ldots,a_m(q)\bigr)\in\R^m .
\label{eq:diag-eigenvalue-tuples}
\end{align}
For \(1\le K\le N\), the following hold.

For the lower bound, let \(J_1,\ldots,J_K\) be nonempty pairwise disjoint
subsets of \(\{1,\ldots,N\}\).  Then
\begin{equation}
\bigcap_{a=1}^{K}\conv\{\mathbf b_q:\ q\in J_a\}
\subseteq
\Lambda_K(A).
\label{eq:diag-lower-bound}
\end{equation}

For the upper bound, let \(c\in\R\) and \(h\in\R^m\), and write
\begin{equation}
H(h,c):=\{\lambda\in\R^m:\ h\cdot\lambda\le c\}.
\end{equation}
Let \(\mathfrak H_K(A)\) be the collection of all such half-spaces that contain at least \(N-K+1\) of the indexed points \(\mathbf b_q\).  Then
\begin{equation}
\Lambda_K(A)
\subseteq
\bigcap_{H\in\mathfrak H_K(A)} H .
\label{eq:diag-upper-bound}
\end{equation}
\end{proposition}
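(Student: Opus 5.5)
For the lower bound we build the projector directly from the disjoint supports, in the spirit of the $\eta_j$ construction in the proof of Theorem~\ref{thm:block-multiplicity}. Fix a point $\lambda$ in the intersection. For each $a=1,\ldots,K$, the membership $\lambda\in\conv\{\mathbf b_q:q\in J_a\}$ gives weights $p^{(a)}_q\ge0$ with $\sum_{q\in J_a}p^{(a)}_q=1$ and $\sum_{q\in J_a}p^{(a)}_q\mathbf b_q=\lambda$. Define
\begin{equation}
\ket{\eta_a}:=\sum_{q\in J_a}\sqrt{p^{(a)}_q}\,\ket q .
\end{equation}
These vectors are unit vectors, and they are pairwise orthogonal because the supports $J_a$ are disjoint. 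Each $A_i$ is diagonal, so $A_i\ket{\eta_b}$ is again supported on $J_b$. Hence $\bra{\eta_a}A_i\ket{\eta_b}=0$ for $a\neq b$, while
\begin{equation}
\bra{\eta_a}A_i\ket{\eta_a}=\sum_{q\in J_a}p^{(a)}_q a_i(q)=\lambda_i .
\end{equation}
It follows that $P=\sum_a\ket{\eta_a}\!\bra{\eta_a}$ is a rank-$K$ projector with $PA_iP=\lambda_iP$, which proves Eq.~\eqref{eq:diag-lower-bound}.

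For the upper bound, take $\lambda\in\Lambda_K(A)$ realized by a projector $P$, and a half-space $H(h,c)\in\mathfrak H_K(A)$. Form the diagonal operator $D:=\sum_ih_iA_i-c\,\id=\operatorname{diag}(h\cdot\mathbf b_q-c)$. Scalar compression gives $PDP=(h\cdot\lambda-c)P$. Let $Q$ be the set of indices $q$ with $h\cdot\mathbf b_q\le c$; by hypothesis $|Q|\ge N-K+1$. Let $\Pi_Q$ be the coordinate projector onto $\operatorname{span}\{\ket q:q\in Q\}$. Since $\dim\operatorname{Ran}P+\dim\operatorname{Ran}\Pi_Q\ge K+N-K+1>N$, there is a unit vector $\ket\psi\in\operatorname{Ran}P\cap\operatorname{Ran}\Pi_Q$. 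Then
\begin{equation}
h\cdot\lambda-c=\bra\psi PDP\ket\psi=\bra\psi D\ket\psi=\sum_{q\in Q}|\psi_q|^2(h\cdot\mathbf b_q-c)\le0,
\end{equation}
so $\lambda\in H(h,c)$. Intersecting over all admissible half-spaces yields Eq.~\eqref{eq:diag-upper-bound}.

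Neither step is difficult. The only real idea is the dimension-counting intersection in the upper bound; this is the standard Courant--Fischer / higher-rank numerical range argument, specialized to the direction $h$. The remaining points need only small checks. When $J_a$ is a singleton the construction still works, and nonemptiness of each $J_a$ guarantees that the vectors $\ket{\eta_a}$ exist.
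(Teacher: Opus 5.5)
Your proof is correct and follows essentially the same approach as the paper. The lower bound uses the same square-root-weighted vectors on the disjoint supports $J_a$. The upper bound uses the same dimension count $K+(N-K+1)>N$ to produce a unit vector in $\operatorname{Ran}(P)\cap\operatorname{span}\{\ket q:q\in Q\}$. The only cosmetic difference is that you evaluate the single diagonal functional $D=\sum_ih_iA_i-c\,\id$ on that vector, whereas the paper first shows $\lambda\in\conv\{\mathbf b_q:q\in Q\}$ and then invokes convexity of the half-space.
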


\begin{proof}
For the lower bound, let \(\lambda=(\lambda_1,\ldots,\lambda_m)\) lie in the intersection in
Eq.~\eqref{eq:diag-lower-bound}.  For each \(a\), choose coefficients
\(t_{a,q}\ge0\), supported on \(J_a\), such that
\begin{equation}
\sum_{q\in J_a}t_{a,q}=1,
\qquad
\sum_{q\in J_a}t_{a,q}\mathbf b_q=\lambda .
\end{equation}
With respect to the standard basis \(\{e_q\}_{q=1}^N\) of \(\C^N\), define
\begin{equation}
u_a:=\sum_{q\in J_a}\sqrt{t_{a,q}}\,e_q .
\end{equation}
The supports \(J_a\) are disjoint, so \(u_1,\ldots,u_K\) are orthonormal.  Since the \(A_i\) are diagonal,
\begin{equation}
\langle u_a,A_i u_{a'}\rangle
=\delta_{aa'}\sum_{q\in J_a}t_{a,q}a_i(q)
=\delta_{aa'}\lambda_i .
\end{equation}
Thus the rank-\(K\) projector \(P\) onto \(\operatorname{span}\{u_1,\ldots,u_K\}\) satisfies
\begin{equation}
P A_i P=\lambda_iP,\qquad i=1,\ldots,m,
\end{equation}
which proves the lower inclusion.

For the upper bound, suppose \(P\in\cP_K(\C^N)\) satisfies \(PA_iP=\lambda_iP\).  Let \(H(h,c)\in\mathfrak H_K(A)\), and choose an index set \(Q\subseteq\{1,\ldots,N\}\) with \(|Q|=N-K+1\) such that \(\mathbf b_q\in H(h,c)\) for \(q\in Q\).  The coordinate subspace \(S_Q:=\operatorname{span}\{e_q:\ q\in Q\}\) has dimension \(N-K+1\), hence
\begin{equation}
\dim\bigl(\operatorname{Ran}(P)\cap S_Q\bigr)
\ge K+(N-K+1)-N=1 .
\end{equation}
Choose a unit vector \(u\) in this intersection.  Then
\begin{equation}
\lambda_i
=\langle u,A_i u\rangle
=\sum_{q\in Q}|\langle e_q,u\rangle|^2a_i(q),
\qquad i=1,\ldots,m.
\end{equation}
The numbers \(|\langle e_q,u\rangle|^2\), \(q\in Q\), are nonnegative and sum to one, so
\begin{equation}
\lambda=\sum_{q\in Q}|\langle e_q,u\rangle|^2\mathbf b_q
\in\conv\{\mathbf b_q:\ q\in Q\}.
\end{equation}
Since a half-space is convex and contains all indexed points \(\mathbf b_q\), \(q\in Q\), this convex hull is contained in \(H(h,c)\).  Hence \(\lambda\in H(h,c)\). As this holds for every \(H\in\mathfrak H_K(A)\), \(\lambda\) lies in the intersection in Eq.~\eqref{eq:diag-upper-bound}.
\end{proof}

We now apply these bounds to prove Proposition~\ref{prop:three-Z-ranges}.  For brevity write \(E:=E^Z=(Z_1,Z_2,Z_3)\).  Its joint diagonal eigenvalue points are the eight vertices
\begin{equation}
\mathcal V:=\{(\epsilon_1,\epsilon_2,\epsilon_3):\ \epsilon_i\in\{\pm1\}\}
\end{equation}
of the cube.  Since every common eigenspace is one-dimensional, the sector multiplicity is \(M=1\).  The rank-one identity in Eq.~\eqref{eq:three-Z-ranges} follows directly from Theorem~\ref{thm:polytope}; it remains to prove the higher-rank claims.

For \(\Lambda_2\), split the cube vertices into the two parity tetrahedra
\begin{align}
\mathcal V_+
&:=\{\epsilon\in\mathcal V:\epsilon_1\epsilon_2\epsilon_3=1\},\nonumber\\
\mathcal V_-
&:=\{\epsilon\in\mathcal V:\epsilon_1\epsilon_2\epsilon_3=-1\}.
\label{eq:cube-parity-tetrahedra}
\end{align}
If \(|x|+|y|+|z|\le1\), the four numbers
\begin{equation}
\begin{gathered}
\frac{1+x+y+z}{4},\quad
\frac{1+x-y-z}{4},\\
\frac{1-x+y-z}{4},\quad
\frac{1-x-y+z}{4}
\end{gathered}
\end{equation}
are nonnegative, because each numerator is bounded below by \(1-|x|-|y|-|z|\).  They sum to one and are the barycentric coefficients of \((x,y,z)\) for the ordered vertices
\begin{equation}
(1,1,1),\quad (1,-1,-1),\quad (-1,1,-1),\quad (-1,-1,1)
\end{equation}
of \(\mathcal V_+\).  The analogous four coefficients
\begin{equation}
\begin{gathered}
\frac{1-x-y-z}{4},\quad
\frac{1-x+y+z}{4},\\
\frac{1+x-y+z}{4},\quad
\frac{1+x+y-z}{4}
\end{gathered}
\end{equation}
are again nonnegative, sum to one, and are the barycentric coefficients for the ordered vertices
\begin{equation}
(-1,-1,-1),\quad (-1,1,1),\quad (1,-1,1),\quad (1,1,-1)
\end{equation}
of \(\mathcal V_-\).  Thus the octahedron is contained in both parity tetrahedra.  The lower bound in Proposition~\ref{prop:diagonal-bounds}, with \(J_1,J_2\) the disjoint computational-basis index sets carrying eigenvalue sets \(\mathcal V_+\) and \(\mathcal V_-\), therefore realizes the entire octahedron in \(\Lambda_2(E)\).

Conversely, for every cube vertex \(\epsilon\in\mathcal V\), the half-space
\begin{equation}
H_\epsilon:=\{\lambda:\epsilon\cdot\lambda\le1\}
\end{equation}
contains exactly the seven cube vertices different from \(\epsilon\).  Since \(N-K+1=7\) for \(N=8\) and \(K=2\), each \(H_\epsilon\) belongs to \(\mathfrak H_2(E)\).  The upper bound gives
\begin{equation}
\Lambda_2(E)
\subseteq
\bigcap_{\epsilon\in\mathcal V}\{\lambda:\epsilon\cdot\lambda\le1\}
=\{(x,y,z): |x|+|y|+|z|\le1\}.
\end{equation}
The last equality follows from \(\max_{\epsilon\in\mathcal V}\epsilon\cdot(x,y,z)=|x|+|y|+|z|\).  Combining the lower and upper inclusions proves the rank-two identity in Eq.~\eqref{eq:three-Z-ranges}.

For ranks \(3\) and \(4\), the upper bound with the half-spaces
\begin{equation}
H_{pq}^{\eta}:=\{\lambda:\ \eta_p\lambda_p+\eta_q\lambda_q\le0\}
\end{equation}
where \(1\le p<q\le3\) and \(\eta_p,\eta_q\in\{\pm1\}\), gives the inclusion
\begin{equation}
\Lambda_K(E)\subseteq\bigcap_{p,q,\eta} H_{pq}^{\eta} 
\end{equation}
for \(K=3,4\).  At the cube vertices, the value of \(\eta_p\epsilon_p+\eta_q\epsilon_q\) is \(2,0\), or \(-2\), and the inequality fails only when \(\epsilon_p=\eta_p\) and \(\epsilon_q=\eta_q\).  Hence each \(H_{pq}^{\eta}\) contains six vertices.  Since here \(N=8\), the required threshold \(N-K+1\) equals \(6\) for \(K=3\) and \(5\) for \(K=4\), so each \(H_{pq}^{\eta}\) belongs to \(\mathfrak H_K(E)\) in both cases.

For a fixed pair \((p,q)\), intersecting the four sign choices gives
\begin{equation}
|\lambda_p|+|\lambda_q|
=
\max_{\eta_p,\eta_q\in\{\pm1\}}(\eta_p\lambda_p+\eta_q\lambda_q)
\le0.
\end{equation}
Writing \(\lambda=(x,y,z)\), the two pairs \((1,2)\) and \((1,3)\) give explicitly
\begin{equation}
|x|+|y|\le0,\qquad |x|+|z|\le0.
\end{equation}
Thus \(x=y=z=0\), and \(\Lambda_3(E)\) and \(\Lambda_4(E)\) are both contained in \(\{0\}\). 

For the reverse inclusion, partition the cube vertices into the four antipodal pairs $\{\epsilon,-\epsilon\}$. The convex hull of each pair contains the origin.  Choosing these four pairs as disjoint index sets in the lower bound of Proposition~\ref{prop:diagonal-bounds} gives \(0\in\Lambda_4(E)\).  Since any rank-\((K+1)\) scalar compression restricts to a rank-\(K\) scalar compression on a subspace, \(\Lambda_{K+1}(E)\subseteq\Lambda_K(E)\); hence \(0\in\Lambda_3(E)\) as well.  Combining this with the upper inclusion gives
\begin{equation}
\Lambda_3(E)=\Lambda_4(E)=\{(0,0,0)\}.
\end{equation}
This proves the rank-three and rank-four identities in Eq.~\eqref{eq:three-Z-ranges}.

Finally, if a rank-\(K\) joint scalar compression existed for \(E\) with \(K\ge5\), then its first coordinate would give a rank-\(K\) scalar compression of the single operator \(Z_1\).  Write
\begin{equation}
\C^8=V_+\oplus V_-,
\qquad
Z_1|_{V_+}=+\id,\qquad
Z_1|_{V_-}=-\id .
\end{equation}
Here \(\dim V_+=\dim V_-=4\).  If
\begin{equation}
PZ_1P=\lambda P,
\qquad
W=\operatorname{Ran}P,
\end{equation}
then the Hermitian form vanishes identically on \(W\):
\begin{equation}
\langle u,(Z_1-\lambda \id)v\rangle=0,
\qquad u,v\in W .
\end{equation}
Taking \(u=v\) with \(\|u\|=1\) first gives
\begin{equation}
\lambda=\langle u,Z_1u\rangle\in[-1,1].
\end{equation}
For \(-1<\lambda<1\), write
\begin{equation}
u=u_++u_-,
\qquad
u_\pm\in V_\pm .
\end{equation}
The identity
\begin{equation}
0=\langle u,(Z_1-\lambda\id)u\rangle
=(1-\lambda)\|u_+\|^2-(1+\lambda)\|u_-\|^2
\end{equation}
shows in particular that no nonzero vector of \(W\) can have \(u_+=0\).  Hence the projection \(W\to V_+\) is injective, and
\begin{equation}
\dim W\le\dim V_+=4 .
\end{equation}
For \(\lambda=1\), the same identity gives \(u_-=0\), so \(W\subseteq V_+\); for \(\lambda=-1\), it gives \(u_+=0\), so \(W\subseteq V_-\).  Again \(\dim W\le4\).  Thus no such \(W\) can have dimension \(5\) or larger.  Therefore
\begin{equation}
\Lambda_K(E)=\emptyset\qquad(K\ge5).
\end{equation}


\section{Calculation of \texorpdfstring{$\Lambda_2(X\otimes \id_M,\;Y\otimes \id_M,\;Z\otimes \id_M)$}{Lambda2(X tensor IM, Y tensor IM, Z tensor IM)}}
\label{app:rank2-bloch}

To support the sphere-to-ball transition in Proposition~\ref{prop:412-atlas} of Sec.~\ref{subsec:atlas412}, let $X,Y,Z$ denote the Pauli matrices on $\C^2$ and $\id_M$ the $M\times M$ identity.  We compute the rank-$2$ joint numerical range of the triple
\begin{equation}
E^{(M)}:=(X\otimes \id_M,\ Y\otimes \id_M,\ Z\otimes \id_M).
\end{equation}

\begin{proposition}[Rank-$2$ Pauli sphere/ball transition]
\label{prop:rank2-pauli-ball}
For \(M=2,3\),
\begin{equation}
\begin{aligned}
\Lambda_2(E^{(M)})
&=\Lambda_1(X,Y,Z)\\
&=\bigl\{(x,y,z)\in\R^3:\ x^2+y^2+z^2=1\bigr\}.
\end{aligned}
\end{equation}
For \(M\ge4\),
\begin{equation}
\begin{aligned}
\Lambda_2(E^{(M)})
&=\conv\Lambda_1(X,Y,Z)\\
&=\bigl\{(x,y,z)\in\R^3:\ x^2+y^2+z^2\le 1\bigr\}.
\end{aligned}
\end{equation}
\end{proposition}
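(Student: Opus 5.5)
The inclusions already proved give most of the statement, so the plan is to invoke them and then add one rank-counting argument for the small-$M$ case.

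\textbf{Case $M\ge 4$.} Apply Theorem~\ref{thm:block-multiplicity} with a single active block $\mathcal H_{\mathrm{act}}=\C^2$ and multiplicity $M$. Since $K=2$ and $\dim\mathcal H_{\mathrm{act}}=2$, the hypothesis $M\ge KD_{\mathrm{act}}=4$ is exactly condition~\eqref{eq:block-full-multiplicity-condition}. The theorem then gives $\Lambda_2(E^{(M)})=\mathfrak W(B)$. Here $\mathfrak W(B)$ is the set of Bloch vectors of qubit density matrices, which is the unit ball $\conv\Lambda_1(X,Y,Z)$.

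\textbf{Case $M=2,3$.} Corollary~\ref{thm:active-spectator} applies because $M\ge K=2$, and gives the sandwich
\[
\mathbb S^2=\Lambda_1(X,Y,Z)\subseteq\Lambda_2(E^{(M)})\subseteq\mathbb B^3 .
\]
So it remains to rule out every point with $|\lambda|<1$. I would do this with a matrix reformulation rather than by eigenspace counting. A single-operator count does not suffice: $n\cdot\sigma\otimes\id_M$ has $\pm1$-eigenspaces of dimension $M\ge2$, so it allows a rank-2 scalar compression at any value in $(-1,1)$.

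Let $w_1,w_2$ be an orthonormal basis of $\operatorname{Ran}P$. View each $w_j\in\C^2\otimes\C^M$ as a $2\times M$ matrix $W_j$, so that $(\id_2\otimes\bra{e})$ acts row-wise. Then for any $2\times2$ matrix $O$,
\[
\langle w_j,(O\otimes\id_M)w_k\rangle=\Tr\!\bigl(O\,W_kW_j^\dagger\bigr).
\]
The scalar-compression equations for $X,Y,Z$, together with orthonormality (which supplies the identity component), say that every $W_kW_j^\dagger$ has prescribed Pauli components. This gives
\[
W_kW_j^\dagger=\delta_{jk}\,\rho,\qquad \rho:=\tfrac12\bigl(\id_2+\lambda_xX+\lambda_yY+\lambda_zZ\bigr).
\]

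Now stack the two blocks into the $4\times M$ matrix $W=\begin{pmatrix}W_1\\ W_2\end{pmatrix}$. Then
\[
WW^\dagger=\rho\oplus\rho,\qquad\text{so}\qquad \rank(\rho\oplus\rho)=2\rank\rho\le\rank W\le M .
\]
If $|\lambda|<1$, then $\rho$ is positive definite, so $\rank\rho=2$ and we would need $M\ge4$. For $M\le3$ this is impossible, so $|\lambda|=1$. Hence $\Lambda_2(E^{(M)})\subseteq\mathbb S^2$, which combined with the sandwich gives equality.

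The same rank count recovers the threshold: $M\ge 2\rank\rho$ reproduces condition~\eqref{eq:block-rank-condition}, which confirms that the case split at $M=4$ is sharp.

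The step needing most care is the translation of the compression equations into $W_kW_j^\dagger=\delta_{jk}\rho$. One must keep the index order straight (it is $W_kW_j^\dagger$, not $W_j^\dagger W_k$). One must also use that $\{\id,X,Y,Z\}$ spans $M_2(\C)$, so that vanishing of all four Pauli components of $W_kW_j^\dagger$ for $j\ne k$ forces $W_kW_j^\dagger=0$. After that, the rank count is immediate.
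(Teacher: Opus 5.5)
Your proof is correct. It rests on the same fact the paper uses: because $I,X,Y,Z$ span $M_2(\C)$, the two code vectors must have mutually orthogonal spectator supports, and a dimension count in $\C^M$ then finishes the argument. The packaging differs, though.

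\textbf{The paper's route.} For $M=3$ it writes the code vectors as $\ket{0}\otimes x_1+\ket{1}\otimes x_2$ and $\ket{0}\otimes y_1+\ket{1}\otimes y_2$. From the off-diagonal equations alone it deduces $\operatorname{span}\{x_1,x_2\}\perp\operatorname{span}\{y_1,y_2\}$ in $\C^3$. One of these spans must then be at most one-dimensional, so that code vector is a product state whose Bloch vector is $\lambda$. The case $M=2$ is handled by embedding into $\C^2\otimes\C^3$. For $M\ge4$ the paper purifies a qubit state into $\C^2\otimes\C^2$ and tensors with a rank-two spectator projector, which is the construction of Theorem~\ref{thm:block-multiplicity} carried out by hand.

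\textbf{Your route.} The identity $W_kW_j^\dagger=\delta_{jk}\rho$ also uses the diagonal equations, identifying each $W_jW_j^\dagger$ as the common reduced qubit state. The single Gram matrix $WW^\dagger=\rho\oplus\rho$ then records both the orthogonality of the supports and their dimensions.

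\textbf{What each buys.} Your version yields the quantitative relation $\rank W=2\rank\rho\le M$. It treats $M=2,3$ uniformly, without the embedding. It also shows that the rank condition~\eqref{eq:block-rank-condition} is necessary as well as sufficient for this triple, which explains why the transition sits exactly at $M=KD_{\mathrm{act}}=4$. The same computation with $K$ code vectors gives $M\ge K\rank\rho$ whenever the active operators together with $\id$ span the full active matrix algebra. The paper's route is slightly more economical: it never identifies the reduced state, and needs only that one of two orthogonal subspaces of $\C^3$ is degenerate. Your citation of Theorem~\ref{thm:block-multiplicity} for $M\ge4$ is legitimate, since that theorem's proof does not depend on this proposition.
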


\begin{proof}
For any $M\ge 2$ one has the elementary inclusions
\begin{equation}
\Lambda_1(X,Y,Z)\subseteq\Lambda_2(E^{(M)})\subseteq\conv\Lambda_1(X,Y,Z).
\label{eq:basic-inclusions}
\end{equation}
The left inclusion holds because any pure qubit state $\ket{u}\in\C^2$ can be lifted to an orthonormal pair in \(\C^2\otimes\C^M\), for instance \(\ket{u}\otimes\ket{e_1}\) and \(\ket{u}\otimes\ket{e_2}\), producing the same expectation values. For the right inclusion, take the normalized state \(\rho=P/2\) associated with any rank-$2$ scalar compression \(P\).  The compression value is the expectation tuple of the reduced qubit state \(\Tr_{\C^M}\rho\), hence belongs to the convex hull of the pure-state Bloch sphere.

For \(M=3\), let
\(\lambda=(\lambda_1,\lambda_2,\lambda_3)\in\Lambda_2(E^{(3)})\).
Then there exist orthonormal $\ket{x},\ket{y}\in\C^2\otimes\C^3$ such that
\begin{equation}
\begin{aligned}
\bra{x}(X\otimes \id_3)\ket{x}=\bra{y}(X\otimes \id_3)\ket{y}&=\lambda_1,\\
\bra{x}(Y\otimes \id_3)\ket{x}=\bra{y}(Y\otimes \id_3)\ket{y}&=\lambda_2,\\
\bra{x}(Z\otimes \id_3)\ket{x}=\bra{y}(Z\otimes \id_3)\ket{y}&=\lambda_3,
\end{aligned}
\end{equation}
together with the orthogonality constraints
\begin{equation}
\begin{aligned}
\braket{y}{x}&=0,
&\qquad \bra{y}(X\otimes \id_3)\ket{x}&=0,\\
\bra{y}(Y\otimes \id_3)\ket{x}&=0,
&\qquad \bra{y}(Z\otimes \id_3)\ket{x}&=0.
\end{aligned}
\end{equation}
Write
\begin{equation}
\ket{x}=\ket{0}\otimes x_1+\ket{1}\otimes x_2,
\qquad
\ket{y}=\ket{0}\otimes y_1+\ket{1}\otimes y_2,
\end{equation}
with \(x_1,x_2,y_1,y_2\in\C^3\).  The four off-diagonal equations are equivalent to
\begin{equation}
\langle y_i,x_j\rangle=0\qquad (i,j\in\{1,2\}),
\end{equation}
because \(I,X,Y,Z\) form a basis for the \(2\times2\) matrices on the qubit factor.  Thus \(\operatorname{span}\{x_1,x_2\}\) is orthogonal to \(\operatorname{span}\{y_1,y_2\}\).  Since both subspaces lie in \(\C^3\), at least one of the two pairs is linearly dependent.

If \(\{x_1,x_2\}\) is linearly dependent, then \(\ket{x}= \ket{u}\otimes\ket{v}\) for some unit vectors \(\ket{u}\in\C^2\) and \(\ket{v}\in\C^3\), and hence
\begin{equation}
\lambda
=\bigl(\bra{u}X\ket{u},\bra{u}Y\ket{u},\bra{u}Z\ket{u}\bigr)
\in\Lambda_1(X,Y,Z).
\end{equation}
The case in which \(\{y_1,y_2\}\) is dependent is identical, using \(\ket{y}\) instead of \(\ket{x}\). Combining this with Eq.~\eqref{eq:basic-inclusions} yields
\begin{equation}
\Lambda_2(E^{(3)})=\Lambda_1(X,Y,Z),
\end{equation}
which proves the sphere case for \(M=3\).  The case \(M=2\) follows by embedding \(\C^2\otimes\C^2\) into \(\C^2\otimes\C^3\): together with Eq.~\eqref{eq:basic-inclusions}, this gives
\begin{equation}
\Lambda_2(E^{(2)})=\Lambda_1(X,Y,Z).
\end{equation}

For \(M\ge4\), every qubit density matrix can be purified
in \(\C^2\otimes\C^2\), so
\begin{equation}
\conv\Lambda_1(X,Y,Z)=\Lambda_1(X\otimes \id_2,\ Y\otimes \id_2,\ Z\otimes \id_2).
\end{equation}
Lifting a pure state in \(\C^2\otimes\C^2\) to a rank-$2$ product subspace in \((\C^2\otimes\C^2)\otimes\C^2\cong\C^2\otimes\C^4\) gives
\begin{equation}
\begin{aligned}
\conv\Lambda_1(X,Y,Z)
&=\Lambda_1(X\otimes \id_2,\ Y\otimes \id_2,\ Z\otimes \id_2)\\
&\subseteq\Lambda_2(X\otimes \id_4,\ Y\otimes \id_4,\ Z\otimes \id_4)\\
&\subseteq\Lambda_2(E^{(M)}),
\end{aligned}
\end{equation}
where the last inclusion uses an isometric embedding \(\C^4\hookrightarrow\C^M\).  Together with Eq.~\eqref{eq:basic-inclusions}, this gives
\begin{equation}
\Lambda_2(X\otimes \id_M,\ Y\otimes \id_M,\ Z\otimes \id_M)
=\conv\Lambda_1(X,Y,Z)
\end{equation}
for all $M\ge 4$.  The formula
\begin{equation}
\Lambda_1(X,Y,Z)
=
\{(x,y,z)\in\R^3:x^2+y^2+z^2=1\}
\end{equation}
describes the Bloch sphere, whose convex hull is the unit ball.  Therefore, the rank-$2$ joint numerical range of $(X\otimes \id_M,Y\otimes \id_M,Z\otimes \id_M)$ is the sphere for $M=2,3$ and the ball for $M\ge 4$.
\end{proof}

\begin{proposition}[Rank-$2$ Pauli disks and intervals]
\label{prop:rank2-pauli-lower-charts}
For every $M\ge2$ and every pair of distinct Pauli axes $A,B\in\{X,Y,Z\}$,
\begin{equation}
\Lambda_2(A\otimes\id_M,B\otimes\id_M)
=
\{(a,b)\in\R^2:a^2+b^2\le1\}.
\end{equation}
For every $A\in\{X,Y,Z\}$,
\begin{equation}
\Lambda_2(A\otimes\id_M)=[-1,1].
\end{equation}
\end{proposition}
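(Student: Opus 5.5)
The plan is to sandwich the range between two sets that coincide, using the convexity of the active pure-state range together with Corollary~\ref{cor:convex-active-range}. First reduce to the case $(A,B)=(X,Z)$. Any two distinct Pauli axes are carried to $(X,Z)$, up to signs, by a single-qubit Clifford unitary $V$. Conjugating by $V\otimes\id_M$ preserves higher-rank numerical ranges, and a sign flip of a coordinate maps the disk to itself. For the one-axis statement, all three Paulis are unitarily equivalent to $Z$, so it suffices to treat $Z\otimes\id_M$.

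Next, with $M\ge2=K$, Corollary~\ref{thm:active-spectator} gives the sandwich
\begin{equation*}
\Lambda_1(X,Z)\subseteq\Lambda_2(X\otimes\id_M,Z\otimes\id_M)\subseteq\conv\Lambda_1(X,Z).
\end{equation*}
It remains to identify $\Lambda_1(X,Z)$ with the closed unit disk; since that set is convex, Corollary~\ref{cor:convex-active-range} then yields the claim.

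For this identification, a pure qubit state has Bloch vector $(x,y,z)$ with $x^2+y^2+z^2=1$. Its $(X,Z)$ expectation pair is the projection $(x,z)$, so it satisfies $x^2+z^2\le1$. Conversely, any $(a,b)$ with $a^2+b^2\le1$ is attained by choosing $y=\sqrt{1-a^2-b^2}$. Thus the projection of the Bloch sphere onto a coordinate plane is the full disk.

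The one-axis case is the same argument in one dimension. The numerical range $W(Z)=\Lambda_1(Z)$ equals $[-1,1]$, which is convex, so Corollary~\ref{cor:convex-active-range} gives $\Lambda_2(Z\otimes\id_M)=[-1,1]$.

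The only real content is the observation that the projected Bloch sphere is already convex, because the third coordinate absorbs the slack. There is no serious obstacle here. The one point needing care is the sign bookkeeping in the Clifford reduction, for instance $Y\mapsto -Y$ under some conjugations, and this is harmless because the disk and the interval are symmetric under sign flips.
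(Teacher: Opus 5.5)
Your proof is correct and is essentially the paper's argument. Both sandwich the two-axis range between the pure-state range and its convex hull: the upper bound comes from the Bloch bound on the reduced qubit state, and the lower bound rests on the fact that the Bloch sphere projects onto the full closed disk. The only difference is packaging. The paper's appendix obtains the lower inclusion by projecting the triple range of Proposition~\ref{prop:rank2-pauli-ball}, whereas you invoke Corollary~\ref{cor:convex-active-range} directly for the pair. This is the same shortcut the paper notes in its atlas discussion, and it avoids relying on the sphere/ball computation.
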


\begin{proof}
Dropping one coordinate from a scalar compression of the Pauli triple gives a scalar compression of the corresponding pair.  By Proposition~\ref{prop:rank2-pauli-ball}, the coordinate projection of the triple range is the full unit disk for every $M\ge2$, whether the triple range is the sphere or the ball.  Hence the unit disk is contained in the two-axis range.

Conversely, let $P$ be a rank-$2$ scalar compression for a pair $(A\otimes\id_M,B\otimes\id_M)$, set $\rho=P/2$, and let $\sigma=\Tr_{\C^M}\rho$ be its reduced qubit state.  The compression values satisfy
\begin{equation}
a=\Tr(\sigma A),
\qquad
b=\Tr(\sigma B).
\end{equation}
The Bloch-vector bound for $\sigma$ gives $a^2+b^2\le1$, proving the pair formula.  Projecting the disk onto either coordinate gives the full interval $[-1,1]$, while the operator-norm bound gives the reverse inclusion for a single axis.
\end{proof}


\section{A product-body computation: \texorpdfstring{$\Lambda_4(Z_1,\;X_3,\;Z_3)$}{Lambda4(Z1, X3, Z3)}}
\label{app:cylinder}

The calculation below establishes the interval$\times$disk identity in Eq.~\eqref{eq:stabilizer-gauge-cylinder} of Sec.~\ref{subsec:atlas412}, obtained by retaining one commuting stabilizer coordinate together with two noncommuting Pauli coordinates on one gauge qubit.

\begin{proposition}[Interval$\times$disk cylinder]
\label{prop:cylinder}
On four qubits one has
\begin{equation}
\Lambda_4(Z_1,X_3,Z_3)
=\bigl\{(x,y,z)\in\R^3:\ |x|\le 1, y^2+z^2\le 1\bigr\}.
\end{equation}
\end{proposition}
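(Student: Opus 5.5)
The natural route is to cast the triple in the block-multiplicity form of Theorem~\ref{thm:block-multiplicity} and then compute the resulting expectation body. On four qubits, $Z_1$, $X_3$ and $Z_3$ act trivially on qubits $2$ and $4$. Split $\mathcal H_4$ according to the eigenvalue $\pm1$ of $Z_1$. This gives two blocks $\alpha=\pm$, each with active factor $\mathcal H_\alpha\cong\C^2$ (qubit $3$) and multiplicity space $\C^{m_\alpha}$ with $m_\alpha=4$ (qubits $2$ and $4$). In block $\alpha$ the active operators are
\begin{equation}
B_{1,\alpha}=\alpha\,\id_2,\qquad B_{2,\alpha}=X,\qquad B_{3,\alpha}=Z.
\end{equation}
Since $K=4$ and $\dim\mathcal H_\alpha=2$, the uniform condition $m_\alpha\ge K\dim\mathcal H_\alpha$ of Eq.~\eqref{eq:block-full-multiplicity-condition} asks for $4\ge 8$, which fails. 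So the equality clause cannot be applied directly, and the proof has to separate the upper bound from the lower bound.

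\emph{Upper bound.} Eq.~\eqref{eq:block-multiplicity-upper} holds without any multiplicity hypothesis, so $\Lambda_4\subseteq\mathfrak W(B)$. Any point of $\mathfrak W(B)$ has the form
\begin{equation}
\bigl(p_+-p_-,\ \textstyle\sum_\alpha p_\alpha\Tr(\rho_\alpha X),\ \sum_\alpha p_\alpha\Tr(\rho_\alpha Z)\bigr).
\end{equation}
The first coordinate lies in $[-1,1]$. The last two coordinates are a convex combination of the planar Bloch projections of $\rho_+$ and $\rho_-$, each of which lies in the unit disk, so together they lie in the disk. Hence $\Lambda_4(Z_1,X_3,Z_3)$ is contained in the cylinder.

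\emph{Lower bound.} Here I use the state-dependent clause, Eq.~\eqref{eq:block-rank-condition}: it only needs $m_\alpha\ge K\rank\rho_\alpha$, i.e.\ $4\ge 4\rank\rho_\alpha$, so it suffices to realize every point of the cylinder with pure active states $\rho_\pm$. Fix $(x,y,z)$ with $|x|\le1$ and $y^2+z^2\le1$, and set $p_\pm=(1\pm x)/2$. Choose one pure qubit state $\ket{\phi}$ on qubit $3$ whose Bloch vector is $(y,\,\pm\sqrt{1-y^2-z^2},\,z)$; this has unit length and the required $X$- and $Z$-expectations. Take $\rho_+=\rho_-=\ket{\phi}\!\bra{\phi}$, which has rank one. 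Then $m_\alpha=4=K\cdot1$ for each occupied block, so Theorem~\ref{thm:block-multiplicity} places $(x,y,z)$ in $\Lambda_4$.

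The \emph{key step} is this use of the free $Y$-direction: the filled disk in the $(X,Z)$ coordinates is already attained by pure states, exactly as in Proposition~\ref{prop:rank2-pauli-lower-charts}. That is why the weaker rank condition of the theorem is enough even though full convexification is not available. For a sanity check I would also write the projector explicitly. Let $\ket{e_1},\ldots,\ket{e_4}$ be an orthonormal basis of qubits $2,4$. Take
\begin{equation}
\ket{\eta_j}=\bigl(\sqrt{p_+}\ket0+\sqrt{p_-}\ket1\bigr)_1\otimes\ket{\phi}_3\otimes\ket{e_j},
\end{equation}
which is the product-form version of Eq.~\eqref{eq:block-eta-construction}. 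Orthonormality and vanishing off-diagonal elements then follow immediately from the orthogonality of the $\ket{e_j}$.
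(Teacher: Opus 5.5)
Your proof is correct, and the explicit projector you write down is exactly the one the paper uses, but you reach it through a different decomposition. The paper's argument runs as follows:
\begin{itemize}
\item it swaps qubits $2$ and $3$ and treats qubits $1,2$ as a single four-dimensional active block, with qubits $3,4$ as a spectator of dimension $M=4=K$;
\item it computes $\Lambda_1(Z\otimes\id_2,\id_2\otimes X,\id_2\otimes Z)$ by hand, where product states already give the whole cylinder;
\item it closes the sandwich $\Lambda_1\subseteq\Lambda_4\subseteq\conv\Lambda_1$ by noting that this rank-one range is convex.
\end{itemize}
This is the mechanism of Corollary~\ref{cor:convex-active-range}, operating in the intermediate regime $K\le M<KD_{\mathrm{act}}=16$.

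You instead use $Z_1$ as a central label. Its eigenspaces become two blocks with one-qubit active factors and four-dimensional multiplicity, after a harmless reordering of tensor factors. The $Z_1$ coordinate is then the classical block weight $p_+-p_-$, and the upper bound follows from $\Lambda_4\subseteq\mathfrak W(B)$ with no separate convexity check. The lower bound uses the state-dependent clause $m_\alpha\ge K\rank(\rho_\alpha)$ with pure $\rho_\alpha$. This works because the unmonitored $Y$ component lifts every disk point to the Bloch sphere.

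Your route shows directly that Theorem~\ref{thm:block-multiplicity} can give $\Lambda_K(A)=\mathfrak W(B)$ even when the uniform condition in Eq.~\eqref{eq:block-full-multiplicity-condition} fails ($4<8$). It also avoids computing a two-qubit joint numerical range. The paper's route stays inside the single-block active/spectator picture used throughout the subsystem atlas, and needs only the convexity of one rank-one range.
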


\begin{proof}
First note that swapping tensor factors $2$ and $3$ is implemented by a unitary $W$ with
\begin{equation}
W^\dagger Z_1 W=Z_1,
\qquad
W^\dagger X_3 W=X_2,
\qquad
W^\dagger Z_3 W=Z_2.
\end{equation}
By unitary invariance of joint higher-rank numerical ranges,
\begin{equation}
\Lambda_4(Z_1,X_3,Z_3)=\Lambda_4(Z_1,X_2,Z_2).
\label{eq:cyl-swap}
\end{equation}
Grouping qubits as $(1,2)$ versus $(3,4)$ gives
\begin{equation}
\begin{aligned}
Z_1&=Z\otimes \id_2\otimes \id_4,\\
X_2&=\id_2\otimes X\otimes \id_4,\\
Z_2&=\id_2\otimes Z\otimes \id_4,
\end{aligned}
\end{equation}
where $\id_4$ acts on the spectator pair of qubits $(3,4)$.

Define the two-qubit triple
\begin{equation}
(B_1,B_2,B_3)
:=
\bigl(Z\otimes \id_2,\ \id_2\otimes X,\ \id_2\otimes Z\bigr)
\quad \text{on } \C^2\otimes \C^2.
\end{equation}
We first compute its rank-$1$ joint numerical range:
\begin{equation}
\Lambda_1(B_1,B_2,B_3)
=
\bigl\{(x,y,z)\in\R^3:\ |x|\le 1,\ \ y^2+z^2\le 1\bigr\}.
\label{eq:cyl-lambda1}
\end{equation}
Indeed, for any unit vector $\ket{\psi}\in\C^2\otimes\C^2$, the first coordinate $x=\bra{\psi}(Z\otimes \id_2)\ket{\psi}$ lies in $[-1,1]$.  The remaining pair satisfies
\begin{equation}
(y,z)\in \Lambda_1(\id_2\otimes X,\ \id_2\otimes Z)=\Lambda_1(X,Z),
\end{equation}
and
\begin{equation}
\Lambda_1(X,Z)=\{(y,z)\in\R^2:\ y^2+z^2\le 1\}.
\end{equation}
This proves the inclusion ``$\subseteq$'' in Eq.~\eqref{eq:cyl-lambda1}.

Conversely, fix any $(x,y,z)$ with $|x|\le 1$ and $y^2+z^2\le 1$.  Choose a single-qubit pure state $\ket{\phi}\in\C^2$ such that
\begin{equation}
(\bra{\phi}X\ket{\phi},\bra{\phi}Z\ket{\phi})=(y,z).
\end{equation}
Such a state exists because $\Lambda_1(X,Z)$ is exactly the unit disk.  Let $a=(1+x)/2$ and $b=(1-x)/2$ and define a product two-qubit state
\begin{equation}
\ket{\psi}=\bigl(\sqrt{a}\,\ket{0}+\sqrt{b}\,\ket{1}\bigr)\otimes \ket{\phi}.
\end{equation}
A direct calculation shows
\begin{equation}
\begin{aligned}
\bra{\psi}(Z\otimes \id_2)\ket{\psi}&=a-b=x,\\
\bra{\psi}(\id_2\otimes X)\ket{\psi}&=\bra{\phi}X\ket{\phi}=y,\\
\bra{\psi}(\id_2\otimes Z)\ket{\psi}&=\bra{\phi}Z\ket{\phi}=z.
\end{aligned}
\end{equation}
Hence $(x,y,z)\in\Lambda_1(B_1,B_2,B_3)$, which proves the inclusion ``$\supseteq$'' in Eq.~\eqref{eq:cyl-lambda1}.

Finally, relate Eq.~\eqref{eq:cyl-lambda1} to the rank-$4$ numerical range on four qubits.  For any $\ket{\psi}\in\C^2\otimes\C^2$, the rank-$4$ projector
\begin{equation}
P=\ket{\psi}\!\bra{\psi}\otimes \id_4
\end{equation}
satisfies
\begin{equation}
P(B_j\otimes \id_4)P=\lambda_j P,\qquad
\lambda_j=\bra{\psi}B_j\ket{\psi}.
\end{equation}
Therefore,
\begin{equation}
\Lambda_1(B_1,B_2,B_3)
\subseteq
\Lambda_4(B_1\otimes \id_4,\ B_2\otimes \id_4,\ B_3\otimes \id_4).
\end{equation}
Conversely, if \(P\) is any rank-$4$ projector with \(P(B_j\otimes \id_4)P=\lambda_jP\), then the normalized state \(\rho=P/4\) yields
\begin{equation}
\lambda_j=\Tr(\rho(B_j\otimes \id_4))
=\Tr(\rho_{12}B_j),
\end{equation}
where $\rho_{12}:=\Tr_{34}\rho$ is a two-qubit density matrix. Thus $(\lambda_1,\lambda_2,\lambda_3)$ lies in the set of expectation tuples over mixed two-qubit states, which is the convex hull of $\Lambda_1(B_1,B_2,B_3)$. Since the cylinder in Eq.~\eqref{eq:cyl-lambda1} is convex, this convex hull equals $\Lambda_1(B_1,B_2,B_3)$ itself. Combining these inclusions, and using Eq.~\eqref{eq:cyl-swap}, establishes the claim.
\end{proof}


\section{Additional parent-Hamiltonian realizations}
\label{app:parent-hamiltonian-examples}

The ground-space mechanism of Sec.~\ref{sec:hamiltonians} admits three further realizations: a tensor-product extension, a cluster realization, and a Clifford-covariant mixed-Pauli construction.  Each retains the same separation between active dynamics and spectator multiplicity.

\subsection{Two active qubits from disjoint pairs, \texorpdfstring{$n=4$}{n=4}}
\label{ex:n4-r2}
Consider
\begin{equation}
H(\theta)=-(1-\theta)(Z_1Z_2+Z_3Z_4)-\theta(X_1+X_4),
\end{equation}
with spectator symmetries generated by \(Z_2\) and \(Z_3\).  In the sector \((Z_2,Z_3)=(\zeta_2,\zeta_3)\in\{\pm1\}^2\), the active Hamiltonian on qubits \(1\) and \(4\) is
\begin{equation}
H_{\zeta_2,\zeta_3}(\theta)
=-(1-\theta)(\zeta_2 Z_1+\zeta_3 Z_4)-\theta(X_1+X_4).
\end{equation}
This is a sum of two independent copies of the minimal one-qubit active interpolation described in Sec.~\ref{sec:hamiltonians}.  Each sector has the same unique active ground energy \(-2\varepsilon(\theta)\), and the four sign choices are related by conjugation with \(X_1\) and/or \(X_4\).  Hence the full ground projector has rank \(4\).  Moreover,
\begin{equation}
\begin{aligned}
P_\theta(Z_1Z_2)P_\theta&=\frac{1-\theta}{\varepsilon(\theta)}\,P_\theta,&
P_\theta(Z_3Z_4)P_\theta&=\frac{1-\theta}{\varepsilon(\theta)}\,P_\theta,\\
P_\theta X_1P_\theta&=\frac{\theta}{\varepsilon(\theta)}\,P_\theta,&
P_\theta X_4P_\theta&=\frac{\theta}{\varepsilon(\theta)}\,P_\theta.
\end{aligned}
\end{equation}
Thus the ground path lies in
\begin{equation}
\Lambda_4(Z_1Z_2,\ Z_3Z_4,\ X_1,\ X_4)
\end{equation}
as the diagonal curve
\begin{equation}
\left(
\frac{1-\theta}{\varepsilon(\theta)},\frac{1-\theta}{\varepsilon(\theta)},
\frac{\theta}{\varepsilon(\theta)},\frac{\theta}{\varepsilon(\theta)}
\right).
\end{equation}
Pinning either \(Z_2\) or \(Z_3\) halves the ground-space rank; pinning both leaves a single active product ground state.

\subsection{Cluster mid-bond as a two-gauge-qubit model, \texorpdfstring{$n=4$}{n=4}}
\label{ex:cluster}
Define $K_2=Z_1X_2Z_3$ and $K_3=Z_2X_3Z_4$, and consider
\begin{equation}
H_{\mathrm{cl}}(\theta)=-(1-\theta)(K_2+K_3)-\theta(X_2+X_3).
\end{equation}
The endpoint operators \(Z_1\) and \(Z_4\) commute with every term, and hence label spectator sectors.  In a fixed sector \((Z_1,Z_4)=(\zeta_1,\zeta_4)\), the active Hamiltonian on qubits \(2,3\) is
\begin{equation}
H_{\mathrm{cl}}(\theta)\big|_{(\zeta_1,\zeta_4)}
=-(1-\theta)(\zeta_1 X_2Z_3+\zeta_4 Z_2X_3)-\theta(X_2+X_3).
\end{equation}
The four active Pauli terms have the same symplectic commutation graph as \((Z_2,Z_3,X_3,X_2)\).  Indeed, among
\begin{equation}
\zeta_1 X_2Z_3,\quad \zeta_4 Z_2X_3,\quad X_2,\quad X_3,
\end{equation}
the only anticommuting pairs are \((\zeta_1 X_2Z_3,X_3)\) and \((\zeta_4 Z_2X_3,X_2)\), matching the anticommuting pairs \((Z_2,X_2)\) and \((Z_3,X_3)\) after the ordered identification below. Hence a two-qubit Clifford on the active block, followed if necessary by sector-dependent Pauli sign corrections, sends
\begin{equation}
(\zeta_1 X_2Z_3,\ \zeta_4 Z_2X_3,\ X_2,\ X_3)\mapsto (Z_2,\ Z_3,\ X_3,\ X_2).
\end{equation}
Thus each spectator sector is Clifford-equivalent to
\begin{equation}
\widetilde H_{\mathrm{cl}}(\theta)
=-(1-\theta)(Z_2+Z_3)-\theta(X_2+X_3),
\end{equation}
up to the harmless interchange of the two transverse-field labels.  This is again a pair of independent one-qubit interpolations, so every spectator sector has the same unique active ground energy \(-2\varepsilon(\theta)\).  The full ground projector has rank \(4\), and the original monitored tuple satisfies
\begin{equation}
\begin{aligned}
P_\theta K_2P_\theta&=\frac{1-\theta}{\varepsilon(\theta)}P_\theta,&
P_\theta K_3P_\theta&=\frac{1-\theta}{\varepsilon(\theta)}P_\theta,\\
P_\theta X_2P_\theta&=\frac{\theta}{\varepsilon(\theta)}P_\theta,&
P_\theta X_3P_\theta&=\frac{\theta}{\varepsilon(\theta)}P_\theta.
\end{aligned}
\end{equation}
Thus the cluster mid-bond interpolation realizes the same diagonal quarter-circle as the disjoint-pair model, but with the two active qubits related by a nontrivial Clifford gauge choice in each spectator sector.

\subsection{Mixed-Pauli commuting families on \texorpdfstring{$n=5$}{n=5}}
\label{ex:5q-mixed}
The examples above use \(Z\)-type spectator symmetries in the displayed coordinates, but the same mechanism does not depend on the Pauli types.  Let \(G=\langle g_1,\ldots,g_4\rangle\) and \(F=\langle f_1,\ldots,f_4\rangle\) be two abelian Pauli families on five qubits, and define their phase-free label spaces by
\begin{align}
\overline G&:=\operatorname{span}_{\mathbb F_2}
\{\pi(g_1),\ldots,\pi(g_4)\},\nonumber\\
\overline F&:=\operatorname{span}_{\mathbb F_2}
\{\pi(f_1),\ldots,\pi(f_4)\}.
\end{align}
Assume that \(\overline G\) and \(\overline F\) are four-dimensional and that the pairing
\(\omega|_{\overline G\times\overline F}\) is nondegenerate.  Since the two families are abelian, both label spaces are isotropic.  We may therefore choose dual bases
\(\bar g'_1,\ldots,\bar g'_4\) of \(\overline G\) and
\(\bar f'_1,\ldots,\bar f'_4\) of \(\overline F\) such that
\begin{equation}
\omega(\bar g'_i,\bar f'_j)=\delta_{ij},\qquad
\omega(\bar g'_i,\bar g'_j)=\omega(\bar f'_i,\bar f'_j)=0.
\label{eq:five-qubit-dual-label-bases}
\end{equation}
Choose Hermitian Pauli lifts \(g'_i\) and \(f'_i\) of these labels, and set
\begin{equation}
W:=\operatorname{span}_{\mathbb F_2}
\{\bar g'_1,\bar f'_1,\ldots,\bar g'_4,\bar f'_4\}.
\end{equation}
The Gram matrix in Eq.~\eqref{eq:five-qubit-dual-label-bases} is \(J_4\), so \(W\) is an eight-dimensional nondegenerate symplectic subspace of \(\overline{\mathcal P}_5\).  Consequently,
\begin{equation}
\overline{\mathcal P}_5=W\oplus W^{\perp_\omega},
\qquad
\dim_{\mathbb F_2}W^{\perp_\omega}=2,
\end{equation}
and \(W^{\perp_\omega}\) is a nondegenerate symplectic subspace representing one spectator qubit.  Choosing a symplectic pair in \(W^{\perp_\omega}\) completes the displayed dual bases to a symplectic basis of the full five-qubit label space.  Hence, by the symplectic normal-form argument used in Lemma~\ref{lem:canonical}, there is a Clifford \(U\), after harmless Pauli sign corrections, such that
\begin{equation}
\begin{aligned}
Ug'_iU^\dagger&=Z_i\otimes\id_{\mathrm{spect}},\\
Uf'_iU^\dagger&=X_i\otimes\id_{\mathrm{spect}},
\qquad i=1,\ldots,4.
\end{aligned}
\end{equation}
For the interpolating Hamiltonian
\begin{equation}
H(\theta)=-(1-\theta)\sum_{i=1}^4 g'_i-\theta\sum_{i=1}^4 f'_i,
\end{equation}
one obtains
\begin{equation}
U H(\theta) U^\dagger=\Bigl[-(1-\theta)\sum_{i=1}^4 Z_i-\theta\sum_{i=1}^4 X_i\Bigr]\otimes\id_{\mathrm{spect}},
\end{equation}
so the active ground state is a product of four identical one-qubit ground states and is unique for every \(\theta\).  Hence the ground projector has rank \(2\).  With \(\varepsilon(\theta)=\sqrt{(1-\theta)^2+\theta^2}\),
\begin{equation}
\begin{aligned}
P_\theta g'_iP_\theta&=\frac{1-\theta}{\varepsilon(\theta)}P_\theta,\\
P_\theta f'_iP_\theta&=\frac{\theta}{\varepsilon(\theta)}P_\theta, \qquad i=1,\ldots,4.
\end{aligned}
\end{equation}
This example is the Clifford-covariant version of the disjoint-pair construction: the visible \(Z/X\) axes are only a normal form for a mixed-Pauli pair of transverse commuting families whose phase-free cross-pairing is nondegenerate.


\bibliography{references}


\end{document}